\providecommand{\tabularnewline}{\\}
  \theoremstyle{plain}
  \newtheorem{prop}{\protect\propositionname}
 \theoremstyle{plain}
  \newtheorem*{claim*}{\protect\claimname}
\theoremstyle{definition}
\newtheorem{example}{Example}[section]
  \theoremstyle{plain}
  \newtheorem{cor}{\protect\corollaryname}
  \theoremstyle{remark}
  \newtheorem{rem}{\protect\remarkname}
  \theoremstyle{definition}
  \newtheorem{defn}{\protect\definitionname}
\theoremstyle{plain}
\newtheorem{thm}{\protect\theoremname}
  \theoremstyle{plain}
\newtheorem*{lem*}{\protect\lemmaname}
\newtheorem{lem}{\protect\lemmaname}
\newenvironment{customthm}[1]
  {\innercustomthm}
  {\endinnercustomthm}
\newenvironment{customprop}[1]
  {\innercustomprop}
  {\endinnercustomprop}
\definecolor{cite-blue}{RGB}{0,0,204}
\date{}
  \providecommand{\definitionname}{Definition}
  \providecommand{\lemmaname}{Lemma}
  \providecommand{\propositionname}{Proposition}
  \providecommand{\remarkname}{Remark}
\providecommand{\corollaryname}{Corollary}
\providecommand{\theoremname}{Theorem}
\providecommand{\claimname}{Claim}
\newcommand{\stkout}[1]{\ifmmode\text{\sout{\ensuremath{#1}}}\else\sout{#1}\fi}
\begin{document}

\title{Pick-an-object Mechanisms}

\author{Inácio Bó}
\thanks{\textbf{Bó}: Southwestern University of Finance and Economics, China Center for Behavioral Economics and Finance, Chengdu, China; website: \protect\url{http://www.inaciobo.com}; e-mail:
mail@inaciobo.com}

\author{Rustamdjan Hakimov}
\thanks{\textbf{Hakimov}: Department of Economics, University of Lausanne, Switzerland and WZB Berlin Social Center, Germany.
e-mail: rustamdjan.hakimov@unil.ch}

\date{February, 2023}
\thanks{We thank Samson Alva, Lars Ehlers, Dorothea K\"ubler, Andrew Mackenzie, Alexander Nesterov, Marek Pycia, Madhav Raghavan, Peter Troyan, and Andrew Schotter for helpful comments. We thank Jennifer Rontganger for copy editing and Nina Bonge for support in running the experiments. Hakimov acknowledges financial support from the Swiss National Science Foundation (project \#100018\_189152).}

\maketitle

\begin{abstract}
We introduce a new family of mechanisms for one-sided matching markets, denoted pick-an-object (PAO) mechanisms. When implementing an allocation rule via PAO, agents are asked to pick an object from individualized menus. These choices may be rejected later on, and these agents are presented with new menus. When the procedure ends, agents are assigned the last object they picked. We characterize the allocation rules that can be sequentialized by PAO mechanisms, as well as the ones that can be implemented in a robust truthful equilibrium. We justify the use of PAO as opposed to direct mechanisms by showing that its equilibrium behavior is closely related to the one in obviously strategy-proof (OSP) mechanisms, but implements commonly used rules, such as Gale-Shapley DA and top trading cycles, which are not OSP-implementable. We run laboratory experiments comparing truthful behavior when using PAO, OSP, and direct mechanisms to implement different rules. These indicate that agents are more likely to behave in line with the theoretical prediction under PAO and OSP implementations than their direct counterparts.
\end{abstract}

\emph{JEL classification}: C78, C73, D78, D82.

\emph{\hspace{-0.6cm}Keywords}: Market Design, Matching, Sequential Mechanisms, Experiments, obvious strategy-proofness.

\section{Introduction\label{sec:Introduction}}

The literature of market design, and its applications, has grown and evolved greatly over recent years. Even if we restrict our attention to the design of centralized matching markets, the instances in which theoretical and empirical contributions have influenced the way resources are allocated seem to be continuously expanding. Examples include the design of school choice procedures \citep{Abdulkadiroglu2003-dx}, centralized college admissions \citep{Balinski1999-uj},  matching resident doctors to hospitals \citep{Roth1999-id}, organs to patients \citep{Roth2004-pw}, refugees to localities \citep{Jones2017-ae}, appointment slots for services \citep{hakimov2019avoid}, and more. By carefully choosing how to determine these allocations as a function of information, such as participants' preferences, priority structures, and fairness concerns, these procedures can lead to allocations that satisfy certain desirable properties.

One crucial challenge faced by the designer of these mechanisms is that the information needed to determine the desired allocation is known by the participants but not by the designer. These are, often, their preferences over outcomes, but may also include other relevant information, such as their socioeconomic status.\footnote{See \cite{Aygun2019-ko}.} In real-life applications, this issue is usually solved by combining two tools: dominant strategy implementation and the revelation principle. These guarantees that if the designer wants the strategic simplicity provided by dominant strategy implementation, it suffices to consider revelation mechanisms, in which participants are simply asked to report their private information, safe in the knowledge that they can be truthful. In fact, the vast majority of the literature focuses solely on direct revelation, and often strategy-proof, mechanisms.

However, recent experimental and field evidence has raised concerns about market participants' ability to understand these mechanisms' incentive properties. Many participants use dominated strategies, thus distorting the market allocations. This phenomenon was documented in many laboratory experiments\footnote{See \citet{hakimov2021experiments} for an extensive survey of the experimental matching literature.} and in papers using field data.\footnote{See our related literature section.} These present a new challenge: is there an alternative to the implementation of allocation rules via strategy-proof direct mechanisms that would result in behavior that is more often in line with the theoretical predictions?

One recent and celebrated attempt to formalize an alternative to strategy-proofness that accounts for the extent to which participants can easily understand the incentives induced by mechanisms was the concept of obvious strategy-proofness (OSP). ``A strategy is obviously dominant if, for any deviation, at any information set where both strategies first diverge, the best outcome under the deviation is no better than the worst outcome under the obvious dominant strategy, and a mechanism is obviously strategy-proof (OSP) if it has an equilibrium in obviously dominant strategies'' \citep{Li2017-oa}. OSP is, therefore, a refinement of the notion of strategy-proofness, in that obvious dominance implies weak dominance. This concept could help explain why, for example, laboratory experiments indicate that agents are more likely to bid truthfully in a clock auction than in a sealed-bid second-price auction \citep{Kagel1987-ii}. While both implement the same rule in truthful dominant strategies, the former is also obviously dominant as opposed to the latter.\footnote{The author also presents the results of laboratory experiments comparing behavior and outcomes under strategy-proof and OSP implementations of the random serial dictatorship rule, obtaining similar results. Note, however, that \citet{Breitmoser2019-ur} raise questions as to whether the difference can be attributed to obvious strategy-proofness.}

One important shortcoming of OSP, especially for practical purposes, is that it is a very restrictive concept. Rules that are commonly considered for object allocation problems, such as top trading cycles, and stable rules, cannot be implemented via OSP mechanisms \citep{Li2017-oa,Ashlagi2018-zw}. This, therefore, leaves a large set of problems without this kind of behavioral guidance. Another concern is raised in a recent paper by  \citet{Pycia2019-vl}, who characterize all OSP mechanisms in the domain of object allocation and show that for some OSP strategies participants must have extensive foresight to correctly predict feasible actions and outcomes, which can be congnitively demanding. To address this concern, the authors also define an even stronger concept, strong OSP, that does not require foresight from participants but leaves the designer with an essentially unique mechanism---sequential serial dictatorship (SD). In sequential SD, participants face a menu of objects in order of priority and simply pick their allocation. The strategy of choosing the most-preferred object on the menu is \textit{strongly obviously dominant}.  

Recent laboratory experiments indicate that the dynamic implementation of the deferred-acceptance rule (DA), in which the equilibrium behavior also consists of choosing the most-preferred object from menus, leads to higher rates of truthful behavior than its standard direct revelation counterpart \citep{Bo2020-jm,Klijn2019-bq}. The results are especially surprising given that truthful behavior is an equilibrium involving non-dominant strategies in dynamic DA, while direct DA is strategy-proof.

If one considers that the main driver behind the behavior that is more in line with the theory in OSP mechanisms is the fact that strategies are obviously dominant, then the forces behind the experimental results in \cite{Li2017-oa}, \cite{Bo2020-jm}, and \cite{Klijn2019-bq} would be unrelated, because in the latter the equilibrium strategy is not even dominant. In this paper, we conjecture that the main driver behind the observed behavior more in line with the theoretical predictions is the simple mechanics of the equilibrium strategy, in which agents ``pick'' the object they would like to have from a menu, as opposed to submitting a ranking of objects representing their preferences. This would provide a unified and alternative explanation to these experimental results. On this basis, we introduce a class of sequential revelation mechanisms that implement various object allocation rules via an equilibrium behavior with closely related mechanics. We denote them \textit{pick-an-object mechanisms} (PAO).

In a PAO mechanism, agents are asked to pick an object from individualized menus. These choices may be rejected later on, and agents are then presented with new menus containing strict subsets of the previous menus from which the previous choices have been redacted. When the procedure ends, agents are assigned the last object they picked, if any. A PAO mechanism ``sequentializes'' an allocation rule if it always results in the unique allocation consistent with preference profiles that could rationalize the choices made by the agents. Therefore, if agents simply choose their most-preferred object when given a menu, then the object they hold once the procedure ends is the one that the allocation rule determines given their true preferences. Notice, therefore, that truthful equilibrium behaviors in OSP and PAO mechanisms are closely related. While in the former, it can be expressed as ``Wait until you can pick your best feasible object,'' in the latter, it is ``Pick your best feasible object and wait to see if you can keep it.''

The simple mechanics involved in a PAO mechanism induces a trade-off: because information about an agent's preferences can only be obtained through choices from menus, obtaining more information on her preferences requires ruling out her last choice.  This, in turn, restricts the set of allocation rules that can be ``sequentialized'' via PAO mechanisms.\footnote{We provide a simple motivating example in section \ref{subsec:monotonicDisc}.} We characterize such rules (Theorem \ref{thm:PickAnObjectIFFTerminalAv}), via a new property that we denote \textit{monotonic discoverability}. Many familiar rules, such as the Gale-Shapley DA and top trading cycles, satisfy it (Proposition \ref{prop:generalizedDAMonDisc}). We characterize the rules that can be implemented in truthful strategies via a robust equilibrium (perfect ex-post equilibrium) as being those that are strategy-proof and satisfy monotonic discoverability (Theorem \ref{thm:PAOTruthfullIFSP+MD}). For the case of the Gale-Shapley DA rule or under an additional condition on the rules, we show that the truthful strategy might also characterize the unique equilibrium strategy profile (propositions \ref{prop:PAOUniqueRM} and \ref{prop:DAUnique}). These results come from the fact that every choice might be the last one---and therefore there is only one ``correct'' choice. Finally, we show that every non-bossy OSP-implementable rule is implementable in weakly dominant strategies via PAO mechanisms (Proposition \ref{prop:OSPImplementableViaPAO}). 

Our justification for the use of PAO mechanisms is non-standard, in the sense that the game-theoretical incentive properties of the PAO mechanisms are not stronger than the alternatives, that is, strategy-proof direct mechanisms and OSP mechanisms, but we test our conjecture via laboratory experiments. 

We test two allocation rules, the top trading cycles (TTC) and serial dictatorship (SD), and we construct three treatments for each: direct revelation implementation, PAO implementation, and OSP implementation. That is, for each one of these rules, we ran each of the three different mechanisms for implementing them. We changed the mechanisms implementing the rules between-subjects and the main goal of the experiment is to compare the performance of PAO mechanisms relative to the direct and OSP (when possible) counterparts in various environments (in our case, rules). We changed the allocation rules within-subjects out of practical considerations and because we are interested in comparative statics between the types of mechanisms, but not the rules. Because in our setup TTC is only OSP-implementable for certain ``acyclic'' priority structures \citep{Troyan2019-ah,mandal2022obviously}, we split the TTC environments into cyclic and acyclic priority structures, with the former having only direct and PAO implementations.

We find that, in fact, OSP implementations lead to the highest truth-telling rates across the board. When the OSP implementation exists (i.e., except for TTC with cyclic priority structures), OSP outperforms both PAO and direct implementations in terms of truthful preference revelation. When comparing the direct implementation of TTC vs. the PAO mechanism, the experiments show that the PAO implementation leads to a higher proportion of subjects following truthful equilibrium strategies for the TTC rule and no difference for the SD rule.\footnote{The experiments reported in \cite{Bo2020-jm} complement these with a comparison between the direct revelation Gale-Shapley DA with the iterative deferred acceptance mechanism, which is its PAO implementation. The results are in line with the ones that we present here: the PAO implementation of DA results in a higher proportion of truth-telling than its direct counterpart.} As for efficiency, PAO mechanisms lead to significantly higher efficiency than direct mechanisms for both TTC and SD. OSP mechanisms improve efficiency over the direct ones, but there is no significant difference to  PAO mechanisms. Thus, despite a higher proportion of truthful strategies, OSP mechanisms do not improve efficiency relative to PAO ones. This is because the deviations from the truthful strategy are more likely to be payoff-relevant in OSP than in PAO mechanisms.

When we look more closely at the results for the OSP implementation of TTC, however, we see a big difference in the rates of truthful behavior, depending on the nature of the obviously dominant strategy for a subject. We employ the characterization of  \citet{Pycia2019-vl}, who show that every OSP-implementable mechanism is equivalent to a ``millipede game.'' In millipede games, at every decision node each player has to choose between leaving with an object among those in a given menu (``clinching action''), and at most one ``passing action.'' Unlike the clinching actions---in which agents simply choose their allocation---the passing actions require foresight to correctly predict feasible options, which can be cognitively demanding. In our experiment, in the OSP implementation of the acyclic TTC rule, when the obviously dominant strategy consists of simply picking the most-preferred object---the ``clinching action''---the rate of truthful behavior is 93\%. But when it requires some degree of ``foresight,'' in that it involves the ``passing action,'' the rate of truthful behavior is only 56\%. This result strongly supports the strong OSP concept of  \citet{Pycia2019-vl}, and its myopic ``picking'' equilibrium as being a better predictor of behavior than OSP in general.

Thus far, we motivated the PAO mechanisms only by the attractiveness of straightforward strategies empirically, which was confirmed by our experiments. However, PAO mechanisms have other attractive features which, although not directly evaluated in this paper, are important in practical applications. First, they improve the acquisition of information on the options available during the execution of the procedure, by requiring coarser information on preferences and limiting the number of options available between steps \citep{grenet2019decentralizing,hakimov2021costly}. Second, by allowing the participants to 
\textit{experience} the steps involved in the production of the final allocation, they can be perceived as more transparent \citep{Hakimov2020-ig}. Third, they can make the equilibrium strategy simpler in markets with a very large number of options, such as nationwide college admissions,\footnote{During university admissions in China and Brazil, for example, students face thousands of programs and universities \citep{Gong2016-bn,Bo2016-id}.} when compared to direct mechanisms. This is because in most practical cases, designers constrain the length of a rank-ordered list, as ranking even 100 options seems to be a very hard task, while the number of choices that students have to make in a PAO implementation of DA, for example, is typically much smaller than the number of alternatives that should be ranked \citep{Bo2016-id}. This, of course, comes at the cost of needing a longer time for the mechanism to run, which is an important practical consideration.\footnote{In college admissions in France, which runs a mechanism where students dynamically receive offers from colleges, the deadline for decision ranges from 5 days at the start of the procedure to 1 day towards the end. The system has been in place since 2018.} Fourth, there has been increasing demand from policymakers for the use of dynamic college admission mechanisms, e.g., recent reforms of college admissions in France, Inner-Mongolia, Germany, and Tunisia \citep{Bo2016-id,Gong2016-bn,luflade2018value}. Our paper provides guidance for constructing dynamic versions of different allocation mechanisms while preserving good incentive properties.

To sum up, we show that the PAO environment has significant benefits over its direct counterpart in the general decision environment.  However, when there is the option to use OSP mechanisms, our experiments suggest using them. Because many allocation rules used in real-life allocation problems are not OSP-implementable, but are PAO-implementable, we interpret our experimental results as support for the choice of PAO mechanisms over their direct revelation counterparts in these cases.

\subsection*{Related Literature\label{subsec:Related-literature}}

In addition to the studies mentioned in the introduction, this paper is mainly related to two literature strands: the design of sequential allocation mechanisms and their theoretical properties and the behavioral and experimental aspects of market design. 

From the theoretical perspective, perhaps the closest paper to ours is \cite{Mackenzie2020-gv}. They consider the family of \textit{menu mechanisms}, which are also sequential revelation mechanisms in which participants are asked to choose from menus of possible assignments. As in the case of PAO mechanisms, they focus on those in which an agent can never select an assignment twice. Unlike PAO mechanisms, however, the definition of menu mechanisms does not imply a restriction on the set of allocation rules that can be sequentialized because an agent's assignment does not have to be the last choice of an agent and may, in fact, be an object that was not chosen at all. Despite considering this more general setup, they show that strategy-proof rules can be implemented in a truthful ex-post perfect equilibrium, a result analogous to our Theorem \ref{thm:PAOTruthfullIFSP+MD}.

Another closely related paper is \cite{Borgers2019-dp}. As in our case, they are concerned about the notion of simplicity in mechanisms. They define the class of ``strategically simple'' mechanisms, which are those in which an agent's optimal strategy depends only on first-order beliefs about preferences and rationality. Like the rules implemented in truthful equilibria in PAO mechanisms, these include all dominant strategy mechanisms and extend to others. It is worth noting, however, that PAO mechanisms are not necessarily strategically simple.

Other papers have also considered sequential versions of allocation rules, such as multi-unit auctions \citep{Ausubel2004-bc,Ausubel2006-oc}, stable matchings \citep{Bo2016-id,Kawase2019-gf,Haeringer2019-rl}, and more general allocations \citep{Schummer2017-hh}. Moreover, there is a growing literature evaluating the sequential mechanisms used in the field \citep{Gong2016-bn,grenet2019decentralizing,Veski2017-ix,Dur2018-cj}. \cite{milgrom1982theory} show, in the context of auctions with correlated values, that are potential advantages of using sequential mechanisms instead of direct ones. Other recent papers, such as \cite{Akbarpour2019-lg} and \cite{Hakimov2020-ig} show that the use of sequential mechanisms can also be explained by their transparency and credibility characteristics: the experience that participants have when interacting with these mechanisms can convey information that helps them ensure the allocation is produced by following the rules.

We also provide contributions to the literature documenting dominated behavior in matching mechanisms.  \cite{Shorrer2018-uk,Rees-Jones2018-uz,Hassidim2016-ls,chen2019self} and \cite{Artemov2017-jx} document dominated strategies being played in real-life centralized allocation processes. These are in line with laboratory experiments that also evaluate truthful behavior in strategy-proof mechanisms relative to dynamic mechanisms \citep{Echenique2016-cx,Bo2020-jm,Breitmoser2019-ur,Klijn2019-bq}, and truthful behavior in TTC and SD \citep{Chen2002-tf,Chen2006-lv,Pais2008-gq,Guillen2017-ft,Guillen2018-cg,Hakimov2018-qe}.

\section{Examples\label{sec:Example}}

Pick-an-object mechanisms are a class of sequential mechanisms where in the first period all participants are asked to choose an object from individualized menus. In every period that follows, one of the following takes place: (i) agents are assigned to the last object they picked (which may include the null option) and the procedure ends, or (ii) some agents are given new menus, which are strict subsets of the previous one, not containing the last choice. Variations on the contents of these menus, the identity of those who receive them and how their contents depend on agents' choices characterize the space of PAO mechanisms.

In this section, we show two simple examples of the use of PAO mechanisms to implement the Gale-Shapley Deferred Acceptance (DA) and Gale's Top Trading Cycle (TTC) rules.\footnote{These are classic mechanisms used in the matching literature, introduced in \cite{Gale1962-mn} and \cite{Shapley1974-ir}, respectively.} 

In both cases, there are three agents ($a_1$, $a_2$, and $a_3$) and four objects ($o_1$, $o_2$, $o_3$ and $o_4$). Agent $a_3$ has the highest priority at object $o_1$, followed by $a_1$. Agent $a_1$ has the highest priority at object $o_2$.

\setlength{\tabcolsep}{0.5em} 
{\renewcommand{\arraystretch}{1.2}
\begin{table}[t]
\begin{tabular}{ccccc}
                                              & \multicolumn{2}{c}{\textbf{TTC}} & \multicolumn{2}{c}{\textbf{DA}}                                                            \\ \cline{2-5}
\multicolumn{1}{r|}{\multirow{3}{*}{$t=1$}}   & $a_1$     & \multicolumn{1}{c|}{$\{\boxed{o_1},o_2,o_3,o_4\}$}    & $a_1$    & \multicolumn{1}{c|}{$\{\boxed{o_1},o_2,o_3,o_4\}$}   \\
\multicolumn{1}{l|}{}                         & $a_2$     & \multicolumn{1}{c|}{$\{\boxed{o_1},o_2,o_3,o_4\}$}    & $a_2$    & \multicolumn{1}{c|}{$\{\boxed{o_1},o_2,o_3,o_4\}$}   \\
\multicolumn{1}{l|}{}                         & $a_3$     & \multicolumn{1}{c|}{$\{o_1,\boxed{o_2},o_3,o_4\}$}    & $a_3$    & \multicolumn{1}{c|}{$\{o_1,\boxed{o_2},o_3,o_4\}$}   \\ \cline{2-5}
\multicolumn{1}{r|}{\multirow{3}{*}{$t=2$}}   & $a_1$     & \multicolumn{1}{c|}{$\{o_3,\boxed{o_4}\}$}    & $a_1$    & \multicolumn{1}{c|}{$\emptyset$}   \\
\multicolumn{1}{l|}{}                         & $a_2$     & \multicolumn{1}{c|}{$\emptyset$}    & $a_2$    & \multicolumn{1}{c|}{$\{\boxed{o_2},o_3,o_4\}$}   \\
\multicolumn{1}{l|}{}                         & $a_3$     & \multicolumn{1}{c|}{$\emptyset$}    & $a_3$    & \multicolumn{1}{c|}{$\emptyset$}   \\ \cline{2-5}
\multicolumn{1}{r|}{\multirow{3}{*}{$t=3$}}   &           & \multicolumn{1}{c|}{}                     & $a_1$    & \multicolumn{1}{c|}{$\emptyset$}                             \\
\multicolumn{1}{l|}{}                         &           & \multicolumn{1}{c|}{}                     & $a_2$    & \multicolumn{1}{c|}{$\emptyset$}                             \\
\multicolumn{1}{l|}{}                         &           & \multicolumn{1}{c|}{}                     & $a_3$    & \multicolumn{1}{c|}{$\{o_1,\boxed{o_3},o_4\}$}                             \\ \cline{2-5}
\multicolumn{1}{r|}{\multirow{3}{*}{Outcome}} & \multicolumn{2}{c|}{$a_1\rightarrow o_4$}                & \multicolumn{2}{c|}{$a_1\rightarrow o_1$}                                     \\
\multicolumn{1}{l|}{}                         & \multicolumn{2}{c|}{$a_2\rightarrow o_1$  }              & \multicolumn{2}{c|}{$a_2\rightarrow o_2$}                                     \\
\multicolumn{1}{l|}{}                         & \multicolumn{2}{c|}{$a_3\rightarrow o_2$}                & \multicolumn{2}{c|}{$a_3\rightarrow o_3$}                                     \\ \cline{2-5}

\end{tabular}
\caption{Examples of use of the canonical PAO mechanism for implementing the TTC and DA rules.\label{tab:ExampleTTCandDA}}
\end{table}

We will consider the use of the \emph{canonical} PAO mechanism for these two rules. In a canonical PAO mechanism, we interpret choices from menus as revealing information about preferences, and menus only contain objects that might be matched to the agent, conditional on the allocations that the rule maps to preferences consistent with these choices.\footnote{A more detailed and formal definition is given in section \ref{subsec:canonicalPAO}.} For both rules, therefore, all agents are given menus containing all objects in period $t=1$. Table \ref{tab:ExampleTTCandDA} summarizes the steps for each mechanism, showing the menus given to each agent, with the choice they make highlighted.

Consider first the TTC rule. At $t=1$, all agents are given menus containing all objects. Suppose that $a_1$ and $a_2$ choose $o_1$ and $a_3$ chooses $o_2$. These are interpreted as revealing their most preferred objects. Since $a_3$ has the highest priority at $o_1$ and $a_2$ has the highest priority in $o_2$, they form a trading cycle and the TTC rule will, for any such preference profile, match $a_2$ to $o_1$ and $a_3$ to $o_2$. The choice that $a_1$ made, therefore, is no longer feasible. In the next period she is given a menu with the remaining objects and will be matched to whatever object she chooses. In the example in the table, she chooses $o_4$ and the procedure ends with all agents matched to the last object they chose.

Next, consider the DA rule in which agents make the same choices in $t=1$.  Since $o_1$ was revealed to be the most preferred object by both $a_1$ and $a_2$ but $a_1$ has a higher priority than $a_2$, the DA rule guarantees that $a_2$ is not matched to $o_1$.\footnote{At this point, it is still possible that $o_1$ will end up matched to $a_3$, but it will definitely not be matched to $a_2$.} In period $t=2$, therefore, agent $a_2$ is given a menu containing all objects except $o_1$. If she chooses $o_3$ or $o_4$, the procedure would end. We will, however, consider the case in which she chooses $o_2$. Since she has the highest priority for that object, the DA rule determines that she will be matched to it. This leads to $a_3$ receiving a menu in $t=3$ containing all objects except $o_2$.\footnote{Notice that since she has the highest priority at $o_1$, this object remains in her menu.} In our example, she chooses $o_3$ and the procedure ends with all agents matched to the last object they chose.

\section{Model\label{sec:Model}}

Let $A=\left\{ a_{1},a_{2},\ldots,a_{n}\right\} $ be a finite set
of \textbf{agents}\footnote{To simplify notation, an agent $a_i$ might sometimes be denoted by her index $i$.} and $O=\left\{ o_{1},o_{2},\ldots,o_{m}\right\}\cup\left\{\emptyset\right\}$
be a set of \textbf{object types},\footnote{Whenever there is no ambiguity, we sometimes refer to them simply as objects.} where $\emptyset$ is the \textbf{null object}. Each agent $a$ has strict \textbf{preferences}
$P_{a}$ over the set $O$. Given $P_{a}$
we express the induced weak preference by $R_{a}$. That is, $oR_{a}o'$
if $oP_{a}o'$ or $o=o'$. We abuse notation and $P_{a}$ may represent
its binary relation (ex: $oP_{a}o'$) or a tuple of elements of $O$,
for example, $P_{a}=\left(o,o',\emptyset,o''\right)$, which implies
$oP_{a}o'P_{a}\emptyset P_{a}o''$. We also often treat tuples of distinct elements of $O$ as sets, if that does not create any ambiguity. For example, we may say that $\gamma=\left(o_1,o_2,o_3\right)$ and $\left\{o_2\right\}\subset \gamma$. Denote by $\mathbb{P}$ the set
of all strict preferences over $O$. An object $o\in O$ is \textbf{acceptable} to agent $a\in A$ if $o P_{a} \emptyset$, and \textbf{unacceptable} to $a$ if $\emptyset P_{a} o$.
A \textbf{preference profile} is a list $P=\left(P_{a_{1}},P_{a_{2}},\ldots,P_{a_{n}}\right)$. We denote by $P^{-a}$ the set of all preferences in $P$ except for $P_a$.
Let $\mathcal{P}=\mathbb{P}^{n}$ be the set of all preference profiles.
An \textbf{allocation} is a function $\mu:A\to O$.\footnote{Notice that while this is a model of discrete object allocation, there is no explicit notion of feasibility considering capacities. Feasibility is ``encoded'' in the allocation rules themselves: if an allocation is in the image of the rule, then it is feasible.} For a given allocation $\mu$, we say that \textbf{agent $a$'s assignment} under $\mu$ is $\mu(a)$. Let $\mathcal{M}$
be the set of all allocations. A random allocation is a probability distribution over $\mathcal{M}$.  A \textbf{rule} is a function $\varphi:\mathcal{P}\to\text{\ensuremath{\mathcal{M}}}$.
Denote by $\varphi_{a}\left(P\right)=\varphi\left(P\right)\left(a\right)$. A rule $\varphi$ is \textbf{individually rational} if, for any $P\in \mathcal{P}$ and $a\in A$, $\varphi_a\left(P\right)R_a\emptyset$. 

A rule is \textbf{strategy-proof} if for every agent $a\in A$, $P\in\mathcal{P}$, and ${P_a}'\in\mathbb{P}$, it is the case that 
 $\varphi_{a}\left(P_{a},P^{-a}\right)R_{a}\varphi_{a}\left({P_a}',P^{-a}\right)$.

Define a \textbf{choice history} $h$ as a sequence of tuples $\left( \left(\Omega_1,\omega_1\right), \left(\Omega_2,\omega_2\right), \ldots \right)$, where for every $i$, $\Omega_i\subseteq O$ and $\omega_i\in\Omega_i$. That is, a choice history is a sequence of sets of object types and elements of those sets. For example:

\[ \left( \left(\left\{o_1,o_2,o_3,o_4,o_5\right\},o_2\right), \left(\left\{o_3,o_4,o_5\right\},o_5\right) \right) \]

We denote by $\overrightarrow{h}$ the last choice in $h$. That is, if $h=\left( \left(\Omega_1,\omega_1\right) , \ldots, \left(\Omega_k,\omega_k\right) \right)$, $\overrightarrow{h}=\omega_k$. We say that $h$ is a \textbf{continuation history} of $h'$ if all the tuples in $h'$ are also in $h$. We denote by $H$ the set of all choice histories, including the empty choice history, represented by $h^{\emptyset}$. We say that a preference $P_a$ is \textbf{consistent with the choice history} $h$ if for every $\left(\Omega_i,\omega_i\right)\in h$ and $o\in \Omega_i$, $\omega_i R_a o$. We denote by $P\left(h\right)$ the set of all preferences that are consistent with $h$. We can also say that a choice history is consistent with a preference using inverse reasoning, and denote by $h\left(P_i\right)$ the set of all choice histories that are consistent with $P_i$.

Denote by \textbf{collective history} $h^A$ a list of $n$ choice histories: $\left(h_1,h_2,\ldots,h_n\right)$. We denote by $H^A$ the set of all collective histories, by $h^{A-\emptyset}$ the collective history consisting of $n$ empty choice histories, and by $h^A_i$ the i-th element in $h^A$. We also say that $h^A$ is a \textbf{continuation collective history} of ${h^A}'$ if each choice history in $h^A$ is a continuation history of its related history in ${h^A}'$.  A preference profile $P=\left(P_{a_1},P_{a_2},\ldots,P_{a_n}\right)$ is \textbf{consistent with the collective history} $h^A$ if for every $i$, $P_{a_i}$ is consistent with $h^A_i$.\footnote{Note that this implies that a preference profile is consistent with a collective history if \textbf{all} histories are consistent with some preference. In other words, it is necessary that all histories are rationalizable.} We denote by $P\left(h^A\right)$ the set of preference profiles that are consistent with $h^A$. We can also say that a collective history is consistent with a preference profile using inverse reasoning, and denote by $h^A\left(P\right)$ the set of all collective histories that are consistent with the preference profile $P$.

Next, we abuse notation and define $\varphi\left(h^A\right)$ as the allocations that are consistent with the application of the rule $\varphi$, given the information about the preference profile that can be deduced from taking the revealed preference approach to the collective history $h^A$:

\[\varphi\left(h^A\right)=\bigcup_{P\in P\left(h^A\right)} \varphi\left(P\right)\]

For a given collective history $h^A$, therefore, $\varphi\left(h^A\right)$ is a (possibly empty) subset of $\mathcal{M}$. 

We also define the set of \textbf{feasible assignments after $h^A$ for $a_i$}, or $\mu_i^\varphi\left(h^A\right)$ as:

\[\mu_i^\varphi\left(h^A\right)=\bigcup_{\mu\in \varphi\left(h^A\right)}\mu\left(a_i\right)\]

Let $\Phi=\left(2^O\right)^n$. That is, $\Phi$ is the set of n-tuples with subsets of $O$.  

A \textbf{menu function} $\mathbb{S}:H^A\to\Phi$ specifies, for each collective history, a list of menus to be given to the agents.

\begin{itemize}
    \item $\mathbb{S}\left(h^{A-\emptyset}\right)=\left(\phi^0_1,\phi^0_2,\ldots,\phi^0_n\right)$, where for every $i$, $\phi^0_i$ is a non-empty subset of $O$. These are the initial menus.
    \item For any collective history $h^A$ and agent $a_i$, let $k_i$ be the number of non-empty menus that $a_i$ faced in $h^A_i$. We have, therefore, that $h^A_i=\left( \left(\Omega_1^i,\omega_1^i\right), \left(\Omega_2^i,\omega_2^i\right), \ldots, \left(\Omega_{k_i}^i,\omega_{k_i}^i\right) \right)$. Then,  $\mathbb{S}\left(h^A\right)=\left(\phi_1,\phi_2,\ldots,\phi_n\right)$, where $\phi_i\subseteq \Omega_{k_i}^i\backslash\{\omega_{k_i}^i\}$.\footnote{The menu $\phi_i$ is, therefore, a subset of the last non-empty menu that $a_i$ received.} 
\end{itemize}

Denote by $\mathbb{S}^i(\cdot)$ the i-th element in $\mathbb{S}(\cdot)$ (or, when convenient, $\mathbb{S}^a(\cdot)$ to be the element in $\mathbb{S}(\cdot)$ associated with agent $a$). For a given menu function $\mathbb{S}$, we define the \textbf{pick-an-object mechanism $\mathbb{S}$} as follows:

\begin{itemize}
    \item Period $t=1$: For every agent $a_i$, ask her to choose one item in $\mathbb{S}^i\left(h^{A-\emptyset}\right)$. Let agent $a_i$'s choice be $\sigma_i^t$. Define $h^{A-1}$ to be the collective history such that for every agent $a_i$, $h^{A-1}_i=\left(\left( \phi^0_1, \sigma_i^1\right)\right)$.
    \item Period $t>1$: Let $\left(\phi^t_1,\phi^t_2,\ldots,\phi^t_n\right)=\mathbb{S}\left(h^{A-t}\right)$. 
    \begin{itemize}
        \item If for all $i$, $\phi^t_i=\emptyset$, then the procedure stops, and outputs the allocation $\mu$, where for each $i$, $\mu\left(a_i\right)=\overrightarrow{h^{A-t}_{i}}$.
        \item Otherwise, for every agent $a_i$, ask her to choose one item in $\phi^t_i$, if the menu is non-empty.\footnote{An agent receiving an empty menu represents a situation in which she is not called to make a choice. One could alternatively interpret agents receiving empty menus as ``inactive'' agents in this period.} Let agent $a_i$'s choice be $\sigma_i^t$. Define $h^{A-(t+1)}$ as the collective history such that for every agent $a_i$ who received a non-empty menu, $h^{A-(t+1)}_i= h_i^{A-t} \oplus \left( \phi^t_i, \sigma_i^t\right)$,\footnote{We use ``$\oplus$'' to denote concatenation. That is, for example, $\left(\left( \phi, \sigma\right), \left( \phi', \sigma'\right)\right)\oplus\left( \phi'', \sigma''\right)=\left(\left( \phi, \sigma\right),\left( \phi', \sigma'\right), \left( \phi'', \sigma''\right)\right)$.} and for those with an empty menu, $h^{A-(t+1)}_i=h^{A-t}_i$.
    \end{itemize}
\end{itemize}

Notice that because the menus given to each agent do not include her previous choices and never include objects that were not present in previous menus, every collective history that results from any number of periods of a PAO mechanism is consistent with a non-empty set of preference profiles. Moreover, since the agents' allocation must be their last choice and menus are subsets of previous ones, every feasible allocation for an agent is present in the first menu given at $t=1$. Given a pick-an-object mechanism $\mathbb{S}$, we can define the set $H^A_{\mathbb{S}}$ of \textbf{collective histories of $\mathbb{S}$}. In other words, $H^A_{\mathbb{S}}$ contains all the collective histories that can take place when the pick-an-object mechanism $\mathbb{S}$ is used. All of them start with every agent $i$ receiving a menu with $\mathbb{S}^i\left(h^{A-\emptyset}\right)$ and making a choice from that menu. The other menus that they may face, and in which order, are determined by all the possible combinations of choices that agents make and the definition of the pick-an-object mechanism $\mathbb{S}$.

When facing a PAO mechanism, a simple behavior that an agent may follow is what we call a \textbf{straightforward strategy}, which we define below.

\begin{defn}
An agent $a\in A$ follows a \textbf{straightforward strategy with respect to $P_a$} if, whenever presented with a menu $I\subseteq O$, she chooses the most-preferred element of $I$ according to $P_a$. 
\end{defn}

\begin{defn}
A pick-an-object mechanism $\mathbb{S}$ \textbf{sequentializes} the rule $\varphi$ if, for any preference profile $P$, the pick-an-object mechanism $\mathbb{S}$ provides menus such that when each agent $a\in A$ follows the straightforward strategy with respect to $P_{a}$, the outcome $\varphi\left(P\right)$ is produced. We say that there exists a pick-an-object mechanism that sequentializes some rule $\varphi$ if there exists a menu function $\mathbb{S}$ such that a pick-an-object mechanism $\mathbb{S}$ sequentializes $\varphi$.
\end{defn}

\subsection{Monotonic discoverability}
\label{subsec:monotonicDisc}

At first glance, it might seem like every rule can be sequentialized by some pick-an-object mechanism. After all, by asking agents to choose from menus, one can always recover as much information about their preferences as necessary to pinpoint an allocation for a given rule. The definition of PAO mechanisms, however, imposes some restrictions on the menus that can be presented to an agent and how that relates to her assignment. In particular, objects previously chosen cannot be in future menus, and the agent is assigned to the last object she chose. These conditions result in mechanisms that have a simple and intuitive operation, but also induce a trade-off between obtaining more information about her preference and the assignment that the rule determines for her.

To see this, consider a problem where $O=\left\{o_1,o_2,o_3,\emptyset\right\}$, $A=\left\{a\right\}$, and the rule $\varphi^*$ as follows:

  \begin{eqnarray*}
    o_1 P_a o_2 P_a o_3 & : & \mu(a)=o_1\\
    o_1 P_a' o_3 P_a' o_2 & : & \mu(a)=o_3\\
    \text{Otherwise} & : & \mu(a)=\emptyset
    \end{eqnarray*}

Suppose that we are looking for a PAO mechanism that sequentializes $\varphi^*$, and consider first which object types should be in the menu given to the agent in period $t=1$. Clearly, object types $o_1$ and $o_3$, as well as the null object $\emptyset$ must be in the menu, since the agent's potential assignments must be in it. So there are two possibilities for the first menu: $\phi=\left\{o_1,o_2,o_3,\emptyset\right\}$ and $\phi'=\left\{o_1,o_3,\emptyset\right\}$.

What happens if the agent chooses $o_1$ from the first menu? If the menu was $\phi$, this tells us that $o_1$ is the most-preferred object type. This, however, does not provide enough information to narrow down to a single allocation, since this choice is consistent with the preferences $P_a$  and $P_a'$ above, but also with, for example, $o_1 P_a''o_3 P_a'' \emptyset P_a'' o_2$. More information about the agent's preferences is necessary to know which allocation to produce. 

More information on the agent's preferences can only be obtained by observing her choice from another menu. As per the definition of PAO mechanisms, that new menu and all the following ones cannot have $o_1$ as one of the options. We therefore reach an \textbf{informational gridlock}: in order to know which assignment to produce, we need to know, at the very least, whether $o_2$ is preferred to $o_3$ or not. The ``costs'' of obtaining more information  is the elimination of $o_1$ as a possible assignment for $a$. But if her preference is $P_a$, the rule $\varphi^*$ indicates that that should be her assignment. The same problem is present if the first menu was $\phi'$: if she chooses $o_1$ or $o_3$, we reach the same gridlock.\footnote{Note that this observation does not rely on the fact that the definition of PAO mechanisms imply that the first menu must contain all the agent's feasible assignments. The type of ``informational gridlock'' is obtained if we simply require the agent to be assigned her last choice and menus to be such that the choices are always rationalizable.}

 We conclude, therefore, that $\varphi^*$ cannot be sequentialized by a PAO mechanism. This is because $\varphi^*$ does not satisfy \textit{monotonic discoverability}, a property that we now define.

Let $\mu$ be an allocation. 
Let $P=\left(P_{a_{1}},P_{a_{2}},\ldots,P_{a_{n}}\right)$ be any
element of $\mathcal{P}$. We define the lower contour set of $P$ at $\mu$:
\vspace{-3pt}
\[\mathcal{L}\left(P,\mu\right)=\left\{ P'\in\mathcal{P}:\forall a\in A\text{ and }o,o'\in O:oP_{a}o'P_{a}\mu\left(a\right)\iff oP'_{a}o'P'_{a}\mu\left(a\right)\right\} \]

That is, $\mathcal{L}\left(P,\mu\right)$ contains all preference profiles that, for each agent $a$, agree with $P_a$ with respect to $\mu(a)$ and all objects preferred by $a$ to $\mu(a)$, but may differ with respect to objects that $\mu(a)$ are preferred to, with respect to $P_a$. We will say that $\mathcal{L}\left(P,\mu\right)$ is therefore the \textit{lower contour set} of $P$ at $\mu$. We will denote each element of $\mathcal{L}\left(P,\mu\right)$ a \textbf{continuation profile of $P$ at $\mu$}.

\begin{defn}
A rule $\varphi$ satisfies \textbf{monotonic discoverability} if, for
any allocation $\mu$ and preference profile $P$, either $\varphi\left(P\right)=\mu$
 or there is an agent $a^{*}\in A$ such that for all $P'\in\mathcal{L}\left(P,\mu\right)$, $\mu\left(a^{*}\right)\neq\varphi_{a^{*}}\left(P'\right)$.
\end{defn}

In words, when a rule satisfies monotonic discoverability, given any preference profile $P$ and allocation $\mu$, either the rule maps $P$ to $\mu$, or there is an agent $a^*$ who is not matched to her assignment under $\mu$ for any preference profile in the lower contour set of $P$ at $\mu$.

\begin{figure}[tb]
    \begin{minipage}[b]{.5\linewidth}
    \renewcommand{\arraystretch}{1.3}
      \centering
      \caption*{Step 1}
      \vspace{0.2cm}
        \begin{tabular}{l|l|l|l|l|l|l|}
            
            \hhline{~------}
            $P_{a_1}$ & \cellcolor{blue!25} $o_1$ & \cellcolor{black!25}$o_2$ &  \cellcolor{black!25}$o_3$  & \cellcolor{black!25}$o_4$  & \cellcolor{black!25} $o_5$ & \cellcolor{black!25}$\emptyset $ \\ \hhline{~------}
            $P_{a_2}$ & \cellcolor{blue!25} $o_3$ & \cellcolor{black!25}$o_1$ &  \cellcolor{black!25}$o_4$  & \cellcolor{black!25}$o_3$  & \cellcolor{black!25} $o_5$ & \cellcolor{black!25}$\emptyset $ \\ \hhline{~------}
            $P_{a_3}$ & \cellcolor{blue!25} $o_2$ & \cellcolor{black!25}$o_5$ &  \cellcolor{black!25}$o_4$  & \cellcolor{black!25}$o_1$  & \cellcolor{black!25} $o_3$ & \cellcolor{black!25}$\emptyset $ \\ \hhline{~------}
            $P_{a_4}$ & \cellcolor{blue!25} $o_5$ & \cellcolor{black!25}$o_4$ &  \cellcolor{black!25}$o_2$  & \cellcolor{black!25}$\emptyset $  & \cellcolor{black!25} $o_1$ & \cellcolor{black!25}$o_3$ \\ \hhline{~------}
        \end{tabular}
        
    \end{minipage}%
    \begin{minipage}[b]{.5\linewidth}
    \renewcommand{\arraystretch}{1.3}
      \centering
      \caption*{Step 2}
      \vspace{0.2cm}
        \begin{tabular}{l|l|l|l|l|l|l|}
            \hhline{~------}
            $P_{a_1}$ & \cellcolor{blue!25} $o_1$ & \cellcolor{black!25}$o_2$ &  \cellcolor{black!25}$o_3$  & \cellcolor{black!25}$o_4$  & \cellcolor{black!25} $o_5$ & \cellcolor{black!25}$\emptyset $ \\ \hhline{~------}
            $P_{a_2}$ & \cellcolor{red!25} \stkout{$o_3$} & \cellcolor{blue!25}$o_1$ &  \cellcolor{black!25}$o_4$  & \cellcolor{black!25}$o_3$  & \cellcolor{black!25} $o_5$ & \cellcolor{black!25}$\emptyset $ \\ \hhline{~------}
            $P_{a_3}$ & \cellcolor{blue!25} $o_2$ & \cellcolor{black!25}$o_5$ &  \cellcolor{black!25}$o_4$  & \cellcolor{black!25}$o_1$  & \cellcolor{black!25} $o_3$ & \cellcolor{black!25}$\emptyset $ \\ \hhline{~------}
            $P_{a_4}$ & \cellcolor{blue!25} $o_5$ & \cellcolor{black!25}$o_4$ &  \cellcolor{black!25}$o_2$  & \cellcolor{black!25}$\emptyset $  & \cellcolor{black!25} $o_1$ & \cellcolor{black!25}$o_3$ \\ \hhline{~------}
        \end{tabular}
    \end{minipage}
    
    \begin{minipage}[b]{.5\linewidth}
    \renewcommand{\arraystretch}{1.3}
      \centering
      \caption*{Step 3}
      \vspace{0.2cm}
         \begin{tabular}{l|l|l|l|l|l|l|}
            \hhline{~------}
            $P_{a_1}$ & \cellcolor{red!25} \stkout{$o_1$} & \cellcolor{blue!25}$o_2$ &  \cellcolor{black!25}$o_3$  & \cellcolor{black!25}$o_4$  & \cellcolor{black!25} $o_5$ & \cellcolor{black!25}$\emptyset $ \\ \hhline{~------}
            $P_{a_2}$ & \cellcolor{red!25} \stkout{$o_3$} & \cellcolor{blue!25}$o_1$ &  \cellcolor{black!25}$o_4$  & \cellcolor{black!25}$o_3$  & \cellcolor{black!25} $o_5$ & \cellcolor{black!25}$\emptyset $ \\ \hhline{~------}
            $P_{a_3}$ & \cellcolor{blue!25} $o_2$ & \cellcolor{black!25}$o_5$ &  \cellcolor{black!25}$o_4$  & \cellcolor{black!25}$o_1$  & \cellcolor{black!25} $o_3$ & \cellcolor{black!25}$\emptyset $ \\ \hhline{~------}
            $P_{a_4}$ & \cellcolor{red!25} \stkout{$o_5$} & \cellcolor{blue!25}$o_4$ &  \cellcolor{black!25}$o_2$  & \cellcolor{black!25}$\emptyset $  & \cellcolor{black!25} $o_1$ & \cellcolor{black!25}$o_3$ \\ \hhline{~------}
        \end{tabular}
    \end{minipage}%
    \begin{minipage}[b]{.5\linewidth}
    \renewcommand{\arraystretch}{1.3}
      \centering
      \caption*{Step 4}
      \vspace{0.2cm}
        \begin{tabular}{l|l|l|l|l|l|l|}
            \hhline{~------}
            $P_{a_1}$ & \cellcolor{red!25} \stkout{$o_1$} & \cellcolor{green!25}$o_2$ &  \cellcolor{black!25}$o_3$  & \cellcolor{black!25}$o_4$  & \cellcolor{black!25} $o_5$ & \cellcolor{black!25}$\emptyset $ \\ \hhline{~------}
            $P_{a_2}$ & \cellcolor{red!25} \stkout{$o_3$} & \cellcolor{green!25}$o_1$ &  \cellcolor{black!25}$o_4$  & \cellcolor{black!25}$o_3$  & \cellcolor{black!25} $o_5$ & \cellcolor{black!25}$\emptyset $ \\ \hhline{~------}
            $P_{a_3}$ & \cellcolor{green!25} $o_2$ & \cellcolor{black!25}$o_5$ &  \cellcolor{black!25}$o_4$  & \cellcolor{black!25}$o_1$  & \cellcolor{black!25} $o_3$ & \cellcolor{black!25}$\emptyset $ \\ \hhline{~------}
            $P_{a_4}$ & \cellcolor{red!25} \stkout{$o_5$} & \cellcolor{red!25}\stkout{$o_4$} &  \cellcolor{green!25}$o_2$  & \cellcolor{black!25}$\emptyset $  & \cellcolor{black!25} $o_1$ & \cellcolor{black!25}$o_3$ \\ \hhline{~------}
        \end{tabular}
    \end{minipage}
    \vspace{0.4cm}
    \caption{Monotonic discoverability and sequentialization}
\label{fig:monotonicDiscSequentialization}
\end{figure}

To understand the critical role that monotonic discoverability has when sequentializing rules, consider Figure \ref{fig:monotonicDiscSequentialization}. Suppose that we have a rule $\varphi$ that satisfies monotonic discoverability, and start with the allocation $\mu^0$ that matches each agent with her most-preferred object type, highlighted in blue in Figure \ref{fig:monotonicDiscSequentialization}, step 1. We can consider two cases. In the first, $\varphi$ is such that for every preference profile in which agent $a_i$'s top option is as in $P$, these agents are matched to their top choices. In that case, knowing those top choices already gives us enough information to determine this to be the outcome that $\varphi$ maps for any continuation profile.\footnote{One simple example of this situation is when $\varphi$ is a simple serial dictatorship, and no two agents have the same object as their most-preferred one, as in $P$.} The second case is when this is not true. That is, there are continuation profiles of $P$ at $\mu^0$ for which $\varphi$ does not map those profiles to $\mu^0$. Let $P^0$ be one such profile. By monotonic discoverability, there is at least one agent who, for every continuation profile of $P^0$ at $\mu^0$ (and therefore also of $P$ at $\mu^0$) will not be matched to her match at $\mu^0$. Without loss of generality, let $a_2$ be such an agent.

Consider next the allocation $\mu^1$, which is the same as $\mu^0$ except that the object mapped to $a_2$ is the second one in her preference. This is shown in Figure \ref{fig:monotonicDiscSequentialization}, step 2. Here, once again, we have two cases. In the first, $\varphi$ is such that for every continuation preference profile of $P$ at $\mu^1$, all agents are matched to their outcomes under $\mu^1$. The second case is where there is at least one, but potentially many, such profiles in which the outcomes are different than $\mu^1$. There is one thing we can say, however. Because the continuation profiles of $P$ at $\mu^1$ are also continuation profiles of $P$ at $\mu^0$, for none of these cases is agent $a_2$ matched to $o_3$. Suppose here, without loss of generality, that the second case is true, agents $a_1$ and $a_4$ are not matched to their outcomes under $\mu^1$ for any continuation profile of $P$ at $\mu^1$. Steps 3 and 4 represent a continuation of this argument, until at step 4, the allocation $\mu^4$ being evaluated is in fact the one that the rule $\varphi$ maps to all continuation profiles of $P$ at $\mu^4$. Notice that this process always converges to the allocation mapped by $\varphi$ to the profile $P$, because we only move down the preference ordering of an agent when we have enough information to determine that $\varphi$ rules out previously considered matches to her.

Following this process shows us two consequences of a rule satisfying monotonic discoverability. The first is that while following this monotonic process of evaluating allocations that are at each step weakly worse (from the perspective of the underlying preference profile), we end up at the allocation that $\varphi$ maps to $P$. This fact is fundamental for a PAO mechanism to be able to match agents to their last choices without repetition of their choices. 

The second is that monotonic discoverability solves the informational gridlock that is induced by a PAO mechanism; that is, we must permanently reject the last choice the agent made to obtain more information about her preferences. Monotonic discoverability guarantees that whenever we do not have enough information to single out an allocation, there is at least one agent whose last choice can be rejected and then presented with a new menu. This allow us to obtain more information about preferences, narrowing down the set of preference profiles consistent with the agents' choices. In fact, this relationship between monotonic discoverability and PAO mechanisms is as strong as possible, as shown in Theorem 1 below.

\begin{thm}
\label{thm:PickAnObjectIFFTerminalAv}
There exists a pick-an-object mechanism that \textbf{sequentializes} an individually rational rule $\varphi$ if and only if $\varphi$ satisfies monotonic discoverability.
\end{thm}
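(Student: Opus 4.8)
The plan is to prove the two implications separately, throughout reading the defining disjunction of monotonic discoverability contrapositively: for a fixed allocation $\mu$ and profile $P$, the clause ``there is an agent $a^{*}$ with $\mu(a^{*})\neq\varphi_{a^{*}}(P')$ for all $P'\in\mathcal{L}(P,\mu)$'' fails exactly when for every agent $a$ there is some $Q^{a}\in\mathcal{L}(P,\mu)$ with $\varphi_{a}(Q^{a})=\mu(a)$. So monotonic discoverability is equivalent to the statement: whenever every agent's $\mu$-assignment is attainable somewhere in $\mathcal{L}(P,\mu)$, one must have $\varphi(P)=\mu$. The single observation I would use repeatedly is that all profiles in $\mathcal{L}(P,\mu)$ share the same ranking on $\{o:oR_{a}\mu(a)\}$ for each $a$ and rank every remaining object below $\mu(a)$; hence under straightforward play two such profiles make identical choices from any menu whose best element (for $a$) is weakly preferred to $\mu(a)$.

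For necessity I assume a PAO mechanism $\mathbb{S}$ sequentializes $\varphi$ and that the failing configuration above holds, and I derive $\varphi(P)=\mu$. I run $\mathbb{S}$ under straightforward play for $P$ and compare it with the runs for the profiles $Q^{a}$. By the shared-top observation and induction on periods, the run for $P$ and the run for any $Q\in\mathcal{L}(P,\mu)$ coincide up to the first period at which some agent is offered a menu all of whose elements are strictly worse (for her) than her $\mu$-value. Two cases arise. If no such period ever occurs, the full run for $P$ coincides with the run for every $Q\in\mathcal{L}(P,\mu)$, so $\mathbb{S}$ returns the same allocation on all of them; since it sequentializes $\varphi$, $\varphi(Q^{a})=\varphi(P)$ for every $a$, and combined with $\varphi_{a}(Q^{a})=\mu(a)$ this yields $\varphi(P)=\mu$. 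If instead period $\tau$ is the first such period, with agent $a^{\dagger}$ offered a (necessarily nonempty) menu $\phi$ containing no object weakly preferred to $\mu(a^{\dagger})$, then $\mu(a^{\dagger})\notin\phi$ and, because every later menu is a subset of $\phi$ with the last choice deleted, $\mu(a^{\dagger})$ never reappears. The run for $Q^{a^{\dagger}}$ agrees with that for $P$ through period $\tau$, so from $\tau$ on $a^{\dagger}$ only picks objects worse than $\mu(a^{\dagger})$; her last choice, hence $\varphi_{a^{\dagger}}(Q^{a^{\dagger}})$, cannot be $\mu(a^{\dagger})$, contradicting the choice of $Q^{a^{\dagger}}$. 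Thus only the first case survives and $\varphi(P)=\mu$, establishing monotonic discoverability.

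For sufficiency I construct $\mathbb{S}$ explicitly and verify that it sequentializes $\varphi$. I take every initial menu to be $\phi^{0}_{i}=O$ and, whenever the mechanism rejects an agent's last choice, I hand her the previous menu with that choice deleted; under straightforward play this reveals an agent's preference order from the top, one object per rejection. The key invariant, proved by induction along any straightforward-play history $h^{A}$, is that the set of consistent profiles equals the lower contour set at the current tentative allocation, $P(h^{A})=\mathcal{L}(P,\nu)$, where $\nu(i)=\overrightarrow{h^{A}_{i}}$ is the vector of last choices. The menu function is then defined purely from $h^{A}$: if every $Q\in P(h^{A})$ satisfies $\varphi(Q)=\nu$, stop (all menus empty); otherwise pick, by a fixed tie-break, an agent $a^{*}$ with $\varphi_{a^{*}}(Q)\neq\nu(a^{*})$ for all $Q\in P(h^{A})$, reject her, and give everyone else the empty menu; on histories off this path set all menus empty. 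Existence of $a^{*}$ is exactly monotonic discoverability applied to $(\nu,Q)$ for any consistent $Q$ with $\varphi(Q)\neq\nu$, using $\mathcal{L}(Q,\nu)=\mathcal{L}(P,\nu)=P(h^{A})$.

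The correctness argument rests on one pivotal fact: $P$ itself is always consistent, so whenever we reject $\nu(a^{*})$ the defining property of $a^{*}$ forces $\varphi_{a^{*}}(P)\neq\nu(a^{*})$; hence no agent's choice is ever pushed past $\varphi_{a}(P)$, which by individual rationality is weakly preferred to $\emptyset$. This keeps the rejection menu $\Omega^{a^{*}}\setminus\{\nu(a^{*})\}$ nonempty, since it still contains $\varphi_{a^{*}}(P)$, and shows the tentative allocation always satisfies $\nu(a)R_{a}\varphi_{a}(P)$. A potential argument then finishes: each rejection strictly advances one agent one step down her order, so the procedure halts; at a halt $P\in P(h^{A})$ and the stopping rule give $\nu=\varphi(P)$, while conversely at $\nu=\varphi(P)$ the stopping rule must already hold (otherwise $a^{*}$ would satisfy $\varphi_{a^{*}}(P)\neq\varphi_{a^{*}}(P)$). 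Since each agent is assigned her last choice $\nu(a)=\varphi_{a}(P)$, the mechanism sequentializes $\varphi$. I expect the main obstacle to be the sufficiency direction, specifically nailing the invariant $P(h^{A})=\mathcal{L}(P,\nu)$ (which is what licenses applying monotonic discoverability at every step) and verifying that the rejection menus never become empty, so that the constructed $\mathbb{S}$ is a legitimate PAO mechanism; the necessity direction is comparatively routine once the ``runs coincide until someone drops below her $\mu$-value'' lemma is in place.
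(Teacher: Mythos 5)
Your proof is correct, and the sufficiency direction takes a genuinely different route from the paper's. The paper builds the \emph{canonical} PAO mechanism, whose menus are the sets of still-feasible assignments $\mu_i^\varphi(h^A)$ and which rejects \emph{all} agents whose last choice has become infeasible in the same period; it then argues termination and correctness directly in terms of $\varphi(h^A)$, invoking individual rationality to rule out the mechanism stopping at an allocation containing an object some agent never picked. You instead use full menus ($O$ minus past choices), reject exactly one agent per period, and stop only when every consistent profile maps to the vector of last choices. The payoff of your design is that the consistency set is \emph{exactly} a lower contour set, $P(h^A)=\mathcal{L}(Q,\nu)$ for any consistent $Q$ --- which is precisely the hypothesis of monotonic discoverability --- so the existence of a rejectable agent at every non-terminal history is immediate, and the termination/correctness argument reduces to the clean potential-function observation that no agent is ever rejected from $\varphi_a(P)$ (so rejection menus stay nonempty and each agent descends at most to her $\varphi$-assignment). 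With the paper's smaller menus this identification of $P(h^A)$ with a lower contour set is not exact, so your construction arguably makes the role of monotonic discoverability more transparent; the cost is a less ``natural'' mechanism (agents are shown infeasible objects) and a fully sequential rejection order. Your necessity direction is in substance the paper's informational-gridlock argument --- the choice of $\mu(a)$ can never be rejected while some continuation still assigns it --- but you formalize it as a period-by-period coupling of the runs for $P$ and for $Q\in\mathcal{L}(P,\mu)$, phrased as a direct proof of the contrapositive of the definition rather than a contradiction from a maximal counterexample pair; this is tighter than the paper's prose version. One small point to keep in mind when writing this up: the menu function must be specified on all reachable histories, not only straightforward-play ones, but your identity $P(h^A)=\mathcal{L}(Q,\nu)$ holds for every reachable history of your mechanism (any history of choices from ``everything not yet chosen'' is rationalized exactly by the profiles whose top segment is that choice sequence), so the construction is well defined off the straightforward path as well.
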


Theorem \ref{thm:PickAnObjectIFFTerminalAv} implies, therefore, that the structure of PAO mechanisms limits the rules that can be sequentialized using them. If we were to use choices from menus simply as means to obtain information on the preferences, we would have no such restriction. One could, for example, make each agent choose one object at a time starting from the entire set $O$, eliciting the entire preference profile, which could then be used to determine the allocation. The participant's experience with this type of mechanism would be fundamentally different, though: the connection between their choices and assignments would be unclear, in that they might end up matched to an object chosen early in the process, before having seen many other choices. In PAO mechanisms, there is a much more explicit connection between choices and assignments: an agent can pick what she wants and will be able to keep it unless the information provided by the other participants, through their choices, determines that she cannot keep it, in which case she will be offered the chance to choose again from a new menu with feasible options.\footnote{The specific implementation of the PAO mechanism could even allow for the market designer to credibly and truthfully explain the reason why the agent cannot keep the object, using the information that she obtained that resulted in that rejection.}

\subsection{Generalized Deferred Acceptance Procedures}\label{subsec:generalizedDAMechanisms}

Many mechanisms used in matching are defined by algorithms that produce outcomes, as opposed to axioms or objective functions. Two classic examples are the Gale-Shapley deferred acceptance mechanism (DA) \citep{Gale1962-mn} and Gale's top trading cycles (TTC) \citep{Shapley1974-ir}. Their definitions describe step-by-step procedures that use agents' preference rankings (and often other information, such as priority orderings) and result in an allocation.

\textit{Generalized DA procedures} is a general description of those which produce tentative allocations at each step, following the agent's preferences, until no agent has her choice rejected. They generalize DA in the sense that they determine whether the tentative allocation of an agent to an object type becomes a rejection based on the \textit{entire} tentative allocation and set of proposals, as opposed to the tentative allocation and proposals \textit{to the object type in question}. If we use the college admissions analogy, in a generalized DA procedure, whether a student's proposal is accepted or not may depend not only on the college she applies to (and its tentatively matched students) but also on the entire tentative assignments of students to colleges, and contemporaneous applications.

A generalized DA procedure is, therefore, defined by an \textit{update function} $\Psi:\mathcal{M}\times \mathcal{M}\to \mathcal{M}$. In it, $\Psi\left(\mu,\mu'\right)$ informs, for a ``tentative'' assignment $\mu$, what will be the new tentative assignment when the new proposals are the ones represented by the assignment $\mu'$. This has the restriction that, if $\Psi\left(\mu,\mu'\right)=\mu''$, for every $a\in A$ it must be that $\mu''(a)\in \{\emptyset, \mu(a),\mu'(a)\}$.\footnote{Interestingly, the definition of generalized DA procedures coincides with a deferred acceptance procedure used by \cite{pycia2021matching} to produce stable matchings in the presence of externalities. While this points to a relation between the family of rules that we are defining and their outcomes potentially satisfying certain properties in the presence of those externalities, these are not the subject of our analysis, neither are sequential mechanisms the subject of theirs.} That is, the new tentative allocation must be one that for each agent, either leaves her with the same tentative allocation, unmatched, or tentatively assigned to the object type she has just proposed. Given an update function $\Psi$, the procedure can be described by the following algorithm:

\begin{itemize}
    \item \textbf{Step $1$}: Let $\mu^0$ be an assignment in which, for all $a\in A$, $\mu^0(a)=\emptyset$, and $\mu^*$ be an assignment consisting of all agents matched to their top choice in $P$.  Moreover, let $\mu^1=\Psi\left(\mu^0,\mu^*\right)$. We say that every agent $a\in A$ for which $\mu^*(a)\neq \emptyset$ and $\mu^1(a)= \emptyset$ was \textit{rejected by} $\mu^*(a)$.
    \item \textbf{Step $t>1$}: Construct the assignment $\mu^*$, in which every agent $a$ who was rejected at step $t-1$ is matched to her most-preferred object, with respect to $P_a$, from which she was not previously rejected. Moreover, let all other agents be matched to $\emptyset$ in $\mu^*$. Moreover, let $\mu^{t}=\Psi\left(\mu^{t-1},\mu^*\right)$. If for every $a\in A$ it is the case that $\mu^{t}(a)\in \left\{\mu^{t-1}(a),\mu^*(a)\right\}$, stop the procedure and determine the assignment to be $\mu^{t}$. Otherwise, go to step $t+1$.
\end{itemize}

Rules that are described by generalized DA procedures satisfy monotonic discoverability, as shown below.

\begin{prop}
\label{prop:generalizedDAMonDisc}
If $\varphi$ is described by a generalized DA procedure, then $\varphi$ satisfies monotonic discoverability.
\end{prop}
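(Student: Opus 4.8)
The plan is to run the generalized DA procedure on the given profile $P$ and on an arbitrary continuation profile $P'\in\mathcal{L}(P,\mu)$ simultaneously, and to show that the two runs are identical up to the first moment at which some agent is rejected by her $\mu$-assignment; the agent rejected at that moment will serve as the witness $a^{*}$ required by monotonic discoverability. Fix an allocation $\mu$ and a profile $P$ with $\varphi(P)\neq\mu$ (the case $\varphi(P)=\mu$ being the first disjunct of the definition).

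First I would record two structural facts about any generalized DA procedure. The first is an irreversibility fact: because at each step $t>1$ a rejected agent proposes only her most-preferred object among those from which she has not been previously rejected, and because the restriction on $\Psi$ forces $\mu^{t}(a)\in\{\emptyset,\mu^{t-1}(a),\mu^{*}(a)\}$, an agent's tentative holding is always an object she has proposed and not yet been rejected from. Hence, once an agent has been rejected by an object $o$ (she held or proposed $o$ but ends the step not holding it), she never proposes or holds $o$ again, so her final assignment differs from $o$. The second is the key coupling fact: an agent proposes an object strictly below $\mu(a)$ only after being rejected by $\mu(a)$, since proposals descend her list in order. Then I would define $T$ as the first step at which some agent is rejected by her own $\mu$-object (with $T=\infty$ if this never happens), and argue that through step $T$ every proposal $\mu^{*}$ and every tentative allocation $\mu^{t-1}$ assigns each agent an object weakly preferred to $\mu(a)$ under $P_{a}$: holding or proposing something strictly below $\mu(a)$ would require an earlier rejection by $\mu(a)$, contradicting minimality of $T$. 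Since $P'\in\mathcal{L}(P,\mu)$ means $P$ and $P'$ induce the same ranking on the objects weakly preferred to $\mu(a)$ for each $a$, the ``next-best-not-yet-rejected'' proposals coincide on $P$ and $P'$ as long as agents stay weakly above their $\mu$-objects, and because $\Psi$ is a fixed function of the pair (tentative allocation, new proposals) the outputs coincide too. An induction on $t\le T$ then gives identical $\mu^{t}$ and identical proposals through step $T$ on both profiles.

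The conclusion then splits by $T$. If $T<\infty$, let $a^{*}$ be an agent rejected by $\mu(a^{*})$ at step $T$; by the coupling this rejection occurs in the run on $P'$ as well, and by irreversibility $\varphi_{a^{*}}(P')\neq\mu(a^{*})$ for every $P'\in\mathcal{L}(P,\mu)$. If $T=\infty$, no agent is ever rejected by her $\mu$-object on the run on $P$, the coupling extends to the whole run, so $\varphi(P')=\varphi(P)$ for all such $P'$; since $\varphi(P)\neq\mu$, any $a^{*}$ with $\varphi_{a^{*}}(P)\neq\mu(a^{*})$ works uniformly. In both cases the required witness exists. The main obstacle I anticipate is making the coupling induction fully rigorous — in particular verifying that the proposals constructed at step $T$ itself (by the agents rejected at step $T-1$) are still weakly above the $\mu$-objects, and carefully exploiting that $\Psi$ depends only on tentative allocations and proposals and not on the underlying preferences; by comparison the irreversibility fact and the reduction to the shared upper-contour ordering are routine.
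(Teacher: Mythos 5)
Your proof is correct, but it takes a genuinely different route from the paper's. The paper does not verify monotonic discoverability directly: it invokes Theorem~\ref{thm:PickAnObjectIFFTerminalAv} and constructs an explicit menu function $\mathbb{S}^*$ (initial menus equal to $O$, later menus equal to $O$ minus past choices, with the update function $\Psi$ used recursively to decide which agents are handed a fresh menu), shows that straightforward play of this pick-an-object mechanism reproduces the generalized DA run step by step, and concludes via the necessity direction of Theorem~\ref{thm:PickAnObjectIFFTerminalAv}. You instead verify the definition head-on by coupling the runs on $P$ and on an arbitrary $P'\in\mathcal{L}(P,\mu)$: because proposals descend an agent's list in order, nobody proposes below $\mu(a)$ before being rejected by $\mu(a)$, so up to the first such rejection (your step $T$) both runs feed identical, preference-free inputs to $\Psi$ and therefore coincide; the agent rejected by her $\mu$-object at step $T$ is the required witness, and if $T=\infty$ the runs coincide entirely so any $a^*$ with $\varphi_{a^*}(P)\neq\mu(a^*)$ works. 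Your two structural facts (irreversibility of rejections, and ``descend below $\mu(a)$ only after rejection by $\mu(a)$'') do hold for generalized DA as defined, and the induction through step $T$---including the proposals formed at step $T$ itself by agents rejected at $T-1$---goes through as you anticipate. The trade-off: the paper's argument is shorter given Theorem~\ref{thm:PickAnObjectIFFTerminalAv} and yields the concrete sequentializing PAO mechanism as a by-product, while yours is self-contained (it uses neither that theorem nor the individual-rationality hypothesis in its statement) and pins down the witness $a^*$ explicitly, which makes the mechanism behind the property more transparent. One small caveat is definitional rather than logical: you read $\mathcal{L}(P,\mu)$ as requiring agreement on the entire weak upper contour set at $\mu(a)$, which is the paper's stated intent and the reading its own proofs rely on, even though the displayed pairwise formula technically leaves degenerate cases (at most one object above $\mu(a)$) unconstrained.
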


Proposition \ref{prop:generalizedDAMonDisc} implies that many mechanisms that are commonly used and referenced in the matching literature satisfy monotonic discoverability and can therefore be sequentialized by a PAO mechanism. One can easily see that these include, for example, DA itself, TTC, and the Boston mechanism \citep{Abdulkadiroglu2003-dx}. Note, however, that the converse does not hold, as shown in the example below.

\begin{example}[\emph{A rule that is monotonic discoverable but not generalized DA}]\label{ex:MonDiscButNotGenDA}
 Let $O=\{o_1,o_2,o_3\}$, $\varphi$ be defined for a single agent $a$,  and assign her $o_1$ whenever $o_1$ is her most-preferred object, $o_2$ when  $o_3 P_a o_1 P_a o_2$, and leave the agent unmatched otherwise.
\end{example}
  
Example \ref{ex:MonDiscButNotGenDA} involves a situation that cannot be handled by a Generalized DA: the mechanism must ``remember'' that the agent first chose $o_3$ (and was rejected) before choosing $o_1$ in order for the outcome to be $o_2$. Monotonic discoverability does not impose this ``short memory'' assumption into the sequentialization of the rule.

\subsection{A canonical Pick-an-object mechanism}
\label{subsec:canonicalPAO}

Up to this point, we have described how PAO mechanisms operate, and which rules can be sequentialized with PAO mechanisms. The next natural question is: given some rule $\varphi$ that satisfies monotonic discoverability, how can we design the contents of the menus given to the agents in each step, in a way that sequentializes $\varphi$?

In principle, there might be multiple PAO mechanisms that sequentialize a given rule, with variations over which agents are asked to choose from menus and the contents of those menus. We can, however, construct a \textbf{canonical PAO mechanism} for that rule. Let $\varphi$ be a rule that satisfies monotonic discoverability, and define the PAO function $\mathbb{S}$, such that for every $i$ and $h^A\in H^A$:

\[\mathbb{S}^i\left(h^A\right)=
\begin{cases}
    \emptyset & \text{ if } \left|\varphi\left(h^A\right)\right|=1 \text{ or } \overrightarrow{h^A_i}\in \mu_i^\varphi\left(h^A\right) \\
    \mu_i^\varphi\left(h^A\right) & \text{otherwise}
\end{cases}
\]

The proof of Theorem \ref{thm:PickAnObjectIFFTerminalAv} involves showing that the PAO mechanism $\mathbb{S}$ above sequentializes the rule $\varphi$. The resulting PAO mechanism can also be easily explained:

\begin{itemize}
    \item Step 1: Every agent $i$ is given a menu containing $\mu_i^{\varphi}\left(h^{\emptyset}\right)$, that is, all the object types that are allotted to $i$ in some allocation produced by the rule $\varphi$.\footnote{In other words, the set of feasible assignments for $i$ under $\varphi$. In a college admissions environment, for example, that would be every college that deems $i$ acceptable.}
    \item Step $t>1$: Let $h^A$ be the collective history representing the menus and choices made by the agents up to period $t$. There are two cases.
        \begin{itemize}
            \item $\left|\varphi\left(h^A\right)\right|=1$, that is, every preference profile consistent with the collective history up to step $t$ is mapped to the same assignment by $\varphi$. In that case, by monotonic discoverability, the last object chosen by the agents is exactly that assignment, and therefore the procedure ends and agents leave with the last object they picked.
            \item Otherwise, monotonic discoverability implies $A^*\equiv \left\{i\in A:\overrightarrow{h^A_i}\not\in \mu_i^\varphi\left(h^A\right)\right\}$ is non-empty. This set contains each agent for which, for any preference profile consistent with the collective history $h^A$, the assignment given by $\varphi$ is different from her last chosen object type. Each one of these agents in $A^*$ are given a menu containing the set of object types that are still feasible for them, under $\varphi$, for the preference profiles consistent with the collective history $h^A$.
        \end{itemize}
\end{itemize}

Therefore, for a given rule $\varphi$, the canonical PAO mechanism gives a simple recipe for how the menus should be constructed. For some rules the resulting mechanism is very intuitive.\footnote{The Iterative Deferred Acceptance Mechanism in which each student submits at most one college at a time \citep{Bo2016-id}, for example, is the canonical PAO mechanism for the DA rule.} For some others, such as the simple serial dictatorship (SD), it might be less so, and alternatively formulated PAO implementations are ``simpler'' to understand (see section \ref{sec:Experiments}).

Notice that the canonical PAO mechanism implies some additional properties given in the remark below.

\begin{rem}
\label{rem:PropertiesPAOSequentialization}
For every collective history $h^A$ that results from using a canonical PAO mechanism, an object $o$ is in a menu given to an agent $a$ after $h^A$ if and only if there is a continuation history of $h^A$ for which $a$ is assigned object $o$.
\end{rem}

In other words, since menus are always subsets of previous menus and agents are assigned to the last object they chose, it must be the case that every object type that is feasible after that point must be in any menu given to agents. Moreover, canonical PAO mechanisms only present object types that are feasible---conditional on the collective history---to the agent facing it.

Another fact worth noting is that when a rule is individually rational, every menu given by the canonical PAO mechanism contains the element $\emptyset$, and whenever an agent chooses it, her assignment is determined to be the null object---that is, she is left without any object. Individual rationality, in fact, is a property intrinsically connected with the idea of sequentializing rules using PAO mechanisms. If a rule is not individually rational, this implies that agents might have to make choices from menus that include only unacceptable objects. Since their outcomes in general depend on these choices, we would face a situation in which agents have to make ``correct'' outcome-relevant choices among unacceptable objects, an arguably undesirable property for this family of mechanisms. 

Monotonic discoverability therefore gives a full characterization of the rules for which it is possible to use a PAO mechanism to ``sequentialize'' the process of obtaining information on the participants' preferences, and Proposition \ref{prop:generalizedDAMonDisc} gives us a family of mechanisms that satisfy that condition. This does not, however, guarantee that it will be in the participants' own interest to truthfully reveal their preferences. When that is not the case, then the outcomes produced by these PAO mechanisms may differ substantially, with respect to the agents' true preferences, with that determined by the rule being used. In other words, we also need to consider the implementation problem when using PAO mechanisms. 

\section{Implementation in Pick-an-Object Mechanisms}

In this section we will consider the game that is induced by PAO mechanisms, and conditions on the rules that guarantee that it is in the participants' own interest to reveal their true preferences using straightforward strategies. The equilibrium concept that we will use is that of perfect ex-post equilibrium \citep{bergemann_robust_2005,Ausubel2004-bc}. Implementation in perfect ex-post equilibrium is considered a robust implementation notion, since an agent' optimal behavior does not depend on unrealistic assumptions about their knowledge of other players. Straightforward strategies being a perfect ex-post equilibrium implies that agents choosing from menus according to their true preferences is a mutual best-response regardless of previous behaviors, and for \emph{every preference profile}. 

Let $a$ be any agent, $\mathbb{S}$ be a PAO mechanism, and $H^A_{\mathbb{S}}$ be the set of collective histories of $\mathbb{S}$.  Moreover, let $H^A_{\mathbb{S}}(a)\equiv \bigcup_{h^A\in H^A_{\mathbb{S}}} h^A_a$. That is, $H^A_{\mathbb{S}}(a)$ is the set of all choice histories that agent $a$ might experience while interacting with that mechanism. We define an agent $a$'s \textbf{strategy for $\mathbb{S}$} to be a function $\sigma_a:H^A_{\mathbb{S}}(a)\cup \{\emptyset\}\times 2^{O} \to O$, where for every $h\in H^A_{\mathbb{S}}(a)\cup \{\emptyset\}$, $\sigma_a(h,\Omega)=o$ implies that $h \oplus ( \Omega,o)\in H^A_{\mathbb{S}}(a)$. In other words, for each agent, strategies allow agents to condition their choices on past choice histories and menus that they faced.

This representation of strategies is intuitive and simple: agents can condition their choices based on their past experience while interacting with the mechanism and the menu they face. It also reflects an environment in which agents are not given any information about the other agents' choices except for the ones that the agents can infer from the mechanics of the mechanism being used and their experience using it.\footnote{For instance, when the PAO mechanism being used sequentializes a serial dictatorship, an agent can infer the choice made by those higher than her in the priority order by looking at the missing items in the menu she receives. Her strategy could, therefore, condition her choice on these agents' actions in this case.}

A \textbf{type-strategy for $\mathbb{S}$} is a function $\Sigma$ that maps each preference ranking in $\mathcal{P}$ to a strategy for $\mathbb{S}$. Define the \textbf{outcome starting from $h^A$ under $\sigma$}, $\left.\mathcal{O}\right|_{h^A}\left(\sigma\right)$, as the assignment that results from following the PAO mechanism $\mathbb{S}$ when agents follow the strategies for $\mathbb{S}$ in $\sigma$ starting from the collective history $h^A\in H^A_{\mathbb{S}}\cup\{h^{A-\emptyset}\}$. Define moreover $\left.\mathcal{O}_a\right|_{h^A}\left(\sigma\right)$ as agent $a$'s outcome under that profile.

\begin{defn}
A type-strategy profile $\left(\Sigma_a\right)_{a\in A}$ for $\mathbb{S}$ is a \textbf{perfect ex-post equilibrium} if for every $a\in A$, $h^A\in H^A_{\mathbb{S}}\cup\{h^{A-\emptyset}\}$, strategy $\sigma'_a$ for agent $a$ and $P\in \mathcal{P}^{|A|}$:

\[\left.\mathcal{O}_a\right|_{h^A}\left(\Sigma_a(P_a),\left(\Sigma_{a'}(P_{a'})\right)_{a'\in A\backslash \{a\}}\right)\  R_a\  
\left.\mathcal{O}_a\right|_{h^A}\left(\sigma'_a,\left(\Sigma_{a'}(P_{a'})\right)_{a'\in A\backslash \{a\}}\right)\]
\end{defn}

We say that an allocation rule $\varphi$ is \textbf{pick-an-object implementable in perfect ex-post equilibrium} if there exists a PAO mechanism $\mathbb{S}$ that sequentializes $\varphi$, in which straightforward strategies constitute a perfect ex-post equilibrium.

\begin{thm}\label{thm:PAOTruthfullIFSP+MD}
Let $\varphi$ be an individually rational rule. Then $\varphi$ is pick-an-object implementable in perfect ex-post equilibrium if and only if $\varphi$ is strategy-proof and satisfies monotonic discoverability. Moreover, straightforward strategies constitute a perfect ex-post equilibrium profile for every pick-an-object mechanism that sequentializes a rule that is pick-an-object implementable in perfect ex-post equilibrium.
\end{thm}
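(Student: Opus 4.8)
The plan is to prove the two implications separately and then observe that the sufficiency argument is mechanism-independent, which immediately yields the ``moreover'' clause.

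For necessity, suppose $\varphi$ is pick-an-object implementable in perfect ex-post equilibrium, witnessed by a PAO mechanism $\mathbb{S}$ in which straightforward strategies form a perfect ex-post equilibrium. Since $\mathbb{S}$ sequentializes $\varphi$, Theorem \ref{thm:PickAnObjectIFFTerminalAv} already gives that $\varphi$ satisfies monotonic discoverability, so only strategy-proofness remains. I would argue by contraposition: if $\varphi$ is not strategy-proof, fix $a$, $P$, and $P_a'$ with $\varphi_a(P_a',P^{-a}) \, P_a \, \varphi_a(P_a,P^{-a})$. Evaluating the equilibrium condition at the initial history $h^{A-\emptyset}$ and profile $P$, compare the straightforward strategy with respect to $P_a$ against the deviation given by the straightforward strategy with respect to $P_a'$: because $\mathbb{S}$ sequentializes $\varphi$, the former yields $\varphi_a(P)$ and the latter yields $\varphi_a(P_a',P^{-a})$, so the deviation is strictly profitable, contradicting the equilibrium. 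Hence $\varphi$ is strategy-proof.

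For sufficiency, assume $\varphi$ is strategy-proof and monotonically discoverable, and let $\mathbb{S}$ be \emph{any} PAO mechanism that sequentializes $\varphi$ (one exists by Theorem \ref{thm:PickAnObjectIFFTerminalAv}). The heart of the argument is a mimicking lemma: fix agent $a$, a preference profile $P$ with all other agents playing straightforwardly with respect to $P^{-a}$, and any reachable collective history $h^A$ with $h^A_a=((\Omega_1,\omega_1),\dots,(\Omega_k,\omega_k))$. I claim the set of outcomes $a$ can obtain by varying her continuation strategy equals $\{\varphi_a(Q_a,P^{-a}) : Q_a \in P(h^A_a)\}$. For ``$\subseteq$'', any play of $a$ extends $h^A_a$ to a terminal choice history whose chosen elements are, by nestedness of the menus, rationalized by a preference $Q_a$ that ranks her successive choices on top in the order made; since the menu function is deterministic, straightforward play for $(Q_a,P^{-a})$ reproduces exactly the same collective history, so by sequentialization the outcome is $\varphi_a(Q_a,P^{-a})$. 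For ``$\supseteq$'', symmetrically, for any $Q_a\in P(h^A_a)$ straightforward continuation with respect to $Q_a$ reproduces $h^A_a$ and delivers $\varphi_a(Q_a,P^{-a})$.

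Given the lemma, I would compare truthful continuation with any deviation as follows. Let $P_a'$ be the preference that promotes $\omega_1,\dots,\omega_k$ to the top in the order chosen and then follows $P_a$; one checks that straightforward play with respect to $P_a'$ reproduces $h^A_a$ and that its continuation coincides with straightforward play for $P_a$ (at every later menu, the objects $P_a$ ranks above its top of that menu, together with the already-removed $\omega_1,\dots,\omega_k$, are all absent). Thus truthful continuation yields $o^* = \varphi_a(P_a',P^{-a})$, which by strategy-proofness applied to the type $P_a'$ is maximal with respect to $P_a'$ over all of $\{\varphi_a(Q_a,P^{-a})\}$, hence over the achievable set. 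It remains to transfer this optimality from $P_a'$ to the true preference $P_a$. The key observations are that no already-rejected object $\omega_1,\dots,\omega_{k-1}$ is ever achievable (the assignment is the last choice, which comes weakly after $\omega_k$), and that $P_a$ and $P_a'$ agree on $O\setminus\{\omega_1,\dots,\omega_k\}$, the only possible discrepancy being the rank of $\omega_k$, which $P_a'$ places on top. If $o^*=\omega_k$, then $a$ receives no further menu under truthful play, and since she cannot influence the mechanism before being offered a menu, every deviation also yields $\omega_k$. If $o^*\neq\omega_k$, then $\omega_k$ cannot be achievable (it is strictly $P_a'$-preferred to $o^*$, contradicting $P_a'$-maximality of $o^*$), so every achievable outcome lies in the region where $P_a$ and $P_a'$ agree, and $o^* \, R_a \, \varphi_a(Q_a,P^{-a})$ for all achievable outcomes. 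In either case straightforward continuation is a best response at $h^A$, establishing the perfect ex-post equilibrium; since $\mathbb{S}$ was an arbitrary sequentializing mechanism, this simultaneously proves sufficiency and the ``moreover'' clause. I expect the main obstacle to be exactly this last transfer step: perfect ex-post equilibrium demands optimality at \emph{every} collective history, including off-path ones, so one cannot simply invoke strategy-proofness at the true type $P_a$; the care lies in constructing $P_a'$, showing the achievable set avoids previously rejected objects, and dispatching the boundary case $o^*=\omega_k$ via the observation that an agent cannot affect the course of the mechanism until she is actually handed a menu.
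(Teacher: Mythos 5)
Your overall architecture matches the paper's: necessity of strategy-proofness by contraposition at the initial history (a deviation to the straightforward strategy for $P_a'$ yields $\varphi_a(P_a',P^{-a})$, contradicting equilibrium), necessity of monotonic discoverability from Theorem \ref{thm:PickAnObjectIFFTerminalAv}, and sufficiency via a mimicking/rationalization lemma that expresses both the truthful and the deviating continuation outcomes as values of $\varphi$ at preferences that promote the already-chosen objects to the top. Where you genuinely diverge is the final comparison step. The paper introduces \emph{recursive dominance} and invokes its equivalence with strategy-proofness (via the monotone strategy-proofness result of \cite{haeringer_monotone_2016}), which directly delivers $\varphi_a(\gamma\oplus \left.P_a\right|_{O\backslash I},P_{-a})\ R_a\ \varphi_a(P^*,P_{-a})$ with respect to the \emph{true} preference $P_a$. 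You instead apply plain strategy-proofness at the promoted type $P_a'$ and then transfer $P_a'$-maximality of $o^*$ to $P_a$-maximality by the case analysis on whether $\omega_k$ is achievable (if $o^*=\omega_k$ the agent is never handed another menu, so the run — and hence the outcome — is strategy-independent; if $o^*\neq\omega_k$ then $\omega_k$ cannot be achievable, so all achievable outcomes lie where $P_a$ and $P_a'$ agree). This is correct and is in effect a self-contained, elementary proof of exactly the instance of recursive dominance the paper needs, so your route avoids the external citation at the cost of the explicit case analysis.

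One imprecision you should repair: your mimicking lemma fixes the \emph{other} agents' preferences at $P^{-a}$ and asserts the achievable set is $\{\varphi_a(Q_a,P^{-a}):Q_a\in P(h^A_a)\}$. Perfect ex-post equilibrium quantifies over \emph{all} collective histories in $H^A_{\mathbb{S}}$, including those in which the other agents' past choices are not rationalized by $P^{-a}$; at such a history their continuation play (straightforward with respect to $P^{-a}$ after $h^A$) combined with their recorded choices is rationalized only by some modified profile $P^{h}_{-a}$, and the achievable set is $\{\varphi_a(Q_a,P^{h}_{-a})\}$. This is exactly what the paper's Lemma \ref{lem:outcomeAfterDeviation} constructs. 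The fix is immediate — your entire argument, including the strategy-proofness step at type $P_a'$, is uniform in the others' profile — but as written the lemma does not cover the off-path histories that the equilibrium notion requires.
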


Theorem \ref{thm:PAOTruthfullIFSP+MD} says, therefore, that strategy-proofness and monotonic discoverability fully characterize the individually rational rules that can be pick-an-object implemented in perfect ex-post equilibrium. Moreover, it shows that \textit{every} PAO mechanism that sequentializes each of these rules implements them in perfect ex-post straightforward equilibria. The implementation result, therefore, does not rely on some particular PAO mechanism. The canonical PAO mechanism introduced in section \ref{subsec:canonicalPAO}, for example, can be used with that objective for any of these rules.\footnote{In a previous version of this paper, this result was shown under a different solution concept---(Robust) Ordinal Perfect Bayesian Equilibrium. In its robust form, that concept is equivalent to perfect ex-post equilibrium. \cite{Mackenzie2020-gv} presents simultaneous and independent work showing that this result holds for Menu Mechanisms, a generalization of PAO mechanisms that does not necessarily assign agents to the objects they choose.} 

Notice that the definition of perfect ex-post equilibrium points to unique actions at every decision node in the game induced by the mechanism. Therefore, in an equilibrium with straightforward strategies, players simply follow a simple and myopic truthful rule of thumb and expect that to remain the case for any deviation that they might make.

Here, it is important to note that monotonic discoverability and strategy-proofness are independent. Firstly, it is easy to see that the Boston mechanism is a generalized DA procedure (and therefore satisfies monotonic discoverability) while famously not being strategy-proof. Below, we show an example of a rule that is strategy-proof but does not satisfy monotonic discoverability:

\begin{example}[\emph{A rule that is strategy-proof but not monotonic discoverable}]
\label{example:SPbutNotMonotDisc}
 Let $O=\{o_1,o_2,o_3,o_4\}$ and $\varphi$ be defined for two agents $a_1,a_2$. Agent $a_1$ is assigned her most-preferred object, and $a_2$ is assigned her most-preferred object among the two least-preferred objects by $a_1$.
\end{example}

\subsection{Implementation via unique straightforward equilibrium}
\label{sec:uniqueEquilibrium}

While we could make the case that Theorem \ref{thm:PAOTruthfullIFSP+MD} already provides a good foundation for using PAO mechanisms when implementing rules, we will show that some rules can be implemented via a \textbf{unique} perfect ex-post equilibrium. 

The first result involves a condition on the allocation rule:

\begin{defn}
\label{defn:RecourceMonotonicity}
Let $a^*$ be an agent and $P$ be a preference profile. Let $P'_{a^*}$ be such that  $\varphi_{a^*}(P'_{a^*},P_{-a^*})=\emptyset$. If the rule $\varphi$ satisfies \textbf{resource monotonicity}, then for every $a\in A\backslash\{a^*\}$, $\varphi_{a}(P'_{a^*},P_{-a^*})\ R_a\ \varphi_{a}(P)$. 
\end{defn}

Resource monotonicity, therefore, says that if an agent changes her preferences and as a result either remains or becomes unmatched, other agents cannot be made worse off. An example of a class of rules that are resource monotonic (and strategy-proof) are generalizations of the serial dictatorship, which can be defined by a procedure in which agents take turns choosing their most preferred allocation among the set of objects not yet assigned, but (i) depending on the agent and the object, choosing an object might also remove other objects from the set of available objects,\footnote{Consider, for example, a problem in which agents are assigned software packages, but whenever an agent with a hearing disability chooses a multimedia software, a transcription software is automatically bundled in---and therefore removed from the list of remaining softwares for the other agents.} and (ii) if an agent chooses $\emptyset$, then the set of objects that remain available does not change.

This condition allows us to obtain the following result:

\begin{prop}\label{prop:PAOUniqueRM}
Let $\varphi$ be an individually rational rule that is resource monotonic, strategy-proof and satisfies monotonic discoverability. Moreover, let $\mathbb{S}$ be the canonical PAO mechanism that sequentializes $\varphi$. Agents following straightforward strategies constitute the \textbf{unique} perfect ex-post equilibrium of the game induced by $\mathbb{S}$.
\end{prop}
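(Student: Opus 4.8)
The plan is to establish that straightforward strategies are not merely \emph{an} equilibrium (which follows from Theorem \ref{thm:PAOTruthfullIFSP+MD}) but the \emph{unique} perfect ex-post equilibrium. Since perfect ex-post equilibrium requires best responses at every collective history $h^A$ and for every preference profile, I would proceed by backward-style induction on the collective histories of $\mathbb{S}$, ordered by the continuation relation, showing that at each decision node the straightforward choice is the \emph{unique} best response. The key structural fact I would lean on, coming from the canonical construction in section \ref{subsec:canonicalPAO} together with Remark \ref{rem:PropertiesPAOSequentialization}, is that every object in a menu given to agent $a$ after $h^A$ is feasible---there is a continuation history under which $a$ is assigned exactly that object---so distinct choices from a menu can lead to genuinely distinct assignments. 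The heart of the uniqueness argument is that in the canonical PAO mechanism ``every choice might be the last one,'' so each menu offers a real possibility of terminating with the chosen object as the final assignment.

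First I would fix an agent $a$ with true preference $P_a$ facing a menu $\mathbb{S}^a(h^A)=\mu_a^\varphi(h^A)$, and let $o^* $ be $a$'s most-preferred element of that menu. I would exhibit, using the feasibility guarantee above, a preference profile $P_{-a}$ for the other agents---consistent with $h^A$---under which every other agent's straightforward continuation play leads the mechanism to terminate immediately after $a$'s choice, so that $a$'s assignment equals whatever she picks. Against such $P_{-a}$, picking any $o\neq o^*$ yields $o$, which is strictly worse than $o^*$, so no strategy that ever deviates from the straightforward choice can be a best response for all profiles. The construction of this ``terminating'' profile is where I expect to use resource monotonicity: I would take the other agents' preferences so that each is content to accept her current holding (or the null object), using resource monotonicity to guarantee that making one agent effectively drop out---by assigning her $\emptyset$---does not force a cascade of rejections that would keep the procedure running and thereby sever the link between $a$'s pick and $a$'s final assignment.

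The main obstacle, and the step I would treat most carefully, is precisely this construction of a continuation profile in $P(h^A)$ that (i) is consistent with the collective history so far, and (ii) causes the mechanism to halt right after $a$'s current choice, pinning $a$'s outcome to that choice. Consistency with $h^A$ constrains the other agents' revealed rankings above their last picks, so I cannot choose $P_{-a}$ freely; I must fill in the lower contour sets---exactly the freedom that $\mathcal{L}(P,\mu)$ and monotonic discoverability describe---so that no other agent is rejected at the next step. Resource monotonicity is the tool that makes this robust: by driving the relevant agents to be unmatched or to accept their holdings, it ensures the induced tentative allocation is terminal under $\varphi$, so $|\varphi(h^{A}\oplus(\text{$a$'s pick}))|=1$ and the canonical mechanism stops. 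Once this ``every pick could be terminal'' property is secured at an arbitrary node, uniqueness follows: any equilibrium strategy for $a$ must, at every menu, choose the $P_a$-best element, which is exactly the straightforward strategy, and since $a$ was arbitrary this forces the entire profile.

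A secondary point I would address is the base and propagation of the induction. Histories at which $|\varphi(h^A)|=1$ are terminal and impose no choice, so the claim is vacuous there; for non-terminal histories the single-node argument above does not by itself require the inductive hypothesis, because the terminating-profile construction makes the current pick outcome-determining \emph{regardless} of continuation play. Thus I expect the uniqueness argument to be essentially node-by-node rather than genuinely recursive, with the inductive framing serving only to range over all collective histories of $\mathbb{S}$. I would close by noting that combining this with Theorem \ref{thm:PAOTruthfullIFSP+MD}---which guarantees straightforward strategies \emph{are} a perfect ex-post equilibrium---yields that they are the unique such equilibrium.
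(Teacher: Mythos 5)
Your core idea matches the paper's: construct a continuation profile for the other agents, consistent with $h^A$, under which every object still on offer is unacceptable to them, and then combine individual rationality, strategy-proofness and resource monotonicity to show that agent $a$ is assigned exactly whatever she picks (the paper isolates this as a lemma: if $a^*$ picks $o$ from her menu and every other choice made in that period or any later period is $\emptyset$, then $\varphi$ assigns $o$ to $a^*$ under any consistent profile). The gap is in your claim that the argument is essentially node-by-node and that the terminating profile makes $a$'s pick outcome-determining \emph{regardless of continuation play}. It does not: the mechanism reacts to the other agents' \emph{choices}, not to their preferences, so if some other agent whose type deems everything remaining unacceptable nonetheless picks a non-null object at period $t$ or later, the procedure keeps running and may reject $a$'s pick. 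Since the whole point of the proposition is to prove that equilibrium strategies are straightforward, you cannot assume at the node in question that the others play straightforwardly.

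The paper closes this in two steps that your proposal skips. First, a genuine backward induction on the finite horizon of $\mathbb{S}$ establishes that all equilibrium choices made \emph{after} period $t$ are straightforward, so under the all-unacceptable profile those agents do choose $\emptyset$ in later periods; your induction is therefore load-bearing, not merely a device for ranging over histories. Second, even with that hypothesis in hand, agents other than $a^*$ who act \emph{at period $t$ itself} are not covered by it: in equilibrium they are only forced to end up unmatched, not to pick $\emptyset$ immediately, so their period-$t$ picks may be non-straightforward. The paper handles this by swapping each such agent's preference back to her actual (non-degenerate) one, one agent at a time, invoking resource monotonicity at each swap to show that $a^*$'s assignment is unchanged because the swapped agent remains unmatched before and after. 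Without these two steps the property that ``every pick could be the last one'' is not actually secured at the node you are analyzing, and the strict-improvement argument for deviating to $o^*$ does not go through.
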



The role that resource monotonicity has in Proposition \ref{prop:PAOUniqueRM} is guaranteeing that if, starting at some period, other agents follow strategies that leave them unmatched, this cannot make an object unavailable for someone else. This implies that at any step of a PAO mechanism, for each agent, there is a preference profile for the other players for which any choice she makes is final.

Next, we show that when implementing the Gale-Shapley DA rule via the canonical PAO mechanism, straightforward strategies constitute the unique ex-post equilibrium.\footnote{Note that the Gale-Shapley DA rule is not resource monotonic.}

\begin{prop}\label{prop:DAUnique}
Let $\varphi$ be the Gale-Shapley DA rule for an arbitrary list of capacities and priorities for the colleges. Moreover, let $\mathbb{S}$ be the canonical PAO mechanism that sequentializes $\varphi$. Agents following straightforward strategies constitute the \textbf{unique} perfect ex-post equilibrium of the game induced by $\mathbb{S}$.
\end{prop}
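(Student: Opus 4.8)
The plan is to establish uniqueness; existence of the straightforward perfect ex-post equilibrium already follows from Theorem \ref{thm:PAOTruthfullIFSP+MD}, since DA is strategy-proof and, by Proposition \ref{prop:generalizedDAMonDisc}, satisfies monotonic discoverability. So I would fix an arbitrary perfect ex-post equilibrium $(\Sigma_a)_{a\in A}$ of the game induced by the canonical $\mathbb{S}$ and show that each $\Sigma_a(P_a)$ is straightforward. I would argue node by node, by backward induction on the (finite) number of choices that can still be made, the inductive hypothesis being that at every later period every agent picks the best available object in her menu. The only subtlety across the simultaneous-move structure is that an agent does not observe the contemporaneous choices of the others, so the others' current choices are unaffected by $a$'s; together with the hypothesis that they behave straightforwardly afterwards, this means that against the others' strategies $a$ is effectively choosing a reported ranking through her sequence of picks, and—because $\mathbb{S}$ sequentializes DA—her resulting assignment equals $\varphi_a(\tilde P_a, Q_{-a})$, where $\tilde P_a$ is the ranking revealed by her picks and $Q_{-a}$ is the (fixed) profile revealed by the others. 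Strategy-proofness of DA then makes reporting truthfully—i.e. picking the $P_a$-best object $o^*$ of the current menu $\Omega=\mathbb{S}^a(h^A)$—a best response, recovering weak optimality of straightforwardness.

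The work is to upgrade this to strict optimality, which is what pins the equilibrium down uniquely. For this I would prove the following finality lemma: at every reachable collective history $h^A$, for the agent $a$ facing menu $\Omega=\mu_a^\varphi(h^A)$, there is a preference profile $\bar P_{-a}$ of the others, consistent with $h^A$, under which every object in $\Omega$ is simultaneously attainable for $a$—so that if the others play straightforwardly according to $\bar P_{-a}$ and $a$ picks any $o\in\Omega$, she keeps $o$ as her final assignment. Granting the lemma, suppose $\Sigma_a(P_a)$ picked some $o\neq o^*$ at this node. Against $\Sigma_{-a}(\bar P_{-a})$ the deviation ``pick $o^*$'' delivers $o^*$, whereas following the equilibrium delivers $o$; since $o^*\,P_a\,o$, this contradicts the ex-post inequality at $h^A$. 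Hence $\Sigma_a(P_a)$ must pick $o^*$ at every node, i.e. the straightforward profile is the unique equilibrium.

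To see why attainability yields finality, I would use that feasible sets shrink monotonically along continuation histories, so once $a$ has picked $o$ her menu reopens only if $o$ ever leaves $\mu_a^\varphi$; it therefore suffices that some profile consistent with every resulting history assigns $a$ to $o$ under DA. The candidate is $(\widehat P_a,\bar P_{-a})$, where $\widehat P_a$ lists $a$'s forced earlier picks on top (in order), then $o$, then the remaining objects; this is consistent with $a$'s full history, and I would show it matches $a$ to $o$, so that $o$ never exits $\mu_a^\varphi$ and the menu stays empty.

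The heart of the proof—and the step I expect to be the main obstacle—is the attainability claim, because DA is not resource monotone and so the argument behind Proposition \ref{prop:PAOUniqueRM} (letting the others make themselves unmatched) is unavailable: the others' revealed histories force them to keep competing for the objects they have already claimed. The substitute is a comparative-statics property of student-proposing deferred acceptance: truncating the other agents' reported lists—exactly what passing from an arbitrary consistent $P_{-a}^0$ to $\bar P_{-a}$ (report the forced revealed segment, then declare all remaining objects unacceptable) accomplishes—can only enlarge the set of colleges a fixed student can attain. Since each $o\in\Omega=\mu_a^\varphi(h^A)$ is attainable under some consistent profile by the definition of the feasible set, this property makes every such $o$ attainable under the single maximally-truncated profile $\bar P_{-a}$; and because $a$'s earlier picks had already left her feasible set, no consistent report recovers them against $\bar P_{-a}$, so deferred acceptance on $(\widehat P_a,\bar P_{-a})$ rejects $a$ from each higher-listed earlier pick and settles her exactly at $o$. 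Making this comparative-statics step rigorous—tracking the rejection chains to confirm that truncation never deprives $a$ of an attainable college, and that this holds for every object of the menu at once—is where the bulk of the careful argument lies.
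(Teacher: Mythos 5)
Your overall architecture coincides with the paper's: backward induction on the periods of the mechanism, and at the deviation node the adversarial profile $\bar P_{-a}$ you construct (each other agent ranks her revealed picks in order and declares everything else unacceptable) is exactly the profile $P^{t*}$ used in the paper's joint proof of Propositions \ref{prop:PAOUniqueRM} and \ref{prop:DAUnique}. The paper obtains your ``finality'' conclusion not through a truncation comparative static but through a direct observation about the canonical PAO mechanism for DA (which operates as the iterative DA of \citealp{Bo2016-id}): a sole applicant to a college that is still feasible for her is tentatively accepted and, if nobody applies there afterwards, keeps it. That route is more elementary than tracking rejection chains under truncation; note also that you leave your attainability lemma unproven while yourself flagging it as the main obstacle.

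The genuine gap is in how you use the lemma. Your finality lemma is stated for the others playing \emph{straightforwardly} with respect to $\bar P_{-a}$, but you then apply it against $\Sigma_{-a}(\bar P_{-a})$, the others' \emph{equilibrium} strategies. Backward induction only constrains the others at periods strictly after $t$; at period $t$ itself several agents in $A^*$ move simultaneously with $a$, and their equilibrium actions under $\bar P_{-a}$ are not yet pinned down. A contemporaneous agent $b$ may apply to the very object $a$ picked and displace her, and $b$ can herself end up rejected and unmatched, so her own ex-post optimality under $\bar P_b$ does not immediately rule this out. Hence ``the deviation pick $o^*$ delivers $o^*$'' does not follow from the lemma as stated. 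The paper closes this hole with an explicit case analysis: any contemporaneous agent who ends up holding a non-null object is matched to something unacceptable under $\bar P$ and would profitably switch to $\emptyset$; if two or more agents pick some object at $t$ and are not all rejected, at least one of them keeps it; the only surviving configuration is therefore that all non-null contemporaneous picks coincide with $a^*$'s pick $o'$ and are all rejected, in which case $a^*$ still holds $o'$ and would have been the sole (and hence successful) applicant to $o^*$ had she picked it. You need this, or an equivalent argument about the simultaneous movers, for the uniqueness step to go through.
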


While uniqueness of equilibrium outcomes---i.e., full implementation---is a common objective that is achievable in some settings and for some rules, uniqueness of equilibrium strategies is less often so. It is worth comparing Proposition \ref{prop:DAUnique} with related results in the literature. \cite{Kawase2019-gf} show that, in a perfect information PAO implementation of Gale-Shapley DA, the student-optimal stable matching is the unique subgame-perfect Nash equilibrium outcome. \cite{Bo2016-id} also show that, under an additional behavioral assumption, straightforward strategies constitute a unique ex-post equilibrium for the Iterative Deferred Acceptance mechanism (a PAO implementation of Gale-Shapley DA). \cite{Mackenzie2020-gv} show that the Gale-Shapley DA rule can be ex-post ``everywhere-dominant'' (but not uniquely) implemented via a menu mechanism (which is also, in that case, a pick-an-object mechanism) in which only previous choices are removed from the participants' menus.

\subsection{Relation with Obviously Strategy-proof Mechanisms}\label{subsec:relationWithOSP}

When introducing the concept of obvious strategy-proofness (OSP), \cite{Li2017-oa} defined an OSP mechanism as one that has an equilibrium in obviously dominant strategies. A rule $\varphi$ is OSP-implementable if there is a game and an obviously dominant strategy for each type of player in that game, such that the outcome produced by this strategy for each type profile is what is determined by $\varphi$. \cite{Pycia2019-vl} showed that, in an object allocation environment such as the one that we use, every OSP-implementable rule could be implemented via a \textit{millipede game} with a \textit{greedy strategy}. Millipede games are sequential games where, in each period, some agent can either ``pass'' or ``clinch'' one of potentially multiple options in a menu which corresponds to private allocations that they can guarantee (in our setup, therefore, that agent can clinch an object and leave with it). A greedy strategy consists of an agent choosing to ``pass'' as long as her most-preferred object can still be clinched from a menu in some continuation history, and clinching it whenever it is in a given menu.\footnote{While each option in a menu given to a player in a millipede game corresponds to a private allocation for that player, there might be multiple options in that menu associated with the same allocation. While every option associated with an allocation results in the same outcome for the player making that choice, different choices among these might result in different outcomes for other players. Our results will consider only situations in which that is not the case: different items in a menu correspond to different private allocations for the agent making the choice. Therefore, we will ignore the possibility that they do not.}

The first thing to note is that there is a direct relation between a greedy strategy in a millipede game and a straightforward strategy in a PAO mechanism. In a millipede game, every time an agent interacts with the mechanism she is given a menu of objects that she can pick and keep for good and, potentially, also one option to ``pass.'' That is, as in PAO mechanisms, in millipede games an agent is always matched to the last object that she chose. However, in a millipede game, an agent can only choose an object once, and it remains her final allocation. In contrast, in a PAO mechanism, a chosen object can be interpreted as her ``tentative allocation,'' and thus she can potentially choose multiple times. The authors show that the greedy strategy is obviously dominant in that game. 

As long as an agent can infer the set of clinchable objects in continuation histories, following the greedy strategy is very simple: the only object she chooses from a menu is her most preferred in the set of feasible objects. Otherwise, she simply passes.\footnote{\cite{Pycia2019-vl}  question, however, the simplicity of inferring the set of clinchable objects in continuation histories, as it requires foresight from agents. They then introduce the concept of strong obvious strategy-proofness. In a strong OSP mechanism, the agents face a menu to choose from only once, and there is no passing option. The obviously dominant strategy in these games thus simply requires choosing the best object from the single menu that is offered.}

In a canonical PAO mechanism, on the other hand, menus always contain the entire set of feasible objects. Therefore, an agent following a straightforward strategy chooses the most-preferred object among those that are feasible, and will only have to make another choice once, and then only if that object is no longer feasible. Notice that, as opposed to millipede games, in these PAO mechanisms agents do not have to infer the set of feasible objects: all feasible objects are offered in the menus. Our next result relies on the common property of non-bossiness, defined below.

\begin{defn}
\label{defn:non-bossiness}
A rule $\varphi$ is \textbf{non-bossy} if, for every $a\in A$, if $\varphi_a\left(P_a,P_{-a}\right)=\varphi_a\left(P_a',P_{-a}\right)$, then $\varphi\left(P_a,P_{-a}\right)=\varphi\left(P_a',P_{-a}\right)$.
\end{defn}

If a rule is non-bossy and OSP implementable, we obtain the following relation with PAO mechanisms:\footnote{In their Supplementary Appendix, \cite{pycia2017incentive} provide a definition of their trading cycle mechanism that is a generalized DA procedure. Since trading cycle mechanisms characterize strategy-proof and non-bossy mechanisms, Theorem \ref{thm:PAOTruthfullIFSP+MD} together with their characterization implies that strategy-proof and non-bossy mechanisms are implementable in perfect ex-post equilibrium via PAO mechanisms. Proposition \ref{prop:OSPImplementableViaPAO} strengthens that result for implementation in weakly dominant strategies when the rule is also OSP implementable.}

\begin{prop}\label{prop:OSPImplementableViaPAO}
Every non-bossy OSP implementable rule is PAO implementable in weakly dominant strategies.
\end{prop}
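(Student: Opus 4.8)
The plan is to prove Proposition \ref{prop:OSPImplementableViaPAO} by connecting the two implementation notions through the property of strategy-proofness, which is implied by OSP-implementability. First I would recall that by \cite{Li2017-oa}, obvious dominance implies weak dominance, so any OSP-implementable rule is strategy-proof. The key additional hypothesis is non-bossiness. My strategy is to leverage Theorem \ref{thm:PAOTruthfullIFSP+MD}: if I can establish that a non-bossy OSP-implementable rule $\varphi$ is both strategy-proof and satisfies monotonic discoverability (and is individually rational), then that theorem already gives me perfect ex-post implementation via a PAO mechanism. The task then reduces to (i) verifying monotonic discoverability and individual rationality, and (ii) \emph{strengthening} the perfect ex-post equilibrium guarantee to the stronger conclusion of implementation in \emph{weakly dominant} strategies.

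For step (i), individual rationality should follow from the structure of OSP/millipede games, since in such games an agent always has the option to clinch an acceptable outcome, so $\varphi$ never assigns an unacceptable object. For monotonic discoverability, I would use the millipede-game characterization of \cite{Pycia2019-vl}: because every OSP-implementable rule is realized by a millipede game with a greedy strategy, I can read off the required discoverability structure from the game tree. Concretely, given a profile $P$ and an allocation $\mu$, I would argue that if $\varphi(P)\neq\mu$, then somewhere along the play of the millipede game under $\mu$-consistent behavior some agent's top feasible option gets ruled out, yielding an agent $a^*$ whose assignment differs from $\mu(a^*)$ across the entire lower contour set $\mathcal{L}(P,\mu)$. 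Non-bossiness is what lets me pin down the \emph{full} allocation (not just $a^*$'s component) from each agent's clinching behavior, so that the information revealed through clinching is enough to single out $\varphi$'s outcome.

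For step (ii), the heart of the argument is upgrading from perfect ex-post to weakly dominant. Here I would exploit the direct correspondence described in the text between a greedy strategy in the millipede game and a straightforward strategy in the PAO mechanism: in both, an agent never chooses an object twice, is always assigned her last choice, and—under the truthful strategy—clinches/picks her most-preferred feasible object. Since the greedy strategy is obviously dominant (hence weakly dominant) in the millipede game, and the straightforward strategy in the canonical (or any sequentializing) PAO mechanism induces, via non-bossiness, outcomes for agent $a$ that coincide with those she would obtain by greedy play against every contingency of the others, I can transfer weak dominance across the two games. The argument is essentially that agent $a$'s achievable outcome set, as a function of the others' reports, is the same in the two mechanisms, and her best response in the millipede game is truthful; non-bossiness ensures that varying $a$'s behavior does not secretly change the menu-generating information in a way that would give her a profitable deviation in the PAO game that was unavailable in the millipede game.

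The main obstacle I expect is step (ii): carefully establishing that weak dominance transfers, because the PAO mechanism and the millipede game are \emph{not} the same game—in the PAO mechanism an agent may be called to choose several times and her ``tentative'' choices can be overturned, whereas in the millipede game each choice is once-and-for-all. The delicate point is showing that for \emph{any} strategy of agent $a$ in the PAO game (which may condition on the richer sequence of menus she sees) there is no outcome she can secure that strictly beats her straightforward-strategy outcome, uniformly over the others' behavior. I would handle this by showing that the set of outcomes agent $a$ can guarantee against a fixed behavior of the others is exactly the set of feasible objects revealed through the menus, that this set is monotone along the play, and that non-bossiness guarantees the menu sequence $a$ faces depends on the others only through information that greedy-versus-straightforward equivalence already controls. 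Getting the quantifiers right—"for every deviation, against every profile of the others"—and ruling out subtle gains from an agent's early non-greedy picks affecting later menus is where the real care is needed.
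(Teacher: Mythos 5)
You correctly identify both the key idea (transfer weak dominance from the greedy strategy in the millipede game to the straightforward strategy in a PAO mechanism) and the key difficulty (the two games are not the same, and a PAO deviation might exploit the richer menu dynamics). But your step (ii) has a genuine gap: you never say \emph{which} PAO mechanism you are working with, and the transfer argument does not go through for ``the canonical (or any sequentializing) PAO mechanism'' as you suggest. What the paper actually does is construct a \emph{specific} menu function $\mathbb{S}$ that acts as an interface to the millipede game $\Gamma$: each collective history of the PAO mechanism is translated, by an explicit simulation procedure, into a node of $\Gamma$ (replay the recorded choices, choosing an object when it appears in $\phi(h)$ and passing otherwise), and the menus offered are the clinchable sets $C_a(h)$ at the resulting node. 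This construction is what makes your claim that ``agent $a$'s achievable outcome set is the same in the two mechanisms'' true and checkable: \emph{every} PAO strategy of $a$, including arbitrary deviations, induces some strategy in $\Gamma$ with the identical outcome, and all of these are weakly dominated by the greedy strategy. Without that explicit embedding, the quantifier ``for every deviation, against every profile of the others' behavior'' is exactly where your argument stalls, as you yourself anticipate. Relatedly, your detour through Theorem \ref{thm:PAOTruthfullIFSP+MD} (verifying monotonic discoverability and individual rationality) is unnecessary for this proposition and in any case only delivers perfect ex-post equilibrium, which quantifies over others playing straightforwardly under \emph{some} preference profile --- it cannot be upgraded to weak dominance without the interface construction, so step (i) does no real work toward the stated conclusion.

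A second, smaller issue is the role you assign to non-bossiness. You use it to ``pin down the full allocation from each agent's clinching behavior,'' but in the paper it serves a more technical and essential purpose: a millipede-game menu may contain several distinct items all yielding the same private allocation for the chooser, and which item the greedy strategy selects can depend on the chooser's full preference while affecting \emph{other} agents' outcomes. Non-bossiness collapses these items into a single object, so that menus can be taken as subsets of $O\cup\{pass\}$ and the translation from a PAO choice (which only records an object) to a move in $\Gamma$ is well defined. Without this, your correspondence between straightforward and greedy play is ambiguous before the dominance comparison even starts.
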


One could make the argument that these equilibrium strategies in OSP mechanisms are ``simple and intuitive'': choose the most preferred object available, whenever that is possible. Given this, one possible explanation for the high rate of behavior in line with its prediction that OSP mechanisms have exhibited in previous laboratory experiments are the mechanics of \textit{picking your favorite object} in the equilibrium strategies, as opposed to the more general properties of obviously dominant strategies, in which less counterfactual reasoning is required to identify them.

If these simple mechanics explain at least in part this ``better'' behavior, then straightforward strategies in PAO mechanisms---which share these simple mechanics---could also lead agents to behave more in line with its theoretical prediction.

\section{Experiments\label{sec:Experiments}}
In this section, we present experiments designed to test the performance of PAO mechanisms for implementing different allocation rules, when compared to the traditional direct revelation mechanisms and OSP mechanisms, when possible. We chose two rules: top trading cycles (TTC) and serial dictatorship (SD). Ideally, we would have liked to compare implementing multiple allocation rules via PAO versus multiple alternative mechanisms. Our choice of using SD and TTC is driven mainly by the importance of both rules for practical applications. Another contender could be DA, but the comparison of the PAO implementation of DA versus its direct counterpart was made in two recent papers \cite{Bo2020-jm} and \cite{Klijn2019-bq}. Both led to a similar conclusion, reporting the better performance of the PAO mechanism, with respect to the truth-telling rates and stability of the allocations, when compared to the direct DA mechanism. 
Additionally, our choice of allocation rules was driven by the possibility of the OSP implementation of both rules (in the case of acyclic priorities for TTC). This allows us to compare PAO versus OSP implementations in a restricted setup.

\subsection{Mechanisms}
\label{subsec:ExperimentsMechanisms}

In this subsection, we describe the six mechanisms used in the experiments. For each allocation rule, namely TTC and SD, we use three different mechanisms: Direct,\footnote{We capitalize "Direct" when referring to the treatment name.} PAO, and OSP. For the PAO mechanisms, we used the canonical mechanism induced by these rules (see section \ref{subsec:canonicalPAO}). The mechanisms correspond to treatments in the experiments. Note that we describe the mechanisms in the same way they were described to the participants. We omit the description of actions when some or all objects are not acceptable for simplicity and to prevent confusion. The reason is that all objects lead to a positive payoff for the subjects in the experiments. Moreover, in the experiments, subjects had to choose at least one object in all sequential mechanisms and list all objects in the rank-order lists in the direct mechanisms. Details of the information given to the participants in the experiment are shown in section \ref{subsec:ExperimentsDesign}.

\vspace{0.5cm}

\textbf{Direct TTC}

Every participant submits her rank-ordered list of objects to a central authority. The following steps are executed by the central authority, without any further participation from the subjects.

\begin{itemize}
\item Step 1.1 All participants point to the object at the top of their submitted rank-ordered lists. Each object points to the participant with the highest priority at that object.
\begin{itemize}
\item The mechanism looks for cycles. There is at least one cycle. All participants in the cycle are assigned the object they pointed to.
\end{itemize}

\item Step 1.2 The priorities of the objects are updated to account for assigned participants. Submitted rank-ordered lists are updated to account for assigned objects. Steps 1--1.2 are repeated until all objects are assigned.
\end{itemize}
Direct TTC is strategy-proof, and Pareto efficient \citep{Abdulkadiroglu2003-dx}.
\vspace{0.5cm}

\textbf{PAO TTC}

All participants are asked to pick one object from a menu with all objects.

\begin{itemize}
\item Step 1.1. All participants point to the object they picked. All objects point to the participant with the highest priority at that object.
\begin{itemize}
\item The mechanism looks for cycles. There is at least one cycle. All participants in the cycle receive the object they pointed to. 
\end{itemize}
\item Step 1.2 For the remaining participants, if their last picked object was already assigned, the participant is asked to choose a new object from a menu of remaining objects.  Steps 1--1.2 are repeated until all the objects are assigned.
\end{itemize}

In PAO-TTC, every participant following the straightforward strategy is a perfect ex-post equilibrium (see Theorem \ref{thm:PAOTruthfullIFSP+MD}). If all participants follow straightforward strategies, the allocation is Pareto efficient.

\vspace{0.5cm}

\textbf{OSP TTC}

Note that OSP TTC is only defined when the priorities of the objects are acyclic. We use the mechanism described in \citet{Troyan2019-ah}.
\begin{itemize}
\item Step 1.0. The mechanism first tentatively assigns each object to the participant with the highest priority at that object (i.e., the participant \textit{tentatively owns} the object). 
\item Step 1.1. One by one each participant who tentatively owns an object is asked whether one of the objects that she owns is her favorite object. There are two possible answers for each participant and object:
\begin{itemize}
    \item If "Yes," the corresponding participant receives the object. Go to Step 1.2.
    \item If "No," she is asked about the next object among the ones she tentatively owns. If the participant answers “No” to all tentatively owned objects, the algorithm moves to the next participant who owns at least one object.  If all participants who tentatively own at least one object say "No" to all owned objects, then each participant who tentatively owns at least one object is asked to pick one object among the objects she does not own.
    \begin{itemize}
        \item All participants point to the object they picked. Each object points to its owner. The mechanism looks for cycles. There is at least one cycle. All participants in the cycle receive the object they picked.
    \end{itemize}
  
\end{itemize}
\item Step 1.2 The priorities of the objects are updated to account for the participants who left.  Steps 1.1--1.2 are repeated until all objects are assigned.
\end{itemize}

In OSP TTC, the truthful strategy\footnote{More specifically, the truthful strategy consists of only saying "Yes" when the object is the most-preferred among the objects that are still available, and picking the most-preferred object when asked to pick from objects that she does not own.} is obviously dominant, leading to a Pareto efficient allocation.
\vspace{0.5cm}

Note that in mechanisms implementing the SD allocation rule, we use priority scores instead of using ordinal tables priorities. In the experiments, each subject knew her score, and knew that the higher the score, the higher her priority. Details are explained in the experimental design section.

\vspace{0.5cm}

\textbf{Direct SD}

Every participant submits her rank-ordered list of objects to a central authority. The following steps are executed by the central authority, without any further input from the subjects.

\begin{itemize}
\item Step 1. The participant with the highest priority score is assigned the top-ranked object on her list.
\item Step 2. The participant with the second-highest priority score is assigned the top object on her list, among the remaining objects.
....
\item Step N. The participant with the Nth-highest priority score is assigned whatever object remains.
\end{itemize}

Direct SD is Pareto efficient and strategy-proof.
\vspace{0.5cm}

\textbf{PAO SD}

\begin{itemize}
\item Step 1. All participants are asked to pick one object from a menu with all objects. 
\item Step 2. The participant with the highest priority score is assigned the object she picked. All other participants who chose this object are asked to pick a new object from a menu containing the remaining objects. 

\item Step 3. The participant with the second-highest priority score is assigned the last object she picked. All other participants with lower priority who chose this object are asked to pick a new object from the menu containing the remaining objects. 
....
\item Step N+1. The participant with the lowest priority score is assigned the last object she picked. 
\end{itemize}

In PAO-SD, every player using the straightforward strategy is a perfect ex-post equilibrium (see Theorem \ref{thm:PAOTruthfullIFSP+MD}). If all participants follow straightforward strategies, the allocation is Pareto efficient.
\vspace{0.5cm}

\textbf{OSP SD}
\begin{itemize}
\item Step 1. The participant with the highest priority is asked to pick an object and she is assigned the object she picked.

\item Step 2. The participant with the second-highest priority is asked to pick an object among the remaining ones, and she is assigned the object she picked.
....
\item Step N. The participant with the lowest priority is assigned to the last remaining object.
\end{itemize}

In OSP SD, straightforward strategies are strongly obviously dominant, leading to a Pareto efficient allocation \citep{Pycia2019-vl}.

\subsection{Experimental design}
\label{subsec:ExperimentsDesign}

In the experiment, there were eight objects and eight participants. In all treatments, participants received 22 euros if they were matched to their most-preferred object, 19 euros for their second most-preferred object, 16 euros for their third most-preferred object, and so on. Participants received 1 euro if they were matched to their least-preferred object.

Each session lasted for 21 rounds. At the end of the experiment, one round was randomly drawn to determine the participants' payoffs.

Each round represented a new market. The preferences used in each market were generated following the designed market idea of \citet{Chen2006-lv}. For each market, agents' ordinal preferences were generated from cardinal utilities by adding, for each object, a common and an idiosyncratic value. The common values for each object were drawn from the uniform distribution with the range $[0,40]$. The idiosyncratic values were drawn, for each object and each agent, from the uniform distribution with the range $[0,20]$. The agent's utility from being matched to the object was the sum of both components. The resulting utilities were transformed into ordinal preferences. The procedure above ensures some correlation between participants' preferences. All objects had priorities over participants. The priorities were independently drawn from uniform distributions, in each round.\footnote{For rounds with acyclic priorities, priorities were drawn from the set of acyclic priorities. A new draw of acyclic priorities was generated for each round.}

We ran three treatments between-subjects: Direct, PAO, and OSP. Comparing Direct and PAO is our main focus. We additionally ran OSP, which is only available in SD and for a restricted set of environments in TTC -- namely those with acyclic priorities -- to disentangle the effect of the sequential pick-an-object setup from the effect of the simpler strategic setup of OSP mechanisms. To have a comparison of the treatments under a variety of parameters, including different allocation rules, we ran three environments within-subjects: TTC with cyclic priorities, TTC with acyclic priorities, and SD. 

Note that, with respect to the TTC rule, general priorities (cyclic) is the most applicable environment, and therefore the main focus of our analysis for that rule. TTC with acyclic priorities is a simpler decision environment as the acyclic priorities ensure that, at any time of the mechanism, only two participants could be at the top of the priorities of all objects.\footnote{TTC cycles, therefore, could involve only one or two agents.} TTC with acyclic priorities is implementable through an OSP mechanism, thus allowing us to compare it to a PAO implementation of that rule. SD is arguably the simplest allocation rule and also allows implementation through the OSP mechanism, which also coincides with the setup of sequentially making choices from a menu. Unlike the general PAO mechanism, it does not require the simultaneous play of participants at the first step.

For the first 14 rounds of the experiment, participants were matched using the TTC rule. During the first seven of these rounds, they faced markets with cyclic priorities in the Direct and PAO treatments. Given the importance of this comparison, and to prevent learning effects from other environments implemented within-subjects, we always run cyclic TTC in the first seven rounds. The same markets were used for Direct and PAO treatments. Because there is no OSP mechanism for TTC with cyclic priorities, we also generated acyclic priorities for the first seven rounds of the OSP treatment. For rounds eight to 14,  TTC with acyclic priorities was used to match participants to objects. The same markets were used in all three between-subjects treatments.

In the first 14 rounds, participants observed the full priority tables of all objects. The provision of priority tables is necessary, as it allows participants to see when they are acyclic in the OSP TTC. Participants knew only their own preferences, however, and not the preferences of other participants. They were aware that other participants might have the same or different preferences. We chose this informational environment to simplify the processing of market information, as providing complete information would lead to a longer decision time every round.

Finally, the SD allocation rule was used for the last seven rounds (rounds 15 to 21). Before round 15, subjects were informed about the switch of the mechanism. In the last seven rounds, the participants were assigned a priority score instead of observing the priority table of objects. They knew that the higher the score, the higher the priority. Participants were also informed that the priority scores would be drawn for each subject and each round independently from a uniform distribution with the range $[1,100]$ and each participant was informed of their own draw. This choice was made to ensure that even the participants with the lowest score had incentives to play a meaningful strategy. Otherwise, they would know that their choices were irrelevant for the allocation. As for preferences, just like in the first 14 rounds, participants only knew their own preferences and were informed that other participants might have different preferences. 

\begin{table}[]
    \centering
  \begin{tabular}{|c|c|c|c|}
\hline 
\textbf{Rounds} & \textbf{Direct} & \textbf{PAO} & \textbf{OSP}\tabularnewline
\hline
1-7 & TTC cyc & TTC cyc & TTC acyc\tabularnewline
\hline 
8-14 & TTC acyc & TTC acyc & TTC acyc\tabularnewline
\hline 
15-21 & SD & SD & SD\tabularnewline
\hline 
\end{tabular}
    \caption{Summary of treatments}
    \label{tab:treatm}
\end{table}

Table \ref{tab:treatm} presents the summary of the experimental design. Each cell of the table represents the mechanism used. Note that we explained to each subject two mechanisms to compare treatments under more than one allocation rule. It was feasible, as all SD mechanisms are quite simple and straightforward to explain; thus, we decided to run it within-subjects together with the TTC mechanisms. After each round, the participants learned the object they were matched to, but not the matches of the other participants.

The within-subjects design is driven by practical considerations and the goal of comparing treatments under various environments. We did not randomize the order of environments intentionally because we aimed at having high statistical power for the main comparison of the paper---PAO versus Direct under general priorities of TTC, given the number of participants. This environment was always run first and thus represents a clean comparison. The comparison between treatments under other environments might be affected by subjects' experience in the previous rounds. If subjects are more likely to play truthfully in the next round, given the truthful play in the previous round using the same kind of mechanism, it is possible that the design biases the results ``against'' the PAO treatment and ``in favor'' of the OSP treatment, since players in the OSP treatment play OSP TTC (with acyclic priorities) for 14 rounds, while in the PAO treatment, there were seven rounds with acyclic priorities, followed by seven rounds with cyclic priorities.

The experiment was run at the experimental economics lab at the Technical University of Berlin. We recruited student subjects from our pool with the help of ORSEE \citep{Greiner2015-ey}. The experiments were programmed in z-Tree \citep{Fischbacher2007-de}. Independent sessions were carried out for each of the three between-subjects treatments. Each session consisted of either 16 or 24 participants that were split into two or three groups of eight for the entire session. We use fixed groups in order to increase the number of independent observations and allow for maximum learning. As every round represents a new market and subjects play under incomplete information on the preferences of the other participants, subjects could not identify the strategies or identities of the players from previous rounds. Thus, we are not concerned about repeated games caveats.

In total, 15 sessions with 296 subjects were conducted. Thus, we have 96 subjects and 12 independent observations in Direct and PAO treatments, and 104 subjects and 13 independent observations for the OSP treatment. On average, the experiment lasted 110 minutes, and the average earnings per subject were 27 euros, including a show-up fee of 9 euros.

At the beginning of the experiment, printed instructions were given to the participants (see Appendix B). Participants were informed that the experiment was about the study of decision-making. The instructions were identical for all participants of each treatment, explaining the experimental setting in detail. First, the mechanism implementing the TTC rule was explained, with an example. The participants were told that this mechanism would be used to match them to objects for the first 14 rounds. Then, the mechanism implementing the SD rule was explained, also with an example, and participants were told that this mechanism would be used to match them to objects in the last seven rounds.  After round 14, participants were reminded about the mechanism switch. They were invited to re-read instructions for the second mechanism. Clarifying questions were answered in private. Note that the switch between cyclic and acyclic priorities does not switch the mechanism, and this change was not emphasized to subjects. They could, however, infer the difference from the priority tables.

\subsection{Experimental Results}

The significance level of all our results is 5\%, unless otherwise stated. For all tests, we use the p-values of the coefficient of the treatment dummy in regressions on the variable of interest. Standard errors are clustered at the level of matching groups, and the sample is restricted to the treatments that are of interest for the test.  We use the $>$ sign in the results between treatments to communicate "significantly higher," and the $=$ sign to communicate "the absence of a statistically significant" difference between treatments. 

\subsubsection{Truthful reporting}

This paper focuses on comparing the different proportions of equilibrium behavior induced by different mechanisms for the same environment.  We consider the proportions of subjects following a dominant strategy of truthful reporting in the direct mechanisms, a straightforward strategy that constitutes a perfect ex-post equilibrium in the PAO mechanisms, and the obviously dominant strategy in the OSP mechanisms. To simplify the language of distinguishing between these strategies, we use the concept of \textit{truthful strategy}. A participant follows a truthful strategy under a direct mechanism when she submits the truthful list of all eight objects. Note that there are typically non-truthful best responses as well, given the information available to the subjects taking part in the experiments. However, we argue that the submission of the full truthful list (truncations are not allowed by design) is the simplest strategy among them. Nevertheless, as robustness checks, we consider two alternative definitions of truthful strategies. The first one is \textit{truthful until the guaranteed object}, in which we deem as truthful any ranking that lists truthfully objects until the lowest-ranked object that the agent can receive regardless of other players' strategies.\footnote{For example, if an agent has the highest priority at her most-preferred object, this notion only requires the top object to be truthful in the ranking submitted. For SD, this notion requires the ranking to be entirely truthful, since every agent might be matched to their least-preferred object, depending on the priority ordering that is realized.} The second one is truthful until assigned object, that requires only the submission of a truthful list up to the assigned object in Direct. This is one of the most generous definitions for Direct, as it acts as if subjects correctly ``predict'' their assignment. 

In PAO mechanisms and OSP-SD, a participant following the truthful strategy is equivalent to her following the straightforward strategy. In OSP-TTC, participants following the truthful strategy must truthfully answer the ``yes-no'' questions about the most-preferred object among the ones for which the participant has the highest priority, and in case of all ``no'' answers, the participant must make a truthful choice of the favorite object among the other objects.

When considering truthful strategies in PAO mechanisms, there are subtle aspects that are worth highlighting. A strategy representing a preference different from its true one is perfectly distinguishable in Direct, but this is typically not true for a straightforward strategy in PAO. Since we only observe a limited number of choices and not the entire strategy, some non-straightforward strategies that coincide with straightforward ones in the observed choices will be deemed truthful, leading to a potential overestimation of the ``true'' truthful rate in PAO mechanisms. While we do not fully eliminate this possibility, the weaker notions of truthfulness used in Direct help mitigate the concern.

\noindent \textbf{Result 1 (Behavior in line with the truthful strategy):}
\begin{enumerate}
    \item Under the TTC rule with cyclic priorities, the comparison of average proportions of subjects behaving in line with the truthful strategy leads to the following results: PAO$>$Direct.\footnote{The result is robust to both alternative definitions of truthful strategy in the direct TTC.}
    \item Under the TTC rule with acyclic priorities, the comparison of average proportions of subjects behaving in line with the truthful strategy leads to the following results: OSP$>$PAO$>$Direct.\footnote{The result is robust to the notion of \textit{truthful until until the guaranteed object}. When considering \textit{truthful until the assigned object}, the difference between Direct and PAO is not significant.}
    \item Under the SD rule, the comparison of average proportions of subjects behaving in line with the truthful strategy leads to the following results: OSP$>$PAO$=$Direct.\footnote{The difference between Direct and PAO is no longer significant when considering \textit{truthful until the assigned object}. Note that under the informational conditions of the last seven rounds, there is no such thing as ``guaranteed objects.''}
\end{enumerate}
\medskip{}

\noindent \textbf{Support:}

\begin{centering}
\begin{figure}[ptb]
    \includegraphics[width=411pt]{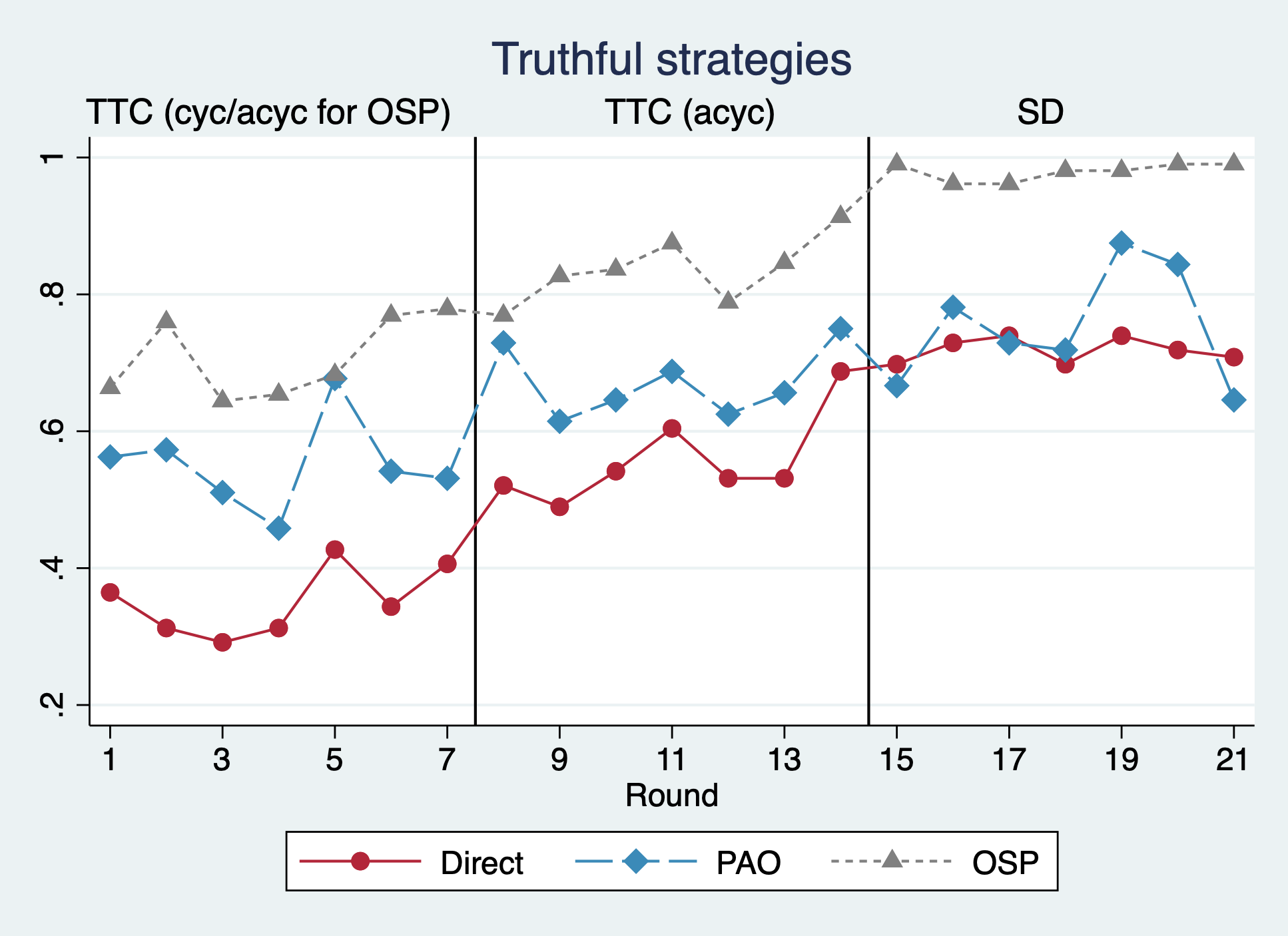}
    \caption{Truthful strategies by treatments and rounds}
    \label{truthful}
\end{figure}
\end{centering}

Figure \ref{truthful} presents the proportions of truthful strategies played by participants by treatments and rounds. 

\begin{table}[]
    \centering
    \begin{tabular}{|c|c|c|c|c|c|c|}
\hline 
\hline
 & Direct & PAO & OSP & Direct=PAO & Direct=OSP & PAO=OSP\tabularnewline
   &  &  &  & p-value & p-value & p-value\tabularnewline
\hline
\hline
\multicolumn{7}{c}{Panel A. Truthful}\tabularnewline
\hline
\hline
TTC cyclic & 35\% & 55\% & n/a & 0.00 & n/a & n/a\tabularnewline
\hline 
TTC acyclic & 56\% & 67\% & 77\% & 0.00 & 0.00 & 0.00\tabularnewline
\hline 
SD & 72\% & 75\% & 98\% & 0.47 & 0.00 & 0.00\tabularnewline
\hline
\hline
\multicolumn{7}{c}{Panel B. Truthful until guaranteed object}\tabularnewline
\hline
\hline

TTC cyclic & 36\% & 55\% & n/a & 0.00 & n/a & n/a\tabularnewline
\hline 
TTC acyclic & 59\% & 67\% & 77\% & 0.03 & 0.00 & 0.00\tabularnewline
\hline 
SD &          72\% & 75\% & 98\% & 0.47 & 0.00 & 0.00\tabularnewline
\hline 
\hline
\multicolumn{7}{c}{Panel C. Truthful until assigned object}\tabularnewline
\hline
\hline

TTC cyclic & 46\% & 55\% & n/a & 0.02 & n/a & n/a\tabularnewline
\hline 
TTC acyclic & 76\% & 67\% & 77\% & 0.53 & 0.00 & 0.00\tabularnewline
\hline 
SD & 76\% & 75\% & 98\% & 0.82 & 0.00 & 0.00\tabularnewline
\hline 
\end{tabular}
\begin{tablenotes}
\item \begin{flushleft}
\footnotesize{}{} \emph{Notes:} All the p-values are for the coefficient of the dummy for the corresponding treatment in the probit regression of the dummy for the truthful strategy on the treatment dummy, with the sample restricted to the treatments involved in the test. The standard errors of all regressions are clustered at the level of the matching groups. Thus, we have 24 clusters in regressions comparing Direct and PAO treatments, and 25 clusters in regressions that involve a comparison of the OSP treatment.

\end{flushleft}
\end{tablenotes}
    \caption{Proportions of truthful strategies}
    \label{tab:truth}
\end{table}

Panel A of Table \ref{tab:truth} presents the average proportion of truthful strategies and results of the test for treatment differences. Under the TTC rule with cyclic priorities, the average proportion of truthful strategies under direct TTC is 20 percentage points lower than under PAO TTC. The difference is significant. Panels B and C of Table \ref{truthful} presents the results when considering the proportions of truthful submissions in Direct until the guaranteed and assigned object, respectively.  The significance of the difference is robust to modifications of the definition of truthful strategy in the direct TTC. Note that, in the setup with cyclic priorities, no mechanism that implements the TTC rule is OSP \citep{Li2017-oa}. We observe that, despite the strategy-proofness of direct TTC, the proportion of truthful strategies is only 35\%, which is rather low. However, this rate is comparable to some other studies in the literature, that found similar rates of truthful reporting in TTC (for instance, 46\% in \citeauthor{Chen2006-lv}, \citeyear{Chen2006-lv} and 41\% in \citeauthor{Hakimov2018-qe}, \citeyear{Hakimov2018-qe}).\footnote{Two notable exceptions are \citet{Calsamiglia2010-ea} and \citet{Pais2008-gq}, who documented higher rates of truth-telling under TTC (62\% and 85\%, respectively). The high rate in \citet{Pais2008-gq} is likely driven by the fact that the rank-ordered list contained only three schools. Note, that in \citet{Hakimov2018-qe} in an environment with five schools, the rate is just 30\%.} Sequentialization of the rule through PAO leads to a significant increase in the proportion of truthful strategies. This finding is similar to the finding of \citet{Klijn2019-bq} and \citet{Bo2020-jm}, who show that the PAO implementation of the DA rule outperforms the direct mechanism. This is the main comparison of the experiment---in the absence of OSP alternatives, the PAO implementation can improve the proportions of equilibrium behavior relative to the implementation through the direct mechanism.

In the case of TTC with acyclic priorities, the OSP TTC outperforms both PAO and Direct mechanisms. The difference is significant for both the test considering only rounds 8 to 14, and the test considering rounds one to 14 in OSP TTC versus rounds eight to 14 in Direct and PAO TTC.\footnote{One can argue that the difference is driven by learning, as subjects in the OSP treatment had already played the OSP mechanism in the first eight rounds. We acknowledge the bias of our design in favor of the OSP treatment. Note, however, that the rate of truthful strategies in the OSP treatment in the first seven rounds is higher than the rate of truthful strategies under Direct and PAO in rounds 14-20, suggesting that the difference is unlikely to be entirely explained by learning.} As for the difference between Direct and PAO TTC, the proportion of truthful strategies is 9 percentage points higher under PAO TTC, and the difference is significant. Note, however, that the significance of the difference is not robust to redefining truthful strategies in direct TTC as the truthful ranking of objects in the rank-ordered lists until the \textit{assigned object}, as evident from Panel C of Table \ref{tab:truth}. The difference between Direct and PAO becomes smaller in TTC with acyclic priorities than in TTC with cyclic priorities. One can argue that learning is steeper under direct TTC. While this argument is hard to reject formally, suggestive evidence goes against this argument. More specifically, the coefficient for the variable ``round'' in the probit regression of the dummy for the truthful strategy is not significant in either Direct or PAO TTC. Also, there is a jump in the proportion of truthful strategies between rounds seven and eight under both treatments, suggesting that participants reacted to the switch of priorities from cyclic to acyclic.

For the SD rule, the use of the OSP mechanism results in almost universal (98\%) truthful behavior, as evident from Table \ref{tab:truth}. The rate of truthful strategies in OSP SD is significantly higher than in the Direct and PAO treatments. One can argue that the high rate of OSP is driven by the experience of subjects under OSP TTC, where the rate of truthful strategies was already higher. Again, the within-subjects feature of the design does not allow us to reject this argument formally. However, note that the decision environment under SD was quite different due to the different informational environment, and we also notified subjects about the switch of the mechanism before round 15. Also, the rate of truthful reporting under OSP SD is quite similar to \citet{hakimov2021costly} who run OSP SD between-subjects. There is no significant difference between Direct and PAO in all definitions of truthful strategies for Direct. SD is a simple allocation rule, and thus the rates of manipulations are already relatively low under the Direct mechanism. Note that under PAO SD, participants might engage in multiple decisions, especially when they have a low priority: every time the chosen object is taken by someone with a higher priority, the agent is asked to pick another one.

For a closer analysis of determinants of truthful strategies by treatments, see Appendix \ref{sec:App_truthful}. Next, we take a closer look at the determinants of higher truthful rates under OSP.

\smallskip{}

\noindent \textbf{Result 2 (The truthful strategy in OSP):} In OSP TTC, the rate of truthful behavior in the passing actions is much lower than in the clinching actions, and the difference is significant.\footnote{We follow the insights from \citet{Pycia2019-vl} and define ``passing actions" as OSP strategies, which involve saying ``no" to all objects which the agent tentatively owns under OSP TTC. Thus, the agent has to pass the decision to the next agent, hoping that a better object will appear in her choice set. ``Clinching actions" are OSP strategies that require only saying ``yes" to the most-preferred object in OSP TTC. Thus, whenever an agent is asked to act, she tentatively owns the most-preferred object.}

\medskip{}

\noindent \textbf{Support:} 

In line with the theory, obvious strategy-proofness leads to a higher rate of truthful strategies under both TTC and SD. Note that, in general, OSP-TTC is not strongly obviously strategy-proof, and the obviously dominant strategy might contain so-called ``passing" actions that require forward-looking from participants. In line with the definition of \citet{Pycia2019-vl}, we can categorize possible paths of the obviously dominant strategies of the participants into two groups: 

\begin{enumerate}
    \item \textbf{Clinching actions}---paths of the obviously dominant strategy that contain only clinching actions. If a participants was at the top of the priority of her favorite remaining object by the time the OSP TTC mechanism interviewed her (thus, she was at the top of the priority of at least one object). This strategy is strongly obviously strategy-proof, as it does not require passing to the other player, and thus does not require foresight from the participant.
    \item \textbf{Passing actions}---paths of the obviously dominant strategy that contain at least one passing action.  If a participant was not at the top of the priority of her favorite remaining object by the time the OSP TTC mechanism interviewed her (thus, she was at the top of the priority of at least one object). OSP requires to   ``pass" the interview turn to another participant. This path of the obviously dominant strategy does require foresight from the participant. In the context of OSP TTC, these strategies require saying ``no" to all objects and then picking the favorite object among those for which the participant does not have the top priority.
\end{enumerate}
 
 Table \ref{tab:clinchpass} presents the proportion of truthful strategies in OSP TTC, depending on the path of the OSP strategy. The rate of truthfulness in the passing actions is much lower than in the clinching actions. In fact, the rate of truthfulness in passing actions is not significantly different from truthful rates under direct TTC (p=0.94), and is significantly lower than under PAO ($p<0.01$), which are not obviously strategy-proof. In contrast, once the path of OSP contains only clinching actions like in strongly obviously strategy-proof strategies, the truthful rate is much higher than in other treatments and reaches 93\%.
 
\begin{table}[h]
    \centering
    \begin{tabular}{|c|c|c|}
\hline 
\multirow{2}{*}{} & \multicolumn{2}{c|}{OSP TTC acyclic}\tabularnewline
\cline{2-3} 
 & N & \% of truthful\tabularnewline
\hline 
Clinching actions& 836 & 93\%\tabularnewline
\hline 
Passing actions& 620 & 56\%\tabularnewline
\hline 
\end{tabular}
    \caption{Truthful strategies by clinching and passing in OSP TTC}
    \label{tab:clinchpass}
\end{table}

This result supports the concept of strong obvious strategy-proofness by \citet{Pycia2019-vl}. Indeed, when the market is such that the preferences and priorities of the object are strongly negatively correlated (the agents prefer objects that rank them the lowest), the obvious strategy-proofness of OSP TTC might not result in the high rates of optimal behavior, as most paths of the OSP strategy will contain passing actions.

Summing up the subsection on individual strategies, experimental results support using the PAO mechanisms in the complex environment, where an OSP mechanism is not available. Once the environment is simple enough to allow for the presence of the OSP mechanisms, they should be used. The benefit of the OSP mechanisms comes mostly through the presence of the paths in OSP that contain only clinching actions.

\subsubsection{Efficiency}
While our experiment focuses on the subjects' individual behavior, we look at the efficiency of the reached allocations in this subsection. On one hand, the matching game is often a zero-sum game, and a difference in efficiency is unlikely to appear.\footnote{Consider the theoretical argument of \citet{Liu2016-ts} for further details of this discussion.} Thus, as in many previous experiments, we do not expect to find large differences in efficiency between treatments. The differences might only appear due to Pareto-dominated allocations. 

\noindent \textbf{Result 3 (Efficiency):}
\begin{enumerate}
    \item Under the TTC rule with cyclic priorities: Direct$>$PAO.

    \item Under the TTC rule with acyclic priorities: Direct$>$PAO, OSP = PAO (round 8-14), OSP = Direct (round 8-14).

    \item Under the SD rule : Direct$>$OSP$=$PAO.

\end{enumerate}
\medskip{}
\noindent \textbf{Support:} 

Figure \ref{rank} shows the average rank of the assigned objects under the true preferences of participants by rounds. Thus, the higher the rank, the worse the assignment for participants. Table \ref{tab:ranks} shows the average ranks of assigned objects by treatments.  Under TTC with cyclic priorities (first seven rounds), the average rank of the assigned objects is significantly higher under Direct than under PAO. The difference is, on average, 0.29 of a rank. This is a large difference for matching markets experiments, as often a worse assignment of one participant leads to a better assignment of the other. Note that we do not present the results for the OSP treatment for the first seven rounds. Because the comparison would not be meaningful, as the participants played under different priorities, and thus the equilibrium allocations are different.

\begin{centering}
\begin{figure}[ptb]
    \includegraphics[width=411pt]{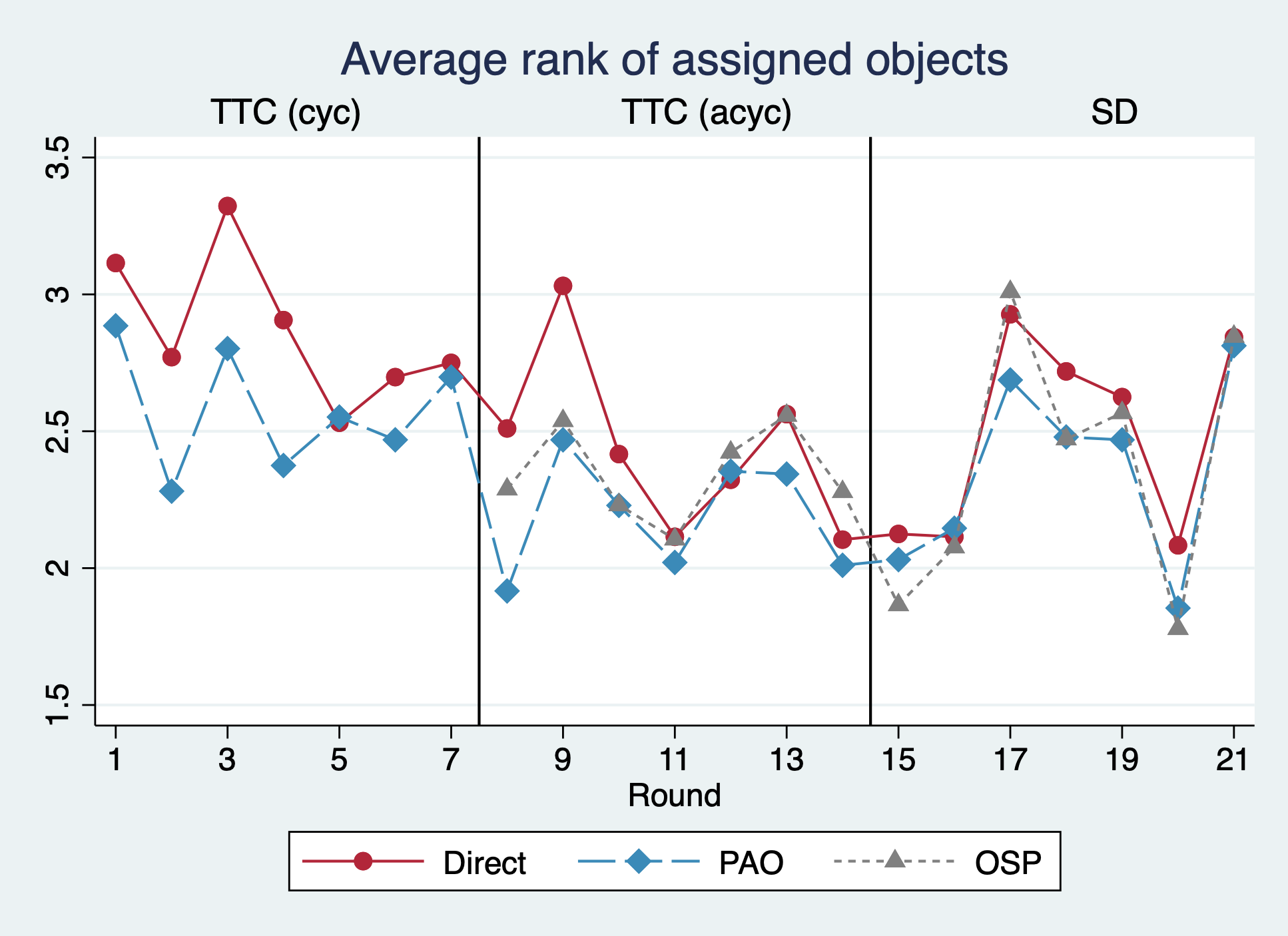}
    \caption{Average rank of assigned objects by treatments} 
        \label{rank}
\end{figure}

\end{centering}

\begin{table}[]
    \centering
    \begin{tabular}{|c|c|c|c|c|c|c|}
\hline 
 & Direct & PAO & OSP & Direct=PAO & Direct=OSP & PAO=OSP\tabularnewline
 &  &  &  & p-value & p-value & p-value\tabularnewline
\hline 
TTC cyclic & 2.87 & 2.58 & n/a & 0.00 & n/a & n/a\tabularnewline
\hline 
TTC acyclic & 2.43 & 2.19 & 2.34 & 0.00 & 0.33 & 0.08\tabularnewline
\hline 
SD & 2.49 & 2.35 & 2.37 & 0.02 & 0.02 & 0.66\tabularnewline
\hline 
\end{tabular}
\begin{tablenotes}
\footnotesize{}
\item \emph{Notes:} All the p-values are p-values for the coefficient of the
dummy for the corresponding treatment in the OLS regression of the rank of the assigned object in the true preferences of participants on the treatment dummy, with the sample restricted to the treatments involved in the test. The standard errors of all regressions are clustered
at the level of the matching groups. Thus, we have 24 clusters in regression comparing Direct and PAO treatments, and 25 clusters in regressions that involve comparison of OSP treatment.

\end{tablenotes}
    \caption{Average rank of assigned objects in the true preferences of the participants}
    \label{tab:ranks}
\end{table}

Under TTC with acyclic priorities, there is a small but statistically significant difference between PAO and Direct. Again, the average assignment is significantly better for participants under the PAO treatment. There is no significant difference between OSP and other treatments. Thus, a higher rate of truthful strategies in OSP does not result in a better average allocation of objects. 

Finally, under the SD rule, the average rank in the Direct mechanism is significantly higher than under OSP and PAO. Despite a similar rate of truthful strategies, the consequence of the deviations from truthful strategies is different between the Direct and PAO treatments. At the same time, despite the large difference in truthful rates between PAO and OSP, the average rank of assigned objects does not differ significantly.\footnote{For alternative definitions of efficiency and estimation of the costs of deviation from truthful strategies in each treatment see Appendix \ref{sec:App_eff}}

Summing up the subsection on efficiency, PAO outperforms Direct in all environments. Interestingly, PAO does not perform worse than OSP in both acyclic TTC and SD, despite lower rates of truthful strategies. This is because in PAO some deviations from truthful strategies are payoff-irrelevant, while under OSP they are more likely to be payoff-relevant under acyclic TTC and are always payoff-relevant under SD.

\section{Conclusion\label{sec:Conclusion}}

Recent empirical evidence raises concerns about the practical success of strategy-proof matching mechanisms inducing the truthful reporting of preferences in practical applications. Recent work by \citet{Li2017-oa} sheds a new light on the design of market mechanisms, emphasizing the importance of simpler and thus potentially more successful solutions for practice. However, the hope was not long-lasting, as many desirable allocation rules cannot be implemented via obvious strategy-proof mechanisms.

Our paper takes a different stand on potential solutions to the perceived complexity of direct mechanisms. We suggest using PAO mechanisms when a rule cannot be implemented via OSP mechanisms but belongs to an extensive family of rules, which include many commonly considered in practice and the literature. Similar to OSP mechanisms, PAO mechanisms can also implement those rules with an attractive and simple equilibrium strategy. 

Our experimental evidence, together with recent evidence by \cite{Bo2020-jm} and \cite{Klijn2019-bq}, show that improvement over direct mechanisms in allocations and the percentage of people following an equilibrium is possible for allocation rules for which OSP implementation is not available. These results might appear puzzling, and may invite further research on understanding the relative strength of different equilibrium concepts in predicting behavior, especially of inexperienced participants.

Finally, implementing rules via PAO mechanisms can be interpreted as a new type of interaction between a market designer and the participants.  Agents make requests to the designer, saying \textit{what they want},  as opposed to being asked questions about their ``types.'' The designer satisfies the request as long as the rule allows it. While it might be possible that the designer will return with bad news, this only happens when the designer cannot satisfy the request---and could, in principle, provide information justifying that decision.


\bibliographystyle{ecca}
\bibliography{schoolChoice}

\pagebreak{}

\appendix

\section{Proofs}

\subsection*{Proof of Theorem \ref{thm:PickAnObjectIFFTerminalAv}}

\begin{customthm}{\ref{thm:PickAnObjectIFFTerminalAv}}
There exists a pick-an-object mechanism that \textbf{sequentializes} an individually rational rule $\varphi$ if and only if $\varphi$ satisfies monotonic discoverability.
\end{customthm}

\begin{proof}

First we assume that $\varphi$ satisfies monotonic discoverability and show that the canonical pick-an-object function $\mathbb{S}$ is such that its mechanism \textbf{sequentializes} $\varphi$.

Define a pick-an-object function $\mathbb{S}$ such that for every $i$ and $h^A\in H^A$:

\[\mathbb{S}^i\left(h^A\right)=
\begin{cases}
    \emptyset & \text{ if } \left|\varphi\left(h^A\right)\right|=1 \text{ or } \overrightarrow{h^A_i}\in \mu_i^\varphi\left(h^A\right) \\
    \mu_i^\varphi\left(h^A\right) & \text{otherwise}
\end{cases}
\]

To show that $\mathbb{S}$ sequentializes $\varphi$, we will show that: (i) if the collective history $h^A$ is such that $\left|\varphi\left(h^A\right)\right|>1$, at least one agent must be given a non-empty menu, that is, there must be at least one $i$ such that $\mathbb{S}^i\left(h^A\right)\neq \emptyset$; (ii) When an empty menu is returned for all agents, their last choices must be the allocation that $\varphi$ determines what should be produced by any preference profile consistent with the collective history.

First, (i). Suppose not. Then $\left|\varphi\left(h^A\right)\right|>1$ and for all $i$, $\mathbb{S}^i\left(h^A\right)= \emptyset$. By the definition of $\mathbb{S}$, this implies that for all $i$, $\overrightarrow{h^A_i}\in \mu_i^\varphi\left(h^A\right)$. That is, the last choice of each agent is a feasible assignment after $h^A$. 

Let $\overrightarrow{\mu}$ be the allocation that matches each agent with her last choice in $h^A$, that is, for every $i$, $\overrightarrow{\mu}\left(a_i\right)=\overrightarrow{h^{A}_i}$, and $P^A$ be any preference profile consistent with $h^A$. Monotonic discoverability implies that either (a) for every preference profile $P'$ consistent with continuations of $h^A$, $\varphi\left(P'\right)=\overrightarrow{\mu}$, which is a contradiction with $\left|\varphi\left(h^A\right)\right|>1$, or (b) that there is at least one agent $a_{i*}$ such that for all $P\in\mathcal{L}\left(P^A,\overrightarrow{\mu}\right)$,  $\overrightarrow{h^{A}_{i*}}\neq \mu_{i*}\left(P\right)$, which is again a contradiction with $\overrightarrow{h^A_{i*}}\in \mu_{i*}^\varphi\left(h^A\right)$.

Now, to (ii). Since an empty menu is given to all agents, then by the definition of $\mathbb{S}$, either (a) $\left|\varphi\left(h^A\right)\right|=1$ or (b) for every $i$, $\overrightarrow{h^A_i}\in \mu_i^\varphi\left(h^A\right)$. Consider first (a). By definition of the notation, $\left|\varphi\left(h^A\right)\right|=1$ implies that for all preference profiles consistent with $h^A$, the rule $\varphi$ determines the same allocation. Suppose, however, that $\varphi\left(h^A\right)\neq\overrightarrow{\mu}$, and let $a_i$ be an agent for whom $\varphi_i\left(h^A\right)\neq \overrightarrow{\mu}\left(a_i\right)$. Clearly, $\varphi_i\left(h^A\right)$ cannot be any choice made before $\overrightarrow{\mu}\left(a_i\right)$ in $h^A$, since by design of the pick-an-object mechanism it is only rejected if it is not a feasible assignment anymore. So it can either be an object which was present in a menu previous to the one where $\overrightarrow{h^A_i}$ was chosen but not in some future menu, or in the menu given when $\overrightarrow{h^A_i}$ was chosen. 
The first option contradicts the way in which menus are constructed: menus contain all feasible assignments conditional on the collective history. So if at some point the object type was not feasible anymore, it cannot be that agent's assignment under $\varphi$. For the second, this implies that the allocation was determined by $\varphi$ to match agent $a_i$ to some object type $o^*$ that was not chosen.  Notice, however, that since $\varphi$ is individually rational, a collective history cannot point to a single allocation unless the agents' preferences consider these objects acceptable. That is, the collective history must include agents choosing, at some point, these objects from a menu that includes the option ``$\emptyset$''. Therefore, it cannot be the case that $\varphi$ is individually rational, $\left|\varphi\left(h^A\right)\right|=1$, and an $\varphi\left(h^A\right)$ matches some agent to an object that she did not choose from a menu. Now, case (b). Here, all the last choices of all agents are feasible in $\varphi\left(h^A\right)$. That is, there is no agent who will not be matched to their last choice for a preference profile that is consistent with $\varphi\left(h^A\right)$. But then monotonic discoverability implies that $\varphi\left(h^A\right)=\overrightarrow{\mu}$, which is what we wanted to show.

Next, we will show that if a rule $\varphi$ does not satisfy monotonic discoverability, then it cannot be sequentialized by some pick-an-object function $\mathbb{S}$. Suppose
not. Then there is a rule $\varphi^{*}$ that does not satisfy monotonic
discoverability and a pick-an-object mechanism $\mathbb{S}$ that sequentializes
it. Since $\varphi^{*}$ does not satisfy monotonic discoverability, there
exists a preference profile $P^{*}$ and an allocation $\mu^{*}$
such that (i*) $\varphi^{*}\left(P^{*}\right)\neq\mu^{*}$ and (ii*)
for each agent $a\in A$, there is at least one preference profile
$P^{*,a}\in\mathcal{L}\left(P^{*},\mu^{*}\right)$ where $\varphi_{a}^{*}\left(P^{*,a}\right)=\mu^{*}\left(a\right)$. If there is more than one such allocation $\mu^*$ for that given $P^*$, let $\mu^*$ be such that for every $\mu'\neq \mu^*$ satisfying (i*) and (ii*), there is at least one $a\in A$ such that $\mu^*(a) P^*_a \mu'(a)$.

The first thing to note next is that (ii*) implies that when all agents follow straightforward strategies with respect to $P^*$, there must be for each agent $a$ a period in which she chooses $\mu^*(a)$, and that this must happen before she chooses other objects that she is matched to by $\varphi^*$ for any profile in $\mathcal{L}\left(P^{*},\mu^{*}\right)$. That is, since when following $P^*$ ``up to $\mu^*$'' there is some continuation in which $a$ is matched to $\mu^*(a)$, then $a$ must first choose that object from a given menu.

The next observation is that, after the period in which $a$ chooses $\mu^*(a)$, the determination of whether an agent $a$ will be matched to $\mu^*(a)$ or some other object below $\mu^*(a)$ in her preference cannot depend on that agent's preferences among objects below $\mu^*(a)$ in her preference, since in order to obtain information about that part of agent $a$'s preferences \textit{requires rejecting $\mu^*(a)$ as a potential allocation for $a$}. This implies that whether $a$ will be matched to $\mu^*(a)$ or not depends on information about the \textit{other agents}' preferences. More than that, (ii*) specifically implies that the conclusion that $a$ will not be matched to $\mu^*(a)$ cannot be reached before some other agent $b$ makes choices \textit{after} having her choice of $\mu^*(b)$ rejected. This, therefore, has the following implications:

\begin{itemize}
    \item Every agent $a^*$ following the straightforward strategy with respect to $P^*$ will, in some period, choose her allocation under $\mu^*$,
    \item In any periods that follow, in which the other agents did not yet choose their allocation under $\mu^*$, agent $a^*$'s allocation may or not be determined to be $\mu^*\left(a^*\right)$, but will \textbf{not} have her choice rejected, since there is still some continuation in which $\mu^*\left(a^*\right)$ will be her allocation, and rejections are final.
\end{itemize}

The two implications above result in the following dynamic when agents follow straightforward strategies with respect to $P^*$: agents make choices over menus until, at some point, they choose their allocation under $\mu^*$. After a certain number of periods, therefore, we reach a point in which all agents' last choices are their allocations in $\mu^*$. By (i*), $\mu^*$ should not be the allocation, but by (ii*), for each agent, more information about the preference profile is necessary to point out the correct allocation to be produced. This requires rejecting at least one of the agents' choices, but by (ii) for every agent there is a continuation in which she is matched to her assignment under $\mu^*$. So no more information can be obtained when using any pick-an-object function, leading to a contradiction.

\end{proof}

\subsection*{Proof of Proposition \ref{prop:generalizedDAMonDisc}}

\begin{customprop}{\ref{prop:generalizedDAMonDisc}}
If $\varphi$ is described by a generalized DA procedure, then $\varphi$ satisfies monotonic discoverability.
\end{customprop}

\begin{proof}
In light of Theorem \ref{thm:PickAnObjectIFFTerminalAv}, it suffices to show that there is a pick-an-object mechanism that sequentializes the rule that is specified by the generalized DA procedure. Let $\Psi^*$ be the update function used to describe the rule $\varphi$. We construct the menu function $\mathbb{S}^*$ as follows:

\[\mathbb{S}^*\left(h^{A-\emptyset}\right)=\left(O,O,O,\ldots\right)\]

The value of $\mathbb{S}^*$ for other collective histories are determined, recursively, as follows. Let $h^A$ be a collective history for which the value of $\mathbb{S}^*\left(h^{A}\right)$ has already been determined as $\mathbb{S}^*\left(h^{A}\right)=\left(\phi_1,\phi_2,\ldots,\phi_n\right)$. 

For each choice profile $\left(o^1,o^2,\ldots,o^n\right)$, where for each $i\in A$, $o^i\in \phi_i$ if $\phi_i\neq\emptyset$ and $o^i=\diamondsuit$ otherwise,\footnote{Here the symbol $\diamondsuit$ is used as a placeholder for the agents who are not presented with a menu to choose from.} perform the following:

\begin{enumerate}
    \item Construct the assignments $\mu^1$ and $\mu^2$ as follows:
    \begin{itemize}
        \item For every $a_i\in A$:
    \begin{itemize}
        \item If $o^i=\diamondsuit$ and $h^A\neq h^{A-\emptyset}$, let $\mu^1(a_i)=\overrightarrow{h^A_i}$.
        \item Otherwise, let $\mu^1(a_i)=\emptyset$.
    \end{itemize}
    \item For every $a_i\in A$, let $\mu^2(a_i)=\emptyset$ if $o^i=\diamondsuit$, and $\mu^2(a_i)=o^i$ otherwise. 
    \end{itemize}
    \item For every $a_i\in A$, define the choice history $h_i$ to be:
    \begin{itemize}
        \item $h_i=h^A_i\oplus \left(\phi_i,o^i\right)$ if $o^i\neq \diamondsuit$,
        \item $h_i=h^A_i$ otherwise.
    \end{itemize}
    \item Let $\mu^3=\Psi^*\left(\mu^1,\mu^2\right)$.
    \begin{itemize}
        \item If for every $a\in A$ it is the case that $\mu^{t}(3)\in \left\{\mu^{1}(a),\mu^2(a)\right\}$, then $\mathbb{S}^*\left(h_1,h_2,\ldots,h_n\right)=\left(\emptyset,\emptyset,\ldots,\emptyset\right)$.
        \item Otherwise, $\mathbb{S}^*\left(h_1,h_2,\ldots,h_n\right)=\left(\phi'_1,\phi'_2,\ldots,\phi'_n\right)$, where for each $a_i\in A$:
        \begin{itemize}
            \item $\phi'_i=\emptyset$ if $\mu^3(a_i)\in \left\{\mu^1(a_i),\mu^2(a_i)\right\}$,
            \item $\phi'_i=O\backslash\bigcup_{(\Omega,\omega)\in h^A_i}\omega$ otherwise.
        \end{itemize}
    \end{itemize}
\end{enumerate}

Notice first that $\mathbb{S}^*$ is a menu function: initial menus are the entire set $O$, and for every collective history the menus given are precisely the last menu an agent was given with her last choice removed. Since $\mathbb{S}^*$ is defined recursively for each collective history that can be generated by choices from the menus that can be offered, every possible path of the pick-an-object mechanism is well defined.  Consider now the pick-an-object mechanism $\mathbb{S}^*$ and the agents in $A$ following straightforward strategies with respect to $P$. 

What follows is an exact reproduction of the steps of the generalized DA procedure under the preference profile $P$. Since the menus always include all elements of $O$ minus the agents' past choices, agents will make choices from the top until their last choice following their preference, as in the generalized DA. Agents who have their last choices rejected and are given new menus, for any collective history, are determined by the function $\Psi^*$, so that the way in which the sequences of choices from menus determine whether an agent is tentatively matched or not is given by that function. And finally, whenever the last choice of agents should be determined as the outcome, $\mathbb{S}^*$ returns a list of empty menus.

We can conclude, therefore, that the pick-an-object mechanism $\mathbb{S}^*$ sequentializes the rule $\varphi$.
\end{proof}

\subsection*{Proof of Theorem \ref{thm:PAOTruthfullIFSP+MD}}

The proof proceeds as follows: first, we assume that $\varphi$ is strategy-proof and satisfies monotonic discoverability, and show that straightforward strategies constitute a perfect ex-post equilibrium for every PAO mechanism that sequentializes $\varphi$. Next, we show that if $\varphi$ is not strategy-proof, no PAO mechanism that sequentializes $\varphi$ has a straightforward perfect ex-post equilibrium. Adding the fact that Theorem \ref{thm:PickAnObjectIFFTerminalAv} shows that monotonic discoverability is necessary for the existence of a PAO mechanism that sequentializes $\varphi$ and the fact that strategy-proofness does not imply monotonic discoverability, we have everything we need.

In order to show that straightforward strategies constitute a perfect ex-post equilibrium, we  need to show that even when past choices were not a result of following straightforward strategies, being truthful from that point forward is still a best-response for every preference an agent might have. Since the PAO mechanism sequentializes $\varphi$, this will involve comparisons of outcomes among non-truthful strategies. While at first sight these comparisons are not implied by any of the properties that we assume $\varphi$ has, the comparison we need can be done based on the strategy-proofness of $\varphi$. Some notation is needed first.

For any set $I\subseteq O$, let $I!$ denote the set of all permutations of the elements of $I$.\footnote{For example, if $I=\left\{o_1,o_2,o_3\right\}$, then $I!=\left\{(o_1,o_2,o_3),(o_1,o_3,o_2),(o_2,o_1,o_3),(o_2,o_3,o_1),(o_3,o_1,o_2),(o_3,o_2,o_1)\right\}$.}
For any tuple $\gamma$ of distinct elements of $O$ (for example,
$\gamma=\left(o_{4},o_{7},o_{2}\right)$), let:

\[\left.\mathbb{P}\right|_{\gamma}\equiv \bigcup_{\lambda\in \left(O\backslash\gamma\right)!} \gamma\oplus\lambda \]

That is, $\left.\mathbb{P}\right|_{\gamma}$ is the set of all preferences in which $\gamma$ are the most-preferred object types, ordered as in the tuple $\gamma$ itself. Let also $\left.P\right|_{I}$, where $I\subseteq O$, be
the preference $P$ restricted to $I$ (for example,
if $P=\left(o_{1},o_{4},\emptyset,o_{2},o_{3}\right)$, $\left.P\right|_{\left(o_{2},o_{4}\right)}=\left(o_{4},o_{2}\right)$). We can now define recursive dominance.

\begin{defn}
A rule $\varphi$ satisfies \textbf{recursive dominance} if for every agent $a$, preference $P_a$, preferences of other agents $P_{-a}$, set $I\subseteq O$, permutation $\gamma$ of $I$ and $P^*\in \left.\mathbb{P}\right|_{\gamma}$:

\[
\varphi_{a}\left( \gamma\oplus \left.P\right|_{O\backslash I} ,P_{-a}\right)\ R_{a}\ \varphi_{a}\left(P^*,P_{-a}\right)
\]
\end{defn}

The following corollary is derived from the fact that recursive dominance is implied by monotone strategy-proofness, a concept introduced by \cite{haeringer_monotone_2016}:

\begin{cor}{\citep{haeringer_monotone_2016}}
\label{cor:monotoneSP}
A rule $\varphi$ satisfies recursive dominance if and only if it is strategy-proof.
\end{cor}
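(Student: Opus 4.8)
The plan is to prove the two directions of the equivalence separately. One direction is essentially immediate, and the converse reduces to the standard ``option-set'' characterization of strategy-proof rules together with a short case analysis.

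First I would dispatch the direction ``recursive dominance $\Rightarrow$ strategy-proofness'' by specializing the definition of recursive dominance to $I=\emptyset$. With $I$ empty, the permutation $\gamma$ is the empty tuple, so $\gamma\oplus\left.P\right|_{O\backslash I}=P_a$, and moreover $\left.\mathbb{P}\right|_{\gamma}=\mathbb{P}$ since $\left(O\backslash\gamma\right)!=O!$. Recursive dominance then reads
\[
\varphi_{a}\left(P_{a},P_{-a}\right)\ R_{a}\ \varphi_{a}\left(P^{*},P_{-a}\right)\quad\text{for every }P^{*}\in\mathbb{P},
\]
which is exactly the definition of strategy-proofness. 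So this direction costs nothing.

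For the converse, the key tool is an option-set lemma that I would establish first. Fix the other agents' reports $P_{-a}$ and set $O_{a}=\left\{\varphi_{a}\left(Q_{a},P_{-a}\right):Q_{a}\in\mathbb{P}\right\}$, the range of $\varphi_a\left(\cdot,P_{-a}\right)$. Strategy-proofness, together with strictness of preferences, implies that for every report $Q_{a}$ the outcome $\varphi_{a}\left(Q_{a},P_{-a}\right)$ is the unique $Q_{a}$-maximal element of $O_{a}$: if some $y\in O_{a}$ were $Q_{a}$-preferred to $x=\varphi_{a}\left(Q_{a},P_{-a}\right)$, then by definition of $O_a$ there is a report achieving $y$, and an agent with true preference $Q_{a}$ could profitably deviate to it, contradicting strategy-proofness. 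Thus $\varphi_a\left(\cdot,P_{-a}\right)$ is simply ``pick the reported-best feasible object from $O_a$.''

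Then, to verify recursive dominance, fix $a$, $P_a$, $P_{-a}$, $I\subseteq O$, a permutation $\gamma$ of $I$, and $P^{*}\in\left.\mathbb{P}\right|_{\gamma}$, and write $\hat{P}_{a}=\gamma\oplus\left.P\right|_{O\backslash I}$. By the lemma, $\varphi_{a}\left(\hat{P}_{a},P_{-a}\right)$ and $\varphi_{a}\left(P^{*},P_{-a}\right)$ are the $\hat{P}_{a}$-maximal and $P^{*}$-maximal elements of the \emph{common} option set $O_{a}$. Both reports rank all of $I$ at the top in the same order $\gamma$, and disagree only on $O\backslash I$. I would split into two cases. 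In case (A), $O_{a}\cap I\neq\emptyset$: since both reports rank $I$ above $O\backslash I$, both maximizers lie in $O_{a}\cap I$, and since $\gamma$ orders $I$ identically for both, the two maximizers coincide, so the two outcomes are equal and the required weak preference holds with equality. In case (B), $O_{a}\cap I=\emptyset$, so $O_{a}\subseteq O\backslash I$: here the top block $\gamma$ is irrelevant to both maximizations, and because $\hat{P}_{a}$ restricted to $O\backslash I$ agrees with $P_{a}$, the $\hat{P}_{a}$-maximizer of $O_{a}$ equals the $P_{a}$-maximizer of $O_{a}$; being $P_{a}$-best in $O_{a}$, it is $R_a$-weakly preferred to $\varphi_{a}\left(P^{*},P_{-a}\right)\in O_{a}$. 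In both cases $\varphi_{a}\left(\hat{P}_{a},P_{-a}\right)\ R_{a}\ \varphi_{a}\left(P^{*},P_{-a}\right)$, which is recursive dominance.

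The main obstacle is really just the option-set lemma and the clean case split; the lemma is standard but needs the strict-preference uniqueness observation, and the case analysis hinges on the fact that $\hat{P}_{a}$ and $P^{*}$ share exactly the top block $\gamma$, so they disagree only below $I$, where either the option set meets $I$ (and the common ordering $\gamma$ forces identical outcomes) or $\hat{P}_{a}$ coincides with the truth $P_a$. Everything else is routine, and this is precisely the content of the monotone strategy-proofness equivalence of \cite{haeringer_monotone_2016} that the corollary cites.
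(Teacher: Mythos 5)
Your proof is correct, but it takes a different route from the paper: the paper offers no self-contained argument for this corollary at all, instead deriving it by citation---it observes that recursive dominance is implied by \emph{monotone strategy-proofness} and then invokes the equivalence of monotone strategy-proofness and strategy-proofness established in \cite{haeringer_monotone_2016}. You instead prove both directions directly. Your easy direction (specializing to $I=\emptyset$, so that $\gamma\oplus\left.P\right|_{O\backslash I}=P_a$ and $\left.\mathbb{P}\right|_{\gamma}=\mathbb{P}$) is exactly right. Your converse via the option-set lemma is also sound: under strategy-proofness the outcome $\varphi_a(Q_a,P_{-a})$ is the unique $Q_a$-maximum of the report-independent set $O_a=\{\varphi_a(Q_a,P_{-a}):Q_a\in\mathbb{P}\}$, and since $\hat{P}_a=\gamma\oplus\left.P\right|_{O\backslash I}$ and any $P^*\in\left.\mathbb{P}\right|_{\gamma}$ agree on the top block $\gamma$, either the maximizers coincide inside $O_a\cap I$ (case A) or both maximizations take place in $O_a\subseteq O\backslash I$, where $\hat{P}_a$ coincides with the truth $P_a$ and hence selects the $P_a$-best feasible object (case B). What your approach buys is a short, elementary, and self-contained verification on the full strict-preference domain, making the corollary independent of the external reference; what the paper's citation buys is brevity and an explicit link to the monotone strategy-proofness literature. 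The only caveat worth noting is that your option-set lemma relies on the full domain $\mathbb{P}$ of strict preferences being admissible as reports, which is exactly the setting of the paper, so nothing is lost.
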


The next lemma shows the outcomes that are produced when other agents follow straightforward strategies starting from a particular collective history.

\begin{lem}
\label{lem:outcomeAfterDeviation}
Let $\mathbb{S}^{\varphi}$ be a pick-an-object mechanism that sequentializes $\varphi$, and $h^A\in H^A_{\mathbb{S}}$ be any collective history of $\mathbb{S}^{\varphi}$. Let $P$ be any preference profile, $\sigma^{P}$ be the straightforward strategy profile with respect to $P$, $a^*$ be any agent,  and $\sigma'$ be any other strategy for $a^*$. Then there is a set of object types $I\subseteq O$, a tuple $\gamma$ with elements of $O$, a preference $P^*\in \mathcal{P}$, and a preference profile for agents other than $a^*$ $P'_{-a^*}$ such that:

\[\left.\mathcal{O}_{a^*}\right|_{h^A}\left(\sigma_{a^*}^{P},\sigma_{-a^*}^{P}\right)=\varphi_{a^*}\left(\gamma\oplus \left.P\right|_{O\backslash I}, P'_{-a^*}\right)\]

and

\[\left.\mathcal{O}_{a^*}\right|_{h^A}\left(\sigma',\sigma_{-a^*}^{P}\right)=\varphi_{a^*}\left(\gamma\oplus \left.P^*\right|_{O\backslash I}, P'_{-a^*}\right)\]
\end{lem}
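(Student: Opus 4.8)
The plan is to build the objects $\gamma$, $I$, $P^*$ and $P'_{-a^*}$ explicitly out of the collective history $h^A$, and then verify both identities by \emph{reproducing} the run of $\mathbb{S}^{\varphi}$ from the empty history $h^{A-\emptyset}$ under a straightforward profile for a suitably modified preference profile, invoking the fact that $\mathbb{S}^{\varphi}$ sequentializes $\varphi$. For the construction, let $\gamma=\left(\omega_1,\ldots,\omega_k\right)$ be the tuple of choices that $a^*$ made along $h^A_{a^*}$, in order, and let $I=\left\{\omega_1,\ldots,\omega_k\right\}$; by the definition of a PAO mechanism these choices are distinct, so $\gamma$ is a permutation of $I$. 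For each other agent $b$ I would set $\gamma_b$, $I_b$ analogously and define $P'_b=\gamma_b\oplus \left.P_b\right|_{O\backslash I_b}$, with $P'_{-a^*}=\left(P'_b\right)_{b\neq a^*}$. Intuitively, $P'_b$ is the preference that forces $b$'s realized history choices to the top, in that order, and ranks everything else exactly as $P_b$ does.

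For the first identity I would prove, by induction on the period $t$, that when every agent plays the straightforward strategy with respect to $\left(\gamma\oplus\left.P\right|_{O\backslash I},\,P'_{-a^*}\right)$ starting from $h^{A-\emptyset}$, the collective history generated coincides with $h^A$ for the first $k_i$ choices of each agent $i$, and thereafter coincides with the continuation of $h^A$ under $\sigma^P$. The engine of the induction is that the menu function $\mathbb{S}^{\varphi}$ is fully determined by the collective history: if the history reproduced up to period $t-1$ equals the truncation of $h^A$, then the period-$t$ menus equal those recorded in $h^A$, and the top element of each modified preference within a menu that already excludes that agent's earlier history choices is exactly her next recorded choice (the prefix $\gamma_i$ is ranked first, in order, and its already-chosen members are gone). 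Once the prefix $\gamma_i$ is exhausted, the modified preference agrees with $P_i$ on the remaining objects $O\backslash I_i$, which are precisely the objects still available to $i$ after $h^A$, so the modified straightforward play continues exactly as $\sigma^P$ does from $h^A$. Since $\mathbb{S}^{\varphi}$ sequentializes $\varphi$, the from-scratch straightforward outcome equals $\varphi$ applied to the modified profile, which yields the first identity.

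For the second identity I would run the game from $h^A$ under $\left(\sigma',\sigma^P_{-a^*}\right)$ and record the choices $\omega'_1,\omega'_2,\ldots$ that $\sigma'$ makes for $a^*$ from the menus $\Phi'_1,\Phi'_2,\ldots$ she faces; all lie in $O\backslash I$ and, by the nesting property $\Phi'_{j+1}\subseteq\Phi'_j\backslash\{\omega'_j\}$, are distinct. I would then take any $P^*$ with $\left.P^*\right|_{O\backslash I}=\left(\omega'_1,\omega'_2,\ldots\right)\oplus\lambda$ for an arbitrary ordering $\lambda$ of the remaining objects. A greedy argument shows that straightforward play with respect to $\left.P^*\right|_{O\backslash I}$ reproduces each $\omega'_j$ on the nested menus, since only $\omega'_1,\ldots,\omega'_{j-1}$ outrank $\omega'_j$ and none of them lies in $\Phi'_j$. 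Repeating the reproduction induction with the \emph{same} $\gamma$, $I$ and $P'_{-a^*}$ (so that $h^A$ is rebuilt exactly as before) and then letting $a^*$ follow $\gamma\oplus\left.P^*\right|_{O\backslash I}$, the generated play coincides with the play under $\left(\sigma',\sigma^P_{-a^*}\right)$, so $a^*$'s outcome equals $\varphi_{a^*}\left(\gamma\oplus\left.P^*\right|_{O\backslash I},\,P'_{-a^*}\right)$.

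The hard part will be the inductive reproduction argument: I must check that the history built under the modified straightforward profile matches $h^A$ period by period, including \emph{which} agents are active (receive non-empty menus) in each period, so that the deterministic menu function returns identical menus and the straightforward choices coincide with the recorded ones. Bookkeeping for agents whose histories have different lengths—some receiving empty menus in intermediate periods—requires care, but the nesting of menus and the deletion of previously chosen objects make the top-of-menu computation unambiguous at every step, so I expect the induction to go through cleanly.
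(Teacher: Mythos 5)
Your proof is correct and follows essentially the same route as the paper's: both rationalize the observed play by prepending the agent's recorded choices (in order) to a preference whose tail is the true (resp.\ deviating) continuation preference, verify that from-scratch straightforward play under these modified profiles reproduces $h^A$ and its continuation, and then invoke the fact that $\mathbb{S}^{\varphi}$ sequentializes $\varphi$. The only cosmetic difference is that the paper's $\gamma_a^h$ also interleaves the objects that drop out of the menus between successive choices, whereas your $\gamma$ lists only the chosen objects; both orderings rationalize the history (since menus are nested and exclude all prior choices), so either construction works.
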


\begin{proof}
Let $a$ be any agent, and take the element of $h^A$ containing her menus and choices:

\[h^A_a=\left( \left(\Omega_1,\omega_1\right), \left(\Omega_2,\omega_2\right), \ldots, \left(\Omega_{k},\omega_{k}\right) \right)\]

Consider the tuple $\gamma_a^h$ as follows, where the sets in double brackets are replaced by an arbitrary permutation of their elements:

\[
\gamma_a^h=\left(\omega_1\ ,\ \llbracket\Omega_1\backslash
(\{\omega_1\}\cup \Omega_2) \rrbracket\ ,\  \omega_2\ ,\ \llbracket\Omega_2\backslash(
\{\omega_2\}\cup \Omega_3 )\rrbracket\ , \cdots ,\ \omega_k
\right)
\]

Notice that if agent $a$ had preferences $P_a^h=\gamma_a^h\oplus \llbracket\Omega_k\backslash \{\omega_k\} \rrbracket$, she would have made these exact same choices when facing the menus that she faced in $h^A$. Consider now the continuation collective history $h^{A-P}$ that results from every agent following straightforward strategies with respect to $P$ after $h^A$. For any agent $a$, the element of $h^{A-P}$ containing her menus and choices has the following form:

\[h^{A-P}_a=h^A_a\oplus\left( \left(\Omega_{k+1},\omega_{k+1}\right), \left(\Omega_{k+2},\omega_{k+2}\right), \ldots, \left(\Omega_{\ell},\omega_{\ell}\right) \right)\]

Since we assume that $a$ followed the straightforward strategy, $\omega_{k+1}$ is the most-preferred element of $\Omega_{k+1}$ with respect to $P_a$, and so on. Therefore, if agent $a$ had preferences $P_a^h=\gamma_a^h\oplus \left.P_a\right|_{\Omega_{k}\backslash \{\omega_k\}}$, she would have the choices observed in $h^{A-P}_a$. Since $\mathbb{S}^{\varphi}$ sequentializes $\varphi$, the outcome of agent $a^*$ produced after the collective history $h^{A-P}$ is, therefore:

\begin{equation}
    \label{eq:lemma1ProofThm2-1}
    \left.\mathcal{O}_{a^*}\right|_{h^A}\left(\sigma_{a^*}^{P},\sigma_{-a^*}^{P}\right)=\varphi_{a^*}\left(\gamma_{a^*}^h\oplus \left.P_a\right|_{\Omega_{k}\backslash \{\omega_k\}}, P_{-a}^h\right)
\end{equation}

Suppose now that agent $a^*$ follows some arbitrary strategy $\sigma'$ while the others follows straightforward strategies with respect to $P$ after $h^A$. The resulting collective history is in general different from $h^{A-P}$, but for each agent $a$ it has the following format:

\[h^{A-P'}_a=h^A_a\oplus\left( \left(\Omega'_{k+1},\omega'_{k+1}\right), \left(\Omega'_{k+2},\omega'_{k+2}\right), \ldots, \left(\Omega'_{\ell'},\omega'_{\ell'}\right) \right)\]

For every agent except for $a^*$, the menus that they are presented after $h^A_a$---and also their choices as a result--- might be different from those in $h^{A-P'}_a$. But since by assumption they are all consistent with their preferences in $P$, it is also the case that if they had preferences $P_a^h=\gamma_a^h\oplus \left.P_a\right|_{\Omega_{k}\backslash \{\omega_k\}}$ they would have made the same choices. Agent $a^*$, on the other hand, followed different preferences, which resulted in a different $h^{A-P'}_{a^*}$, in which choices made after $h^A_a$ might not be the same as the ones she would make while following the straightforward strategy with respect to $P_{a^*}$. However, for the same reason highlighted above for $h^A_a$, we could construct a preference $P^*_a$ that rationalizes the choices in $h^{A-P'}_{a^*}$. The outcome of agent $a^*$ produced after the collective history $h^{A-P'}$ is, therefore:

\begin{equation}
    \label{eq:lemma1ProofThm2-2}
    \left.\mathcal{O}_{a^*}\right|_{h^A}\left(\sigma',\sigma_{-a^*}^{P}\right)=\varphi_{a^*}\left(\gamma_{a^*}^h\oplus \left.P^*_a\right|_{\Omega_{k}\backslash \{\omega_k\}}, P_{-a}^h\right)
\end{equation}

For $I=\{o\in\gamma_{a^*}^h\}$, equations \ref{eq:lemma1ProofThm2-1} and \ref{eq:lemma1ProofThm2-2} finish the proof.
\end{proof}

Remember that by construction $\mathbb{S}^{\varphi}$ sequentializes $\varphi$, and therefore when all agents use straightforward strategies with respect to their preferences $P$, $\varphi(P)$ is the outcome produced. Suppose now, for contradiction, that $\mathbb{S}^{\varphi}$ does not implement $\varphi$ in perfect ex-post equilibrium. Then, there is a preference profile $P^*\in\mathcal{P}$ for which a strategy profile in which all agents follow straightforward strategies is not a a perfect ex-post equilibrium. That is, there is a collective history $h^A\in H^A_{\mathbb{S}}$, an agent $a$ and a strategy $\sigma_a'$, which is not straightforward, for which $\left.\mathcal{O}_{a}\right|_{h^A}\left(\sigma_a',\sigma_{-a}^{P^*}\right)\ P_a^*\  \left.\mathcal{O}_{a^*}\right|_{h^A}\left(\sigma_a^{P^*},\sigma_{-a}^{P^*}\right)$, where $\sigma_{i}^{P^*}$ is the straightforward strategy for some agent $i$ with respect to $P^*$.

By Lemma \ref{lem:outcomeAfterDeviation}, when the preference profile is $P^*$ and all agents follow the straightforward strategies after $h^A$,  outcome for $a$ under the straightforward strategy and $\sigma_a'$ are, respectively:

\[o=\varphi_{a}\left(\gamma\oplus \left.P'_a\right|_{O\backslash I}, P^*_{-a}\right)\text{  and  }o'=\varphi_{a}\left(\gamma\oplus \left.P^*\right|_{O\backslash I}, P^*_{-a}\right)\]

for some $I\subseteq O$, a tuple $\gamma$ with the elements in $I$, and preference $P'_a$. Since $a$ obtains a better outcome, $o\ P_a^*\ o'$. But by strategy-proofness of $\varphi$ and corollary \ref{cor:monotoneSP}, $o'\ R_a\ o$. Contradiction. Therefore, straightforward strategies constitute an ex-post perfect equilibrium for the game induced by $\mathbb{S}^{\varphi}$. Therefore, if $\varphi$ satisfies monotonic discoverability and is strategy-proof, any pick-an-object mechanism that sequentializes $\varphi$ implements $\varphi$ in a perfect ex-post equilibrium in straightforward strategies.

Next, suppose that $\varphi$ satisfies monotonic discoverability but is not strategy-proof. Moreover, suppose for contradiction that there is a pick-an-object mechanism $\mathbb{S}^{\varphi}$ that implements $\varphi$ in perfect ex-post equilibrium. This implies that $\mathbb{S}^{\varphi}$ sequentializes $\varphi$ and that straightforward strategies constitute a perfect ex-post equilibrium. Since $\varphi$ is not strategy-proof, then there is preference profile $P^*$, an agent $a^*$ and a preference $P'_a\neq P^*_a$ such that $\varphi_a(P'_a,P^*_{-a}) P^*_a \varphi_a(P*_a,P^*_{-a})$. Since $\mathbb{S}^{\varphi}$ sequentializes $\varphi$, when players have the preference profile $P^*$ and follow straightforward strategies from $h^{A-\emptyset}$, agent $a$ obtains the outcome $\varphi_a(P*_a,P^*_{-a})$. If agent $a$ deviates at $h^{A-\emptyset}$ and follows a straightforward strategy that acts \textit{as if} she had the preference $P'_a$, her outcome would be $\varphi_a(P'_a,P^*_{-a})$, making that deviation profitable. This is a contradiction with straightforward strategies being a perfect ex-post equilibrium of the game induced by $\mathbb{S}^{\varphi}$.

Finally, if the rule is strategy-proof but does not satisfy monotonic discoverability, it cannot be implemented by a pick-an-object mechanism, since by Theorem \ref{thm:PickAnObjectIFFTerminalAv}, and the fact that strategy-proofness does not imply monotonic discoverability (see Example \ref{example:SPbutNotMonotDisc}) the rule cannot be sequentialized by a pick-an-object mechanism.

\subsection*{Proof of Propositions \ref{prop:PAOUniqueRM} and \ref{prop:DAUnique}}
\begin{proof}

The proof for both propositions rely on situations in which, starting from a period $t$, every agent who will make any choice at $t$ or later, except for $a^*\in A$, deems every remaining object unacceptable. For that, we will first show that if $\varphi$ is individually rational, resource monotonic and strategy-proof, $a^*$ will be matched to whatever object she chooses from her menu.


Let $\mathbb{S}$ be the canonical PAO mechanism that sequentializes $\varphi$---by monotonic discoverability and Theorem \ref{thm:PickAnObjectIFFTerminalAv} it exists. 

For the case where $\varphi$ is the Gale-Shapley DA, notice that the canonical PAO mechanism that sequentializes operates precisely as an instance of the Iterative Deferred Acceptance Mechanism introduced in \cite{Bo2016-id}, in which agents can choose only one option at a time: at $t=1$ each agent $a$ receives a menu with all the object-types (or colleges) in which she is deemed acceptable. For every $t>1$, every student who was rejected\footnote{That is, for whom the last object type she chose is no longer feasible.} in period $t-1$ is given a menu which includes $\emptyset$ and every object type that is still feasible for her. Notice that, for the Gale-Shapley DA rule, an object $o$ being feasible for an agent in period $t$ has two implications: (i) if she is the only agent choosing $o$ in period $t$, then she will be ``tentatively matched'' to $o$, instead of rejected, by the end of period $t$, and (ii) if no other agent chooses $o$ in any period after $t$, when the final allocation is produced she will be matched to $o$. In what follows, we will refer to these observations as \emph{Observation (*)}.

For the case considered in proposition \ref{prop:PAOUniqueRM}, the argument needs additional work. Let $h^{A}$ be a collective history of $\mathbb{S}$, where $\mathbb{S}(h^{A})=\left(\phi_a\right)_{a\in A}$ and at least one of these menus is non-empty.\footnote{That is, there is at least one agent with a non-empty menu to choose from after $h^{A}$.} Let $a^*\in A$ be an agent with one of these non-empty menus $\phi_{a^*}$. 


\begin{lem}\label{lem:OnlyOneChoosingGetsIt}
  Let $\varphi$ be individually rational, resource monotonic and strategy-proof. Let moreover $h^{A-o}$ be the continuation history of $h^{A}$ in which agent $a^*$ chooses object type $o$ from the menu $\phi_{a^*}$, and every other choice made in the same period or any following period, made by any agent, is $\emptyset$. Let $P^*$ be any preference profile consistent with $h^{A-o}$. Then, $\varphi_{a^*}(P^*)=o$.
\end{lem}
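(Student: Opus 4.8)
The plan is to establish the two bounds $o\,R_{a^*}\,\varphi_{a^*}(P^*)$ and $\varphi_{a^*}(P^*)\,R_{a^*}\,o$ (both read off $P^*_{a^*}$) and combine them. The upper bound is immediate from the structure of the canonical mechanism. Since $P^*$ is consistent with $h^{A-o}$, it is consistent with $h^{A}$, so $\varphi_{a^*}(P^*)\in\mu_{a^*}^{\varphi}(h^{A})=\phi_{a^*}$; and because $a^*$ picked $o$ from $\phi_{a^*}$, $o$ is the $P^*_{a^*}$-maximal element of $\phi_{a^*}$, whence $o\,R_{a^*}\,\varphi_{a^*}(P^*)$. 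The same reasoning shows that \emph{any} profile $Q$ consistent with $h^A$ with $Q_{a^*}=P^*_{a^*}$ satisfies $o\,R_{a^*}\,\varphi_{a^*}(Q)$, a fact I will reuse repeatedly.

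For the lower bound I would instead prove the stronger statement $o\in\mu_{a^*}^{\varphi}(h^{A-o})$, i.e.\ exhibit a profile $P'$ consistent with $h^{A-o}$ with $\varphi_{a^*}(P')=o$. I start from a witness $\hat P$ consistent with $h^{A}$ with $\varphi_{a^*}(\hat P)=o$, which exists because $o\in\phi_{a^*}=\mu_{a^*}^{\varphi}(h^A)$. First replace $\hat P_{a^*}$ by $P^*_{a^*}$: strategy-proofness gives $\varphi_{a^*}(P^*_{a^*},\hat P_{-a^*})\,R_{a^*}\,\varphi_{a^*}(\hat P)=o$, and the upper bound forces equality. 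Then, one agent at a time, I change every agent $b\neq a^*$ that chooses $\emptyset$ along $h^{A-o}$ to her ranking $P^*_b$, processing them \emph{in the temporal order in which those $\emptyset$-choices occur}. The key claim is that at each such step the modified agent is left unmatched: the current profile is consistent with the history $h'$ immediately preceding $b$'s $\emptyset$-choice (the already-modified agents and $a^*$ agree with their $h'$-records, while the not-yet-modified agents have made no continuation choice by $h'$), so $\varphi_{b}\in\mu_{b}^{\varphi}(h')$, which equals the menu $b$ faced there; since $b$ chose $\emptyset$ from it, every object of that menu is unacceptable under $P^*_b$, and individual rationality forces $\varphi_{b}=\emptyset$. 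Resource monotonicity then gives that $a^*$ is weakly better off, and the upper bound again pins her at $o$. Agents who never choose in the continuation keep their $\hat P$-ranking and are never touched. The terminal profile $P'$ is consistent with $h^{A-o}$ and has $\varphi_{a^*}(P')=o$, so $o\in\mu_{a^*}^{\varphi}(h^{A-o})$.

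To finish, I would use that feasible sets shrink along continuations: $o\in\mu_{a^*}^{\varphi}(h^{A-o})$ implies $o\in\mu_{a^*}^{\varphi}(h')$ for every $h'$ on the straightforward-$P^*$ path from $h^A$ to $h^{A-o}$. Hence the canonical mechanism never rejects $a^*$'s pick of $o$, her last choice remains $o$, and since $\mathbb{S}$ sequentializes $\varphi$ and $P^*$ is consistent with $h^{A-o}$, straightforward play under $P^*$ produces $\varphi(P^*)$; therefore $\varphi_{a^*}(P^*)=o$.

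The main obstacle is precisely the book-keeping in the lower-bound step, and it is where individual rationality, resource monotonicity and strategy-proofness each enter. The subtle case is an agent whose currently held object becomes infeasible only \emph{after} $a^*$ commits to $o$, so that she is prompted and picks $\emptyset$ only in a later period: for her the feasibility bound relative to $h^{A}$ is useless, and one must instead invoke the bound relative to the history just before her $\emptyset$-choice, which is why the resource-monotonicity chain has to be run in temporal order and why each intermediate profile must be checked to be consistent with that history. Keeping the $\phi_{a^*}$-top upper bound and the resource-monotonicity lower bound aligned at every step — so that the induction maintains $\varphi_{a^*}=o$ throughout — is the crux of the argument.
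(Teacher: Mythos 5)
Your proof is correct and rests on the same three pillars as the paper's own argument: a witness profile consistent with $h^{A}$ that already assigns $o$ to $a^*$, strategy-proofness to swap in $a^*$'s ranking while preserving $\varphi_{a^*}=o$, and individual rationality plus resource monotonicity to swap in the rankings of the agents who choose $\emptyset$, finishing with the observation that every profile consistent with the terminal history $h^{A-o}$ yields the same outcome. The execution differs in two respects. The paper modifies $a^*$'s ranking to a specific ``pushed-down'' preference (moving $o$ to just below her last $h^{A}$-choice) and then truncates \emph{all} other agents' preferences below their $h^{A}$-choices in a single step, invoking individual rationality and resource monotonicity on that one-shot change; you instead substitute the components of $P^*$ itself, one agent at a time, in the temporal order of the $\emptyset$-choices, re-deriving $\varphi_{a^*}=o$ after each substitution from the explicit upper bound $o\,R_{a^*}\,\varphi_{a^*}(Q)$ for $Q\in P(h^{A})$ with $Q_{a^*}=P^*_{a^*}$. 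Your temporal bookkeeping buys something real: it lets you apply resource monotonicity exactly as the paper defines it (the modified agent ends up unmatched under the \emph{new} profile), and it cleanly handles agents who are only prompted in periods after $a^*$ commits to $o$, for whom the feasibility bound relative to $h^{A}$ is unavailable and the bound relative to the history just before their own $\emptyset$-choice must be used instead --- a case the paper's one-shot truncation passes over more quickly. The only point worth adding a sentence for is that the lemma's history would have $a^*$ choosing $\emptyset$ if she were ever re-prompted after picking $o$, so you must rule that re-prompt out; this does follow from your own conclusion that $o\in\mu_{a^*}^{\varphi}(h')$ at every intermediate $h'$ (the canonical menu for $a^*$ is then empty), but it should be stated explicitly since the final step ``her last choice remains $o$'' depends on it.
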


\begin{proof}
Let $P^A$ be any preference profile consistent with $h^{A}$. Let $\mu^A$ be an allocation in which every agent is matched to her last choice in $h^{A}$. Since $\mathbb{S}$ sequentializes $\varphi$, and $o\in \phi_{a^*}$, there is a preference profile $P^{A-o}$ that is a continuation profile of $P^A$ at $\mu^A$ for which $\varphi_{a^*}(P^{A-o})=o$.\footnote{That is, since $o$ is in the menu given to $a^*$, there must be a preference profile consistent with $h^{A}$ such that $a^*$ is matched to $o$ under $\varphi$. Otherwise, $o$ would not be in the menu.}

Now, let $P^{1}$ be the same as $P^{A-o}$ except that $o$ is moved to come right after $\mu^A(a^*)$ in $P^{A-o}_{a^*}$---if that was not already the case in $P^{A-o}_{a^*}$. By strategy-proofness of $\varphi$, $\varphi_{a^*}(P^{1})=o$.\footnote{If that was not the case, it would imply that $a^*$ could obtain $o$ by reporting it as less desirable than it actually is, which would contradict strategy-proofness.}

Next, let $P^2$ be the same as $P^1$ except that for every $a\neq a^*$, every object $o\in O\backslash\{\emptyset\}$ such that $\mu^A(a) P^1_a o$, $\emptyset\ P^2_a\ o$.\footnote{That is, every object below $\mu^A(a)$ is unacceptable.} Notice that $P^2$ is consistent with $h^{A-o}$.

Individual rationality of $\varphi$ implies that under $\varphi(P^2)$ agents other than $a^*$ are matched to either their match under $\mu^A$ or to $\emptyset$. By resource monotonicity of $\varphi$, $a^*$ cannot obtain in $\varphi(P^2)$ an outcome that is worse than $o$ with respect to $P^1$. By definition of PAO mechanism, $a^*$ cannot be matched to an object chosen previously. Since $\mathbb{S}$ sequentializes $\varphi$, therefore, this must imply that $\varphi_{a^*}(P^2)=o$.

Finally, notice that by construction $\left|\varphi(h^{A-o})\right|=1$,\footnote{That is, $h^{A-o}$ is defined to include choices from menus until an outcome is produced.} and that any $P^*$ consistent with $h^{A-o}$ results in the same outcome. Therefore, $\varphi(P^*)_{a^*}=\varphi(P^2)_{a^*}=o$.
\end{proof}

Lemma \ref{lem:OnlyOneChoosingGetsIt} implies, therefore, that while using a PAO mechanism that sequentializes a rule that is strategy-proof and resource monotonic, if starting at some period only one agent chooses an object, then that agent will keep that object---as is also the case with the Gale-Shapley DA rule. 

By Theorem \ref{thm:PAOTruthfullIFSP+MD}, straightforward strategies constitute a perfect ex-post equilibrium in the PAO mechanism $\mathbb{S}$. We will now prove by backward induction on the number of steps involved in that mechanism that no other type-strategy profile also constitutes a perfect ex-post equilibrium. 

Let $T^{MAX}$ be the maximum number of steps that can result from agents interacting with the mechanism.\footnote{Notice that since the set of object types is finite and menus given to agents never include their last choices, there is such finite $T^{MAX}$.} Choices made by agents in period $T^{MAX}-1$, therefore, are final. By definition of PAO mechanisms, this implies that every agent who makes a choice in period $T^{MAX}-1$ will be assigned the object in the menu that they chose. Therefore, any type-strategy profile in which an agent does not simply picks her most preferred object-type from a menu in period $T^{MAX}-1$ with respect to her true preference cannot be part of a perfect ex-post equilibrium.

Next, suppose that this is period $t<T^{MAX}-1$ while running the PAO mechanism $\mathbb{S}$, and the set of agents $A^*\subseteq A$ receive non-empty menus $\left(\phi_a\right)_{a\in A^*}$. By backward induction---which we will denote by \emph{backward induction assumption}---every choice made in a perfect ex-post equilibrium by an agent in every period after $t$ follow straightforward strategies. 

Suppose for contradiction, that there is an agent $a^*\in A^*$ who, with a preference $P_{a^*}$, does not follow a straightforward strategy---choosing $o'$ from a menu $\phi_{a^*}$, while $o^*$ is the most-preferred object in $\phi_{a^*}$ with respect to $P_{a^*}$.\footnote{In principle, the deviation could involve other changes in agent $a^*$'s choices after period $t$. The argument we follow, however, involves a situation in which her choice in period $t$ is her last one.}

Let $h^{A-t}=\left\{h_a\right\}_{a\in A}$ be the collective history followed all the players up to period $t$. Construct now a preference profile $P^{t*}$, where, for each $a\in A$:

\[h_a=\left((\Omega_1,\omega_1),(\Omega_2,\omega_2),\ldots,(\Omega_k,\omega_k)\right)\]
implies
\[\omega_1\ P^{t*}_a\ \omega_2\ P^{t*}_a \cdots \ P^{t*}_a\ \omega_k\ P^{t*}_a\ \emptyset \text{ and for every } o\in O\backslash \{\omega_1,\ldots,\omega_k,\emptyset\}\text{, }\emptyset\ P^{t*}_a o\]

Notice that any such $P^{t*}$ is trivially consistent with $h^{A-t}$.\footnote{Notice that since we are considering perfect ex-post equilibrium, we should not assume that any agent, including $a^*$, followed straightforward strategies before $t$.} Moreover, by backward induction assumption, under $P^{t*}$, every agent who is given a menu in any period after $t$ will choose the option $\emptyset$---since agents follow straightforward strategies after $t$, and every item in any menu that they receive is deemed unacceptable with respect to their preferences in $P^{t*}$.

By assumption, agent $a^*$ will choose the object $o'$ from her menu. If $a^*$ is the only agent making a choice in period $t$, Lemma \ref{lem:OnlyOneChoosingGetsIt} or \emph{Observation (*)} imply that this is the last choice that $a^*$ will make, and that she will be matched to $o'$. But Lemma \ref{lem:OnlyOneChoosingGetsIt} or \emph{Observation (*)} also imply that whatever object she chooses, she will be matched to it, including $o^*$, which is strictly preferred to $o'$ and any other item in her menu. Therefore, a strategy in which $a^*$ chooses any object other than $o^*$ when she is the only player making a choice at period $t$ cannot be part of a perfect ex-post equilibrium when her preference is $P_{a^*}$.

It must be, therefore, that there are other agents making choices from menus in period $t$, that is, $|A^*|>1$. Take any agent $a^{\emptyset}\in A^*\backslash\{a^*\}$. Since her preferences in $P^{t*}$ imply that every object in the menu she receives is unacceptable, no strategy in which she is not left unmatched can be part of a perfect ex-post equilibrium for this preference.

We now separate the remaining arguments depending on the assumptions about $\varphi$.

\textbf{Case 1: $\varphi$ is the Gale-Shapley DA rule.}

If every agent $a^{\emptyset}\in A^*\backslash\{a^*\}$ chooses $\emptyset$, then by backward induction assumption and \emph{Observation (*)} , agent $a^*$ will be matched to $o'$, but would end up matched to $o^*$ if she chose it, which is a contradiction with this strategy being part of a perfect ex-post equilibrium when her preference is $P_{a^*}$. 

Next, if any agent $a^{\emptyset}\in A^*\backslash\{a^*\}$ chooses an object type and is not rejected, by backward induction assumption and \emph{Observation (*)}, she will by the end of the procedure be matched to an unacceptable outcome, while if she chose $\emptyset$ instead, she would be strictly better off. Therefore, no agent other than $a^*$ can choose an object type after $h^{A-t}$ and not have that choice rejected by the end of that period in a perfect ex-post equilibrium strategy profile. Notice, moreover, that given the backward induction assumption, if more than one agent chooses some object type in period $t$, at least one of them will end up matched to it by the end of the procedure, which again cannot be part of a perfect ex-post equilibrium.

Therefore, the only possibility that remains is one where every agent who chooses an object type other than $\emptyset$ in period $t$ also chooses $o'$, but all of them have their choice rejected by the end of that period, and therefore by backward induction assumption will end up unmatched. Notice, however, that $o'$ was feasible for $a^*$, she chose $o'$ in period $t$ and every other agent who chose $o'$ in the same period was rejected. This implies that (i) $a^*$ is tentatively matched to $o'$ by the end of period $t$, and by backward induction assumption will end up matched to it by the end of the procedure, and (ii) if she chose $o^*$, she would be the only agent doing so in period $t$, and by \emph{Observation (*)} would end up matched to $o^*$. Choosing any object other than $o^*$, therefore, cannot be part of a perfect ex-post equilibrium, contradicting the assumption that she chooses $o'$ under some equilibrium.

\textbf{Case 2: $\varphi$ is individually rational, resource monotonic, strategy-proof and satisfies monotonic discoverability.}


If $a^{\emptyset}$ and every player other than $a^*$ who makes a choice in period $t$ chooses $\emptyset$, Lemma \ref{lem:OnlyOneChoosingGetsIt} would yield a contradiction, since $a^*$ has a strategy that would result in a strictly better outcome---choosing $o^*$. Suppose, then, that there is a perfect ex-post equilibrium in which some set $A'\subset A^*\backslash\{a^*\}$ of these players do not choose $\emptyset$ in period $t$, but instead choose some other object types at $t$, but still end up left unmatched when the procedure ends.\footnote{This could happen if, for example, these agents are not matched to the object type they chose when $a^*$ chooses some object other than $\emptyset$.} The choices that result from these non-straightforward choices from the agents in $A'$, combined with the choices made by other players and the backward induction assumption, are consistent with a preference profile where the preferences of all agents in $A\backslash(A'\cup \{a^*\})$ are the same as in $P^{t*}$, and agents in $A'$ have preferences different than those in $P^{t*}$. Denote, for each agent $a\in A'$, these different preferences as $P'_a$.

Let $o$ be any object type in the menu faced by agent $a^*$ in period $t$. Let $P_{a^*}^{o}$ be a preference for $a^*$ that is exactly as $P_{a^*}^{t*}$ except that $o$ is the least-preferred acceptable object.\footnote{That is, it differs from $P_{a^*}^{t*}$ only in that the object $o$ is moved to the position right above $\emptyset$ in the ranking.} By Lemma \ref{lem:OnlyOneChoosingGetsIt}, the definition of $P^{t*}$, and the fact that the $\mathbb{S}$ sequentializes $\varphi$, $\varphi_{a^*}(P_{a^*}^{o},P_{-a^*}^{t*})=o$. Since $\varphi$ is individually rational, for each $a\in A'$, $\varphi_{a}(P_{a^*}^{o},P_{-a^*}^{t*})=\emptyset$.\footnote{Remember that under $P^{t*}$ all object types in the menus given to agents in $A'$ are unacceptable.} Next, let for an arbitrary $a'\in A'$,  $P_{-a^*}^1$ be the same as $P_{-a^*}^{t*}$ except that $P_{a'}^1=P'_{a'}$. Since $\varphi$ is resource monotonic and $a'$ is left unmatched under both $\varphi(P_{a^*}^{o},P_{-a^*}^{t*})$ and $\varphi(P_{a^*}^{o},P_{-a^*}^{1})$, $a^*$ cannot be worse off and $\varphi(P_{a^*}^{o},P_{-a^*}^{t*})_{a^*}=\varphi(P_{a^*}^{o},P_{-a^*}^{1})_{a^*}$. If we keep replacing the preferences of each agent $a\in A'$ with $P'_a$, resource monotonicity implies that the outcome obtained by $a^*$ does not change. Denote the preference profile that results from this entire process by $P^{*-o}$. We conclude, therefore, that $\varphi_{a^*}(P^{*-o})=o$.

In any perfect ex-post equilibrium, the strategies used by the players in $A^*\backslash\{a^*\}$ cannot result in them being matched to any unacceptable object. Therefore, no matter what strategy they follow from period $t$ on, we can construct a preference profile $P^{*-o}$ as above that is consistent with the resulting collective history.

Notice, now, that the profile $P^{*-o}$ is consistent with (i) an arbitrary set of agents $A'$ making choices at time $t$ which, despite not being straightforward, still results in the unique (conditional) acceptable outcome, which is remaining unmatched, (ii) the remaining players following straightforward strategies, and (iii) player $a^*$ choosing an arbitrary object $o$ in the menu she is given in period $t$. 

Since $\mathbb{S}$ sequentializes $\varphi$, we have therefore that in any perfect ex-post equilibrium $a^*$ is, therefore, matched to the object she chooses in the menu she is given in period $t$ under the preference profile $P^{t*}$. Choosing any object other than $o^*$ cannot be part of a perfect ex-post equilibrium, contradicting the assumption that she chooses $o'$ under some equilibrium.

\end{proof}

\begin{customthm}{\ref{prop:OSPImplementableViaPAO}}
Every non-bossy OSP implementable rule is pick-an-object implementable in weakly dominant strategies.
\end{customthm}

\begin{proof}
We prove the result by constructing a pick-an-object mechanism which serves as an interface between the agents and the millipede game of the OSP mechanism. Let $\varphi^{OSP}$ be an OSP-implementable rule. Then, by \cite{Pycia2019-vl}, there is a millipede game $\Gamma$ where, for every preference profile $P$, each agent following a greedy strategy with respect to their preference, results in the allocation $\varphi^{OSP}(P)$. In the game $\Gamma$, Nature moves once, at history $h^{\emptyset}$\footnote{If Nature does not move, then we can simply consider that Nature chooses a constant action at $h^{\emptyset}$.}, and all players have perfect information. For every non-terminal history $h$ (except for the one in which Nature moves,) there is an associated player $P(h)\in A$ and a menu $\phi(h)$.

By the definition of the millipede game in \cite{Pycia2019-vl}, the menu $\phi(h)$ might contain multiple items, each of them associated with an element of $O$, and at most one $pass$ option. In principle, there might be multiple items associates with the same $o\in O$, and the choice among these options be consequential to the final allocation for other players. The characterization of OSP mechanisms as milipede games associated with a greedy strategy allows for that selection to depend on the preferences of the player making that choice. For example, you could have an agent $a$ who receives a menu $\phi(h)$ containing, among other options, items $\omega$ and $\omega'$. Choosing either will result in $a$ being assigned object $o$, but the greedy strategy associated with this millipede game indicates that the agent chooses $\omega$ if her preference is $o P_a o' P_a o''$, but $\omega'$ if her preference is $o P_a o''P_a o'$. 

Non-bossiness, however, implies that even if $\Gamma$ contains multiple items for the same private allocation that are chosen, under the greedy strategy, when the agent has different preferences, these must result in the same allocation, fixed the other agents' greedy strategies. Therefore, every item associated with a private allocation is equivalent to each other. This allows us to assume, instead that items in the menu given to the players in a millipede game consist are all objects, with at most one pass option. That is, $\phi(h)\subseteq O\cup\{pass\}$.

With that, we can define the set of clinchable objects for agent $a$ at continuations of $h$ as follows:

\[C_a(h)=\bigcup_{h'\in P^{-1}(a):h\subseteq h'} \phi(h')\backslash \{pass\}\]

That is, $C_a(h)$ is the set of objects that are given in a menu to player $a$ at some continuation history from $h$. The menu function $\mathbb{S}$ for the pick-an-object mechanism that we are constructing is such that $\mathbb{S}\left(h^{A-\emptyset}\right)=\left(C_{a_1}\left(h^{\emptyset}\right),C_{a_2}\left(h^{\emptyset}\right),\ldots,C_{a_n}\left(h^{\emptyset}\right)\right)$. The value of $\mathbb{S}\left(h^{A}\right)$ for any collective history $h^{A}=\left(h_1,h_2,\ldots,h_n\right)$ is defined as follows:

\begin{itemize}
    \item Let $h_i^j$ be the $j$th item in $h_i$, and $h_i^j=\left(\Omega_j^i,\omega_j^i\right)$. That is, $\Omega_j^i$ is the $j$th menu faced by agent $a_i$, in which she chose the item $\omega_j^i\in \Omega_j^i$.
    \item Let $h=h^\emptyset$ and for all $i=1,\ldots,n$, let $\eta_i=1$. Follow the game tree in $\Gamma$ by making the agents play as follows:
    \begin{itemize}
        \item \textbf{Step 0}: If $h$ is a terminal node of $\Gamma$, then we can determine the value of $\mathbb{S}\left(h^{A}\right)$ to be $\left(\emptyset,\emptyset,\ldots,\emptyset\right)$.
        \item \textbf{Step 1}: Let $a_i=P(h)$. That is, $a_i$ is the active player at history $h$. If  $\omega_{\eta_i}^i\not\in C_{a_i}(h)$ and $\eta_i<|h_i|$, increase the value of $\eta_i$ by $1$. Otherwise,  if  $\omega_{\eta_i}^i\not\in C_{a_i}(h)$ and $\eta_i=|h_i|$ we can determine the value of $\mathbb{S}\left(h^{A}\right)$ to be $\left(\phi_1,\phi_2,\ldots,\phi_n\right)$, where $\phi_j=\emptyset$ if $\omega_{\eta_j}^j\in C_{a_j}(h)$, and $\phi_j=C_{a_j}(h)$ otherwise.
        \item \textbf{Step 2}: If $\omega_{\eta_i}^i\in \phi(h)$, follow the node that represents choosing $\omega_{\eta_i}^i$, and let $h$ be the history that follows that choice. Otherwise, choose the node $Pass$, and let $h$ be the history that follows that choice. In either case, go back to step 0.
    \end{itemize}
\end{itemize}

Notice that the procedure above will produce a list of menus for any collective history, even those which could never take place under a pick-an-object mechanism. To get our result, however, it suffices for us to show that collective histories that are generated by agents following straightforward strategies while interacting with the pick-an-object mechanism $\mathbb{S}$ described above are translated to greedy strategies from these same agents playing the game $\Gamma$.

Suppose then, for contradiction, that agents follow straightforward strategies while interacting with the pick-an-object mechanism $\mathbb{S}$ but that is not translated, in the description above, into these agents following greedy strategies. That must imply, therefore, that there is an agent $a^*\in A$ who in the procedure described above either (i) chooses $Pass$ at some history $h^*$, and the most-preferred element in $C_{a^*}\left(h^*\right)$ with respect to $P_{a^*}$ is in $\phi\left(h^*\right)$, or (ii) chooses an object $o\in \phi\left(h^*\right)$ while there is another object $o'\in C_{a^*}\left(h^*\right)$ for which $o' P_{a^*} o$.

Fist, consider case (i). Notice that, by the definition of the value of $\mathbb{S}\left(h^{A}\right)$ and by the description of Step 1, every time an agent receives a menu of objects, that menu contains all objects in the continuation outcomes of the history that is reached by following the collective history that precedes the offering of that menu. Moreover, Step 1 also implies that whenever a history $h$ in which the last choice made by $a^*$ from a menu is not in $C_{a^*}\left(h\right)$, the procedure does not follow the game $\Gamma$ after $h$. Therefore, at every history $h$, the last choice made by $a^*$ in her choice history is her most-preferred object in $C_{a^*}(h)$ with respect to $P_{a^*}$. Step 2 requires that $Pass$ would only be chosen if that last choice was \textbf{not} in $\phi\left(h^*\right)$, a contradiction.

Next, case (ii). In the previous paragraph we established that every time an agent receives a menu of objects, it contains all objects in the continuation outcomes. Since the agent chooses $o$ in the game $\Gamma$, Step 2 and the way in which the pick-an-object mechanism works requires that $o$ must have been the last choice made by $a^*$. Since both $o$ and $o'$ are in $C_{a^*}\left(h^*\right)$, they both must be in the menu from which $a^*$ chose $o$. But that means that $o' P_{a^*} o$ and $a^*$ chose $o$ from a menu containing $o'$, which is a contradiction with $a^*$ following the straightforward strategy.

Finally, note that any deviation from straightforward strategies in the pick-an-object mechanism $\mathbb{S}$ either leads to the same paths through $\Gamma$ as the greedy strategy or to different ones. Since the greedy strategy is obviously dominant in $\Gamma$, then it weakly dominates any other strategy, including any deviation induced by deviations from straightforward strategies in $\mathbb{S}$. Therefore, straightforward strategies are weakly dominant in the pick-an-object mechanism $\mathbb{S}$.\footnote{Notice, however, that straightforward strategies may not be obviously dominant in the game induced by the pick-an-object mechanism $\mathbb{S}$. One can easily see, for example, that in the pick-an-object implementation of serial dictatorship used in our experimental section, the worst outcome that could come from following a straightforward strategy is typically worse than the best outcome that can come from a deviation from it in the first step of the mechanism.}
\end{proof}

\section{Additional experimental results}

\subsubsection{Determinants of truthful behavior\label{sec:App_truthful}}

Here, we take a closer look at the determinants and the nature of deviations from the truthful strategies.

\noindent \textbf{Determinants of the truthful strategy:}
\begin{enumerate}
    \item Under the TTC rule with cyclic priorities, there is no significant correlation between the propensity to play the truthful strategy and how high the participant is in the priorities of objects in Direct, but there is significant correlation in PAO. 
    \item Under the TTC rule with acyclic priorities, there is a significant positive correlation between how high the participants are in the priorities of objects and the propensity to play the truthful strategy in all treatments. 
    \item Under the SD rule, the manipulations in Direct and PAO mechanism are significantly more likely to take place if participants had a lower score.
\end{enumerate}
\medskip{}

\noindent \textbf{Support:} 

\begin{table}[]
    \centering\smaller
    
\begin{tabular}{l*{6}{c}}
\hline\hline
                    &\multicolumn{1}{c}{(1)}&\multicolumn{1}{c}{(2)}&\multicolumn{1}{c}{(3)}&\multicolumn{1}{c}{(4)}&\multicolumn{1}{c}{(5)}&\multicolumn{1}{c}{(6)}\\
                    &\multicolumn{1}{c}{Direct TTC}&\multicolumn{1}{c}{PAO TTC}&\multicolumn{1}{c}{Direct TCC}&\multicolumn{1}{c}{PAO TTC}&\multicolumn{1}{c}{Direct }&\multicolumn{1}{c}{PAO}\\
                    &        cyclic         &        cyclic         &       acyclic         &        acyclic         &         SD         &         SD         \\
\hline
          &                     &                     &                     &                     &                     &                     \\
Rank at best obj.   &       -.015         &       -.049{***}&       -.088{***}&       -.117{***}&                     &                     \\
                    &      (.010)         &      (.008)         &      (.014)         &      (.015)         &                     &                     \\

Rank at 2nd-best obj.          &        .003         &        .026{***}&        .007         &        .011         &                     &                     \\
                    &      (.011)         &      (.008)         &      (.012)         &      (.007)         &                     &                     \\

Av. rank at rem. obj.&            .026         &        .039{**} &        .053{**} &        .053{***}&                     &                     \\
                    &      (.028)         &      (.015)         &      (.023)         &      (.016)         &                     &                     \\

Score in SD               &                         &                     &                     &                     &        .006{***}&        .004{***}\\
                    &                     &                     &                     &                     &      (.001)         &      (.001)         \\

\hline
Observations        &          385         &         623         &         371         &         532         &         462         &         231           \\
No. of clusters  &              12       &          12           &          12           &      12               & 12              &    12                 \\
log(likelihood)     &     -265.56         &     -413.00         &     -238.66         &     -308.94         &     -281.28         &     -155.20         \\

\hline\hline
\end{tabular}

\begin{tablenotes}
\item \begin{flushleft}
Marginal effects of Probit regressions of the truthful strategy. The sample excludes participants who either always played the truthful strategy or never played the truthful strategy  within a treatment. Rank at best obj.  is the rank of the participant in the priority order of her favorite objects (goes from 1 to 8). Rank at 2nd-best obj. is the rank of the participant in the priority order of her second favorite objects (goes from 1 to 8). Av. rank at rem. obj. is the average rank of the participant in priorities of all remaining objects. {*}{*}$p<0.05$, {*}{*}{*} $p<0.01$. Standard errors are clustered at the level of matching groups and are presented in parentheses. 
\end{flushleft}
\end{tablenotes}
   \caption{Marginal effects of the probit model of truthful strategies on the priority ranks in TTC and the score in SD for Direct and PAO treatments}
    \label{tab:truthpriority}
\end{table}

Table \ref{tab:truthpriority} shows the marginal effects of probit regression models for the dummy of playing the truthful strategy in Direct and PAO, depending on the priority of the participants. Each model of Table \ref{tab:truthpriority} restricts the sample to one of the environment-treatment combination. 

In all environments and treatments---except Direct TTC with cyclic priorities---the priorities of the objects can partially explain deviations from the truthful strategy. The higher the priority in the most-preferred object (i.e., the lower the rank), the more likely it is that the participant will play the truthful strategy. The rank in other objects has the opposite average effect: the lower the priority in other objects, the more likely it is that the participants will play the truthful strategy. This result suggests that subjects attempt to list the object where they have relatively high priority at the top of the rank-ordered list or escape ranking highly the objects where they have very low priority. Note that this is a typical deviation also in SD environments for both Direct and PAO treatments, where participants with a low score misrepresent their preferences more often than participants with a high score. Interestingly, the priorities of objects are significant in explaining deviations from truthful strategies in the Direct treatment only under TTC with acyclic priorities, and not under cyclic priorities.

\subsubsection{Alternative efficiency definitions\label{sec:App_eff}}

Given the fact that it might be the case that the equilibrium allocation has a lower sum of ranks than the allocation when a participant deviates from the truthful strategy, it is essential to look at different criteria. One measure of the success of the mechanism is whether the desired allocation is reached. This approach is used in \citet{Li2017-oa} to estimate the performance of direct versus OSP SD. We adopt this approach for all treatments. Note, however, that in  \cite{Li2017-oa}, the market consists of four participants, while in our case, it consists of eight participants. In the larger market, we can expect a lower rate of the equilibrium allocations reached, as it is enough for one participant in the group to deviate from truthful strategy in the consequential way to distort the whole allocation.

\begin{table}[h]
    \centering 
    \begin{tabular}{|c|c|c|c|c|c|c|}
\hline 
 & Direct & PAO & OSP & Direc=PAO & Direct=OSP & PAO=OSP\tabularnewline
 &  &  &  & p-value & p-value & p-value\tabularnewline
\hline 
TTC cyclic & 5.9\% & 5.9\% & n/a & 1.00 & n/a & n/a\tabularnewline
\hline 
TTC acyclic & 22.6\% & 26.2\% & 12.6\%& 0.72 & 0.05 & 0.01\tabularnewline
\hline 
SD & 38.1\% & 50\% & 83.\% & 0.16 & 0.00 & 0.00\tabularnewline
\hline 
\end{tabular}
\begin{tablenotes}
\footnotesize{}
\item \emph{Notes:} All the p-values are for the two-sided Fisher exact test for equality of proportions of the equilibrium strategy allocations by treatments. test is performed on the allocation-level data
\end{tablenotes}
    \caption{Proportions of equilibrium allocations reached.}
    \label{tab:binaryalloc}
\end{table}

Table \ref{tab:binaryalloc} presents the proportion of equilibrium allocations by treatments. Under TTC with cyclic priorities, only 5.9\% of allocations in Direct and PAO were equilibrium allocations. This low rate is not surprising, given the rate of the truthful strategies manipulations, and the fact that even one consequential manipulation distorts the allocations. Under TTC with acyclic priorities, we observe that the OSP treatment has the lowest rate of equilibrium allocations. Despite the higher rate of truthful strategies, as every deviation from the truthful strategy more likely to be consequential for the allocation, it leads to only 12.6\% of equilibrium allocations on average, which is significantly lower than in Direct and PAO treatments. Finally, under SD, OSP outperforms Direct and PAO mechanisms. Thus, we replicate the finding by \citet{Li2017-oa}.  

While the comparison of the rate of equilibrium allocations is useful for a smaller market, or under very low deviations from truthful strategies, we think the criterion is not very informative about the consequences of deviations in the case of a high rate of deviations from truthful strategies. Another approach would be to analyze the difference in the consequence of deviation from truthful strategies from an individual perspective by treatments. We define cost of deviation as the average difference between the payoffs of those who played truthfully and those who deviated from truthfulness, controlling for the role in the market. 

\noindent \textbf{Result 4 (Cost of deviation from truthful strategy):}
\begin{enumerate}
    \item Under the TTC rule with cyclic priorities the average cost of deviation from the truthful strategy under Direct is 3.97 euros, under PAO 3.61 euros, with no significant difference between treatments.

    \item Under the TTC rule with acyclic priorities the average cost of deviation from the truthful strategy in Direct is 2.43 euros, in PAO 1.98 euros, and in OSP 3.36 euros with no significant difference between treatments.
    \item Under the SD rule  the average cost of deviation from the truthful strategy in Direct is 2.53 euros, in PAO 1.55 euros, and in OSP 6.01 euros with all differences between treatments being statistically significant.

\end{enumerate}

\noindent \textbf{Support:} 

Table \ref{costofmisrep} presents the results of the OLS estimation of the effect of misreporting on the payoff of the subjects.

Each regression includes 56 dummies for each combination of ID and round, to account for the ``role-specific'' fixed effects, as the roles (a combination of preferences and priorities/scores) vary the prospects of earning high payoffs. Thus, the coefficient for the non-truthful dummy presents the average differences between subjects who play truthfully relative to subjects who play non-truthfully in the Direct mechanism, controlling for the role of the subjects. Under TTC with cyclic priorities (Model (1) of Table \ref{costofmisrep}), the deviation from the truthful strategy, on average, leads to a loss of 3.97 euros in Direct (note that the maximum payoff for the allocation is 22 euros), while the deviations are 38 cents less costly in PAO, though the difference is not significant. Under TTC with acyclic priorities (Model (2) of Table \ref{costofmisrep}), the deviation from the truthful strategy, on average, leads to a loss of 2.43 euros in Direct, while the deviations are 45 cents less costly in PAO, and 93 cents more costly in OSP, though the differences are not significant. Finally, under SD, (Model (3) of Table \ref{costofmisrep}), the deviation from the truthful strategy, on average, leads to a loss of 2.53 euros in Direct, while the deviations are 98 cents less costly in PAO, and 3.48 euros more costly in OSP, with all differences being significant. The difference can be explained by the fact that skipping in PAO is less consequential, due to intermediate updates on which objects are left for allocation, which is not the case in Direct. The highest cost of deviations in OSP can be explained by the fact that all deviations from the truthful strategy are payoff-relevant in SD, as it is a unique equilibrium strategy in OSP, unlike in the other treatments.

\begin{table}[]
\centering{\footnotesize{}{
}%
\begin{tabular}{l*{3}{c}}
\hline\hline
                    &\multicolumn{1}{c}{(1)}&\multicolumn{1}{c}{(2)}&\multicolumn{1}{c}{(3)}\\
                    &\multicolumn{1}{c}{Payoff}&\multicolumn{1}{c}{Payoff}&\multicolumn{1}{c}{Payoff}\\
                    &        TTC cyclic         &        TTC acyclic         &        SD         \\
\hline
Non-truthful strategy       &      -3.97{***}&      -2.43{***}&      -2.53{***}\\
                    &      (.29)         &      (.29)         &      (.24)         \\
Non-truthful in PAO   &        .38         &        .45         &        .98{**} \\
                    &      (.39)         &      (.38)         &      (.36)         \\
Non-truthful in OSP   &                &       -.93         &      -3.48{***}\\
                    &                 &      (.60)         &      (.76)         \\
Dummies for each participant ID in each round&      yes               &       yes              &      yes\\
           
\hline
Observations        &        1344         &        2800         &        2072         \\
No. of clusters  &        24             &          37           &            37         \\
R$^2$               &        .287         &        .370         &        .759         \\
log(likelihood)     &    -4063.06         &    -7796.13         &    -5136.51         \\
\hline\hline
\end{tabular}{\footnotesize{}} }{\footnotesize\par}

\begin{flushleft}
{\footnotesize{}{}Notes:{\footnotesize{} }{\small{}OLS
regression. {*} $p<0.10$, {*}{*} $p<0.05$, {*}{*}{*} $p<0.01$.
Standard errors are clustered at the level of matching groups and
are presented in parentheses. Non-truthful is a dummy for not playing
the truthful strategy. Non-truthful in PAO is the interaction
of the Non-truthful dummy and the dummy for PAO treatment. Non-truthful
in OSP is the interaction of the Non-truthful dummy and the dummy
for OSP treatment.}{\footnotesize{} } }{\footnotesize\par}\end{flushleft}
 \caption{\emph{OLS regression of payoff on the dummy for non-truthful strategy
\label{costofmisrep}}}
\end{table}

\end{document}